\newtheorem{thm}{Theorem}[section] 
\newtheorem{lemma}{Lemma}[section] 
\newtheorem{definition}{Definition}[section]
\newtheorem{assumption}{Assumption}[section]
\newcommand{\diag}{\mathrm{diag}}
\newcommand{\beq}{\begin{equation}}
\newcommand{\eeq}{\end{equation}}
\title{Semi-supervised Vertex Hunting, with Applications in  Network and Text Analysis}
\author{%
 Yicong Jiang\\ 
  Department of Statistics\\
  Harvard University\\
  \texttt{yicong\_jiang@g.harvard.edu} \\
   \And
Zheng Tracy Ke\\
   Department of Statistics\\
   Harvard University\\
   \texttt{zke@fas.harvard.edu}
}
\begin{document}

\maketitle

\begin{abstract}

Vertex hunting (VH) is the task of estimating a simplex from noisy data points and has many applications in areas such as network and text analysis. We introduce a new variant, semi-supervised vertex hunting (SSVH), in which partial information is available in the form of barycentric coordinates for some data points, known only up to an unknown transformation. To address this problem, we develop a method that leverages properties of orthogonal projection matrices, drawing on novel insights from linear algebra. We establish theoretical error bounds for our method and demonstrate that it achieves a faster convergence rate than existing unsupervised VH algorithms. Finally, we apply SSVH to two practical settings---semi-supervised network mixed membership estimation and semi-supervised topic modeling---resulting in efficient and scalable algorithms.
\end{abstract}

\section{Introduction} \label{sec:intro} 
Semi-supervised learning has been widely studied in the classification settings with discrete-valued labels. In contrast, continuous-valued labels (e.g., soft labels) also play crucial role in applications.
 One example is network mixed membership estimation \citep{airoldi2008mixed}. Suppose a network contains $K$ communities, and each node has a mixed membership over $K$ communities. 
The mixed membership vectors take arbitrary values in the probability simplex of $\mathbb{R}^K$. Another example is topic modeling~\citep{blei2003latent}. Suppose a text corpus contains $K$ topics. 
Each word has a $K$-dimensional topic loading vector representing its relevance to $K$ topics. These topic loading vectors can also be regarded as soft labels of the words. 

We are interested in semi-supervised learning for soft labels, where a small fraction of them are known. 
In the previous network estimation example, we may know the mixed membership vectors for some nodes (for example, this happens in dynamic networks with both long-existing and newly emerging nodes, where the mixed membership of a long-existing node can be inferred from historical data).  
In topic modeling, 
we may  
have knowledge of the topic relevance of certain words, such as anchor words \citep{arora2012learning} or seed words \citep{jagarlamudi2012incorporating} for some topics, which can be leveraged for model estimation. 

Vertex hunting (VH), also called linear unmixing or archetypical analysis, 
is an important tool used in the above problems for unsupervised learning. Fix $K\geq 2$ and a simplex ${\cal S}\subset\mathbb{R}^{d}$ that has $K$ vertices $v_1, v_2, \ldots, v_K$. 
Let $w_1,w_2,\ldots,w_n\in\mathbb{R}^K$ be weight vectors  (a weight vector is such that all entries are non-negative and sum to $1$). 
Suppose we observe $x_1, x_2, \ldots, x_n \in \mathbb{R}^d$ satisfying 
\vspace{-.5em}
\begin{equation} \label{model1} 
x_i = r_i  + \epsilon_i, \qquad \mbox{where}\quad r_i = \sum_{k = 1}^K w_i(k) v_k, \quad \mbox{and $\epsilon_i$'s are i.i.d. noise}.  
\vspace{-.3em}
\end{equation} 
Here, each $r_i$ is contained in the simplex ${\cal S}$, and $w_i$ is called the {\it barycentric coordinate} of $r_i$. 
The goal of VH is to estimate the vertices $v_1, \ldots, v_K$ from the noisy data cloud $\{x_i\}_{i=1}^n$.
In both unsupervised mixed membership estimation and topic modeling, the spectral-projected data exhibit such simplex structure, so that VH is frequently used as a plug-in step in parameter estimation \citep{agterberg2022estimating,arora2012learning,jin2023mixed,ke2024using}.

In this paper, we introduce the {\it semi-supervised vertex hunting (SSVH)} problem, as a new tool for semi-supervised learning in the above problems. Suppose for a subset  
  $S \subset \{1, 2, \ldots, n\}$, we observe $\pi_i\in\mathbb{R}^K$ for each $i\in S$, where $\pi_i$ is related to the barycentric coordinate $w_i$ as follows: 
  \vspace{-.1em}
\begin{equation}  \label{model2} 
w_i = (b\circ \pi_i)/\|b\circ \pi_i\|_1, \qquad \mbox{for an unknown positive vector $b\in\mathbb{R}^K$}. 
\end{equation} 
Here $\circ$ is the Hadamard (entrywise) product. 
In this expression, if we multiply $b$ by any positive scalar, the equality continues to hold. Therefore, we assume 
$\|b\|=1$ without loss of generality. 
SSVH aims to estimate $v_1,\ldots,v_K$ from $\{x_i\}_{i=1}^n$ and the additional information $\{\pi_i\}_{i\in S}$. 

Model~\eqref{model2} was discovered in the literature of unsupervised learning, such as \cite{jin2023mixed} for mixed membership estimation and \cite{ke2024using} for topic modeling. In these problems, the mean of data matrix admits a nonnegative factorization structure. Under such structures, for any low-dimensional linear projection of data (including the spectral projection), the projected points are contained in a simplicial cone subject to noise corruption. To enable downstream estimation procedures, we must first normalize this simplicial cone to a simplex (e.g., for spectral projections, the SCORE normalization \citep{jin2015fast} is a convenient choice), and Model~\eqref{model2} is a direct consequence of such normalizations \citep{ke2023special}. See Section~\ref{sec:Examples} for details, where we validate \eqref{model2} for the two applications of interest.  
It is worth noting that Model~\eqref{model2} was often hidden in the proofs of the previous works but not explicitly presented there, due to that an unsupervised VH algorithm does not need any knowledge of how the barycentric coordinate $w_i$ is related to the true $\pi_i$. In contrast, for semi-supervised VH, the connection between $\pi_i$ and $w_i$ directly affects how we design the algorithm, so we must present Model~\eqref{model2} explicitly here.

Many algorithms have been developed for unsupervised VH, such as minimum volume transformation (MVT) \citep{craig1994minimum}, N-FINDR \citep{winter1999n}, and successive projection (SP) \citep{araujo2001successive}. Based on MVT, \cite{huang2016anchor} developed a delicate anchor-free topic model estimation approach, which can also be applied to the VH problem.  
Recently, \cite{jin2024improved} provided a refinement of SP to strengthen its robustness against noise; \cite{zhang2020detecting} estimated the vertices in unsupervised overlapping community detection via K-median clustering under certain asymptotic regime; \cite{javadi2020nonnegative} adopted a regularized negative matrix factorization (NMF) for vertex hunting (archetypal analysis);
and \cite{rubin2022statistical} proposed a theoretical framework for interpretation and guidance on spectral methods for network membership estimation and algorithms for vertex hunting such as MVT.
However, it is unclear how to modify these methods to incorporate the information within $\{\pi_i\}_{i\in S}$.
The difficulty stems from that $b$ is unknown---hence the knowledge of $\pi_i$ does not directly imply the barycentric coordinate of $r_i$ inside the simplex. 
One may consider a joint-optimization approach, where we optimize over $b$ and $v_1,\ldots,v_K$ together using a loss function, but it is unclear how to design a loss function that both facilitates computation and comes with a theoretical guarantee.

We overcome the difficulty by proposing an optimization-free estimate of $b$:
For any vector $\alpha\in\mathbb{R}^{|S|}$ satisfying mild conditions, we construct a $K\times K$ matrix $\widehat{M}(\alpha)$ and let $\widehat{b}$ be the eigenvector of this matrix associated with its smallest eigenvalue. This estimator is easy to implement and enjoys nice theoretical properties. 
Our method is inspired by non-trivial insight in linear algebra: the construction of $\widehat{M}(\alpha)$ carefully utilizes properties of orthogonal projection matrices. 
%

Once $\widehat{b}$ is obtained, we can derive the barycentric coordinate $w_i$ for $i\in S$ at ease. It provides the locations of these $r_i$ inside the simplex, and we can utilize such information to enhance an existing unsupervised VH algorithm. In fact, given $\widehat{b}$, we can even use a simple regression to get $\hat{v}_1,\ldots,\hat{v}_K$. 
This gives our final SSVH algorithm. 

We show that SSVH has several benefits compared to unsupervised VH: First, unsupervised VH needs strong identification conditions. For instance, SP requires that at least one $r_i$ is placed at each vertex, and MVT requires that the minimum-volume simplex containing $r_1,\ldots,r_n$ is unique. 
When such conditions are violated, unsupervised VH may fail. 
Second, the error rate of unsupervised VH does not decay with $n$ \citep{gillis2013fast,jin2024improved}, 
so it is unable to take advantage of having more data points. 
Third, the signal-to-noise ratio of unsupervised VH depends on the $(K-1)$th singular value of the vertex matrix \citep{jin2024improved}. When $K$ is large, this singular value can be small, indicating that the simplex is `thin' in some direction and vulnerable to 
noise corruption; hence, unsupervised VH may have unsatisfactory performance for large $K$. 
SSVH can address these issues---as we will demonstrate, it requires weaker identification conditions, enjoys a fast-decaying error rate, and can handle large values of $K$. 

We apply SSVH to semi-supervised mixed-membership estimation and semi-supervised topic modeling and develop new methods for these two problems. For the first problem, despite of many methods for semi-supervised community detection \citep{leng2019semi, jiang2022semi,6985550,betzel2018non, ji2016semisupervised, liu2014semi, zhou2018selp}, they are hard to generalize to allow for mixed membership. 
For example, one strategy in such methods (e.g., see \cite{jiang2022semi}) is to group labeled nodes according to their true communities and compute the `similarity' between an unlabeled node and each group. When there is mixed membership, it is unclear how to define groups and compute the similarity metrics.  For the second problem, seeded topic modeling \citep{jagarlamudi2012incorporating} and keyword-assisted topic modeling \citep{eshima2024keyword} can be regarded as semi-supervised learning methods. However, they add prior information as Dirichlet priors, not permitting  specification of topic relevance for individual words. Additionally, these Bayesian approaches are computationally intensive for large corpora, 
while our SSVH-powered algorithms can run much faster in some settings.   

In summary, we introduce the SSVH problem and make the following contributions: 
\vspace{-.3em}
\begin{itemize} \itemsep 1pt
\item {\it Methodology}: We propose an SSVH algorithm, the core idea of which is an optimization-free approach to estimating $b$. Our method is inspired by delicate insight in linear algebra.  
\item {\it Theory}: We prove an explicit error bound for SSVH under sub-Gaussian noise. We show that the error bound decreases fast as the size of $S$ increases.
\item {\it Application}: We apply SSVH to network mixed membership estimation and topic modeling, obtaining new semi-supervised learning algorithms for these two problems. 
\end{itemize}

{\bf Notations}: Write $N=|S|$. Let $W_S, \Pi_S \in\mathbb{R}^{N\times K}$,  $R_S\in\mathbb{R}^{N\times d}$, and $X_S\in\mathbb{R}^{N\times d}$ be the matrices of stacking together the $w_i$, $\pi_i$,  $r_i$, and $x_i$ for $i\in S$, respectively. Let  $V\in\mathbb{R}^{K\times d}$ be the matrix whose $k$th row is equal to $v_k'$.  
With these notations, model \eqref{model1} can be re-written as $R_S = W_SV$. 
By elementary linear algebra, an eigenvector of a matrix is defined up to any scalar multiplication. Throughout this paper, when we compute the eigenvector of a matrix, the default scalar multiplication is  chosen such that the eigenvector has a unit $\ell^2$-norm and that the sum of its entries is positive. 

\vspace{-.3cm}

\section{Method for Semi-supervised Vertex Hunting} \label{sec:idea} 


\subsection{The oracle case} \label{subsec:oracle}
In the oracle case, we observe $r_i=\mathbb{E}[x_i]$ and aim to recover $V$ from $r_1,r_2,\ldots,r_n$ and $\pi_i$ for $i\in S$. The key is to find an approach for recovering $b$. Once $b$ is known, we can immediately use \eqref{model2} to recover the barycentric coordinate $w_i$ for $i\in S$, and the problem becomes relatively easy.


We tackle the estimation of $b$ by an interesting discovery in linear algebra. We introduce an $N \times N$ matrix, which is the projection matrix to the orthogonal complement of the column space of $\Pi_S$:  
\beq \label{def:H}
H  = H(S) = I_N -  \Pi_S  (\Pi_S' \Pi_S)^{-1} \Pi_S'. 
\eeq
For any vector $v$, let $\diag(v)$ denote the diagonal matrix whose diagonal entries are from $v$. 
Given any $\alpha\in\mathbb{R}^N$, we construct a $K \times K$ matrix:
\beq \label{def:M}
M(\alpha) = M(\alpha; S) =  \Pi_S' \diag(H   \alpha)  R_S R_S' \diag(H  \alpha) \Pi_S. 
\eeq
%
%
Our design of $M(\alpha)$ is based on a novel idea of leveraging properties of projection matrices. 
In the following theorem, we show that $M(\alpha)$ has a nice property:  

\begin{thm}[Main discovery] \label{thm:idea} 
For any $\alpha\in\mathbb{R}^n$, $M(\alpha) b = {\bf 0}_K$.  
Therefore, 
$b$ is an eigenvector of $M(\alpha)$ associated with the zero egienvalue. 
\end{thm}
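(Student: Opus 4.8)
The plan is to exploit the relationship \eqref{model2} between $\pi_i$ and $w_i$ at the matrix level and then to engineer an exact cancellation via the projection property of $H$. First, I would rewrite \eqref{model2} in matrix form. Setting $D = \diag(\Pi_S b)$, the diagonal matrix whose $i$th entry is $d_i := \pi_i' b = \|b\circ\pi_i\|_1$, the normalization in \eqref{model2} reads $W_S = D^{-1}\Pi_S\diag(b)$. Combining this with $R_S = W_S V$ gives $R_S = D^{-1}\Pi_S\diag(b)V$, and hence $R_S' = V'\diag(b)\Pi_S' D^{-1}$.

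Second, I would observe that it suffices to prove the single identity $R_S'\diag(H\alpha)\Pi_S b = \mathbf{0}$, because $M(\alpha)b = \Pi_S'\diag(H\alpha)R_S \cdot \big(R_S'\diag(H\alpha)\Pi_S b\big)$, so once the parenthesized factor vanishes the whole product is $\mathbf{0}_K$. The crux is then simplifying this inner factor. Writing $g = H\alpha$, I have $R_S'\diag(g)\Pi_S b = V'\diag(b)\,\Pi_S'\, D^{-1}\diag(g)\,\Pi_S b$. The key observation is that $\Pi_S b = D\mathbf{1}_N$, where $\mathbf{1}_N$ is the all-ones vector, so $D^{-1}\diag(g)\,\Pi_S b = D^{-1}\diag(g)D\mathbf{1}_N$. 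Since diagonal matrices commute, $D^{-1}\diag(g)D = \diag(g)$, and $\diag(g)\mathbf{1}_N = g = H\alpha$; thus the inner factor collapses to $V'\diag(b)\,\Pi_S' H\alpha$.

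Third, I would invoke the defining property of $H$. Because $H$ projects onto the orthogonal complement of the column space of $\Pi_S$, a one-line computation gives $\Pi_S' H = \Pi_S' - \Pi_S'\Pi_S(\Pi_S'\Pi_S)^{-1}\Pi_S' = 0$, so $\Pi_S' H\alpha = \mathbf{0}$ and the inner factor is $\mathbf{0}$. This yields $M(\alpha)b = \mathbf{0}_K$. Finally, since $M(\alpha) = AA'$ with $A = \Pi_S'\diag(H\alpha)R_S$ is symmetric positive semidefinite, its smallest eigenvalue is $0$ and $b$ is an associated eigenvector, matching the statement and justifying the eigenvector-for-smallest-eigenvalue estimator.

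The verification above is routine once the right algebraic setup is in place, so the real difficulty is conceptual rather than computational. The essential insight—presumably the ``novel idea'' the authors allude to—is to construct $M(\alpha)$ so that the factor $\diag(H\alpha)$ sits adjacent to $\Pi_S b = D\mathbf{1}_N$: the diagonal-commuting step then converts $\diag(H\alpha)$ acting on $\Pi_S b$ into the bare vector $H\alpha$, which is annihilated by $\Pi_S'$ through $\Pi_S' H = 0$, and this happens for \emph{every} $\alpha$. I expect the main obstacle to be recognizing in advance that $W_S = D^{-1}\Pi_S\diag(b)$ is the form that makes the diagonal factors line up, and that sandwiching $R_S R_S'$ between two copies of $\diag(H\alpha)$ produces a null vector independent of $\alpha$ rather than something that merely vanishes in expectation.
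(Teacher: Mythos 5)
Your proof is correct and follows essentially the same route as the paper's: you factor $M(\alpha)=J(\alpha)'J(\alpha)$ with $J(\alpha)=R_S'\diag(H\alpha)\Pi_S$, reduce to showing $J(\alpha)b=\mathbf{0}$, use the matrix form $W_S=\diag(\Pi_S b)^{-1}\Pi_S\diag(b)$ of \eqref{model2}, collapse the diagonal factors via $\Pi_S b=D\mathbf{1}_N$, and finish with $\Pi_S'H=0$. The only cosmetic difference is that the paper commutes the two diagonal matrices before applying $\diag(v)^{-1}v=\mathbf{1}$, whereas you absorb $D$ into $\Pi_S b$ first; these are the same computation.
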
 

{\it Proof of this theorem:} Let $J(\alpha)= R_S'\diag(H  \alpha) \Pi_S$. Then, $M(\alpha)=J(\alpha)'J(\alpha)$. It suffices to show $J(\alpha)b={\bf 0}_d$. First, model \eqref{model1} implies $R_S=W_SV$. It follows that
$J(\alpha)b = V'W_S'\cdot \diag(H\alpha)\Pi_S b$.  
Second, model \eqref{model2} implies $w_i= (\pi_i\circ b)/\|\pi_i\circ b\|_1$; in the matrix form, this can be expressed as 
$W_S=[\diag(\Pi_S b)]^{-1}\Pi_S \diag(b)$.
We plug $W_S$ into $J(\alpha)b$ to obtain:
\begin{align} \label{thm-proof}
J(\alpha)b &= V' \diag(b) \Pi_S'[\diag(\Pi_S b)]^{-1}\diag(H\alpha)\Pi_Sb\cr
&= V'\diag(b)\Pi_S'\diag(H\alpha)[\diag(\Pi_S b)]^{-1}\Pi_S b \quad \mbox{(switching diagonal matrices)}\cr
&= V'\diag(b)\Pi_S'\diag(H\alpha){\bf 1}_N \hspace{.7cm}\mbox{(because $\diag(v)^{-1}v={\bf 1}$ for a vector $v$)}\cr
&= V'\diag(b)\Pi_S' H\alpha. \hspace{2cm} \mbox{(because $\diag(v){\bf 1}=v$ for a vector $v$)}
\end{align}
We recall that $H$ is the projection matrix to the orthogonal complement of $\Pi_S$. Hence, $\Pi_S'H$ is a zero matrix. It follows that the right hand side of \eqref{thm-proof} is a zero vector. \qed

Theorem~\ref{thm:idea} states that $b$ is an eigenvector associated with the zero eigenvalue of $M(\alpha)$. However, it does not imply that $b$ is the unique eigenvector associated with the zero eigenvalue. 
The uniqueness holds only if the null space of $M(\alpha)$ is a one-dimensional subspace. 
The next theorem provides a sufficient condition for the uniqueness to hold: 

\begin{thm}[Uniqueness] \label{thm:idea2} 
Let $\bar{w}_* =  \frac{1}{N} \sum_{i \in S} w_i$ and define a $K \times K$ matrix by 
\vspace{-3pt}
\beq \label{def:Sigma}
\Sigma(\alpha) = \Sigma(\alpha, S)  = \frac{1}{N} \sum_{i  \in S} (H\alpha)_i (\pi_i'b) \cdot (w_i  - \bar{w}_*) (w_i - \bar{w}_*)'. 
\vspace{-6pt}
\eeq
When $\mathrm{rank}(\Sigma(\alpha)) = K-1$,  the null space of $M(\alpha)$ is a one-dimensional subspace. Consequently, the eigenvector associated with the zero eigenvalue of $M(\alpha)$ is unique and must be equal to $b$. 
\end{thm}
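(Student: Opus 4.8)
The plan is to exploit that $M(\alpha)=J(\alpha)'J(\alpha)$ is symmetric positive semidefinite, so that its null space coincides with $\ker J(\alpha)$ where $J(\alpha)=R_S'\diag(H\alpha)\Pi_S$. Since Theorem~\ref{thm:idea} already gives $b\in\ker M(\alpha)$, it suffices to show $\ker M(\alpha)$ is one-dimensional, i.e.\ equal to $\mathrm{span}(b)$. My strategy is to reduce $M(\alpha)$ to the matrix $\Sigma(\alpha)$ in the statement, up to an invertible diagonal rescaling and a harmless $VV'$ factor, and then read off the kernel from the rank hypothesis.

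First I would write $M(\alpha)$ in factored form. Using $R_S=W_SV$ gives $M(\alpha)=B'VV'B$ with $B:=W_S'\diag(H\alpha)\Pi_S=\sum_{i\in S}(H\alpha)_i\,w_i\pi_i'$. Next I would use the model-\eqref{model2} identity $\pi_i\circ b=(\pi_i'b)\,w_i$ (equivalently $W_S=[\diag(\Pi_Sb)]^{-1}\Pi_S\diag(b)$, as in the proof of Theorem~\ref{thm:idea}) to show $N\Sigma(\alpha)=B\,\diag(b)$, hence $B=N\,\Sigma(\alpha)\diag(1/b)$. The reason to pass through $\Sigma(\alpha)$ is that its cross terms collapse: expanding $(w_i-\bar{w}_*)(w_i-\bar{w}_*)'$ produces the weighted sums $\sum_i(H\alpha)_i(\pi_i'b)w_i=\diag(b)\,\Pi_S'H\alpha$ and $\sum_i(H\alpha)_i(\pi_i'b)=b'\Pi_S'H\alpha$, both of which vanish because $\Pi_S'H=0$ --- the very projection identity driving Theorem~\ref{thm:idea}. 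Only the quadratic term survives, giving $N\Sigma(\alpha)=\sum_i(H\alpha)_i(\pi_i'b)w_iw_i'=B\diag(b)$.

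With this in hand, invertibility of $\diag(1/b)$ yields $\ker M(\alpha)=\diag(b)\,\ker\!\big(V'\Sigma(\alpha)\big)$, so everything reduces to computing $\ker(V'\Sigma(\alpha))$. Here I would invoke two facts. Since every $w_i-\bar{w}_*$ is orthogonal to ${\bf 1}$, we always have $\Sigma(\alpha){\bf 1}={\bf 0}$; combined with the hypothesis $\mathrm{rank}(\Sigma(\alpha))=K-1$, this pins down $\ker\Sigma(\alpha)=\mathrm{span}({\bf 1})$ and $\mathrm{range}\,\Sigma(\alpha)={\bf 1}^{\perp}$. Second, because $v_1,\dots,v_K$ are vertices of a simplex they are affinely independent, which is exactly the statement $\ker(V')\cap{\bf 1}^{\perp}=\{{\bf 0}\}$. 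Thus if $V'\Sigma(\alpha)u={\bf 0}$, then $\Sigma(\alpha)u\in\ker(V')\cap\mathrm{range}\,\Sigma(\alpha)=\{{\bf 0}\}$, forcing $\Sigma(\alpha)u={\bf 0}$, i.e.\ $u\in\mathrm{span}({\bf 1})$. Undoing the scaling gives $\ker M(\alpha)=\diag(b)\,\mathrm{span}({\bf 1})=\mathrm{span}(b)$, which is the claim.

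The main obstacle I anticipate is the $V'$ factor in $\ker(V'\Sigma(\alpha))$: the rank condition controls $\Sigma(\alpha)$ but not $V'\Sigma(\alpha)$, and a priori $\Sigma(\alpha)u$ could be a nonzero element of $\ker(V')$. Resolving this is precisely where affine independence of the simplex vertices enters, cleanly phrased as $\ker(V')\cap{\bf 1}^{\perp}=\{{\bf 0}\}$ together with the identification $\mathrm{range}\,\Sigma(\alpha)={\bf 1}^{\perp}$. The only other step needing care is verifying the cross-term cancellation that makes $\Sigma(\alpha)$ and $B$ proportional, which is routine once the two weighted sums above are recognized as $\Pi_S'H\alpha$ annihilated by the projection.
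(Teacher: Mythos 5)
Your proof is correct, and its core reduction is the same as the paper's: both exploit $\Pi_S'H={\bf 0}$ to show that the centered matrix $\Sigma(\alpha)$ carries all of the rank information of $M(\alpha)$ (your identity $N\Sigma(\alpha)=B\,\diag(b)$ with $B=W_S'\diag(H\alpha)\Pi_S$ is exactly the cancellation the paper performs when it passes from $\sum_i c_i w_iw_i'$ to $\sum_i c_i(w_i-\bar w_*)(w_i-\bar w_*)'$ using $\sum_i c_i=0$ and $\sum_i c_iw_i={\bf 0}$). Where you genuinely diverge is in disposing of the $V$ factor. The paper replaces $V$ by $V+\frac{1}{\sqrt K}{\bf 1}_K\gamma'$ — legitimate because $\diag(H\alpha)W_S{\bf 1}_K$ is annihilated after left-multiplying by $W_S'\diag(\Pi_Sb)$ — and then invokes a separate technical lemma (Lemma~\ref{lemma:V:tran}) guaranteeing a $\gamma$ for which this shifted matrix is full rank, so that the rank chain closes. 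You instead argue at the level of kernels: $\ker M(\alpha)=\ker(V'B)$, $\Sigma(\alpha){\bf 1}={\bf 0}$ plus the rank hypothesis gives $\mathrm{range}\,\Sigma(\alpha)={\bf 1}^{\perp}$ by symmetry, and affine independence of the vertices (equivalently $\ker(V')\cap{\bf 1}^{\perp}=\{{\bf 0}\}$, which is what "simplex with $K$ vertices" means and is what Lemma~\ref{lemma:V:tran} encodes via $\lambda_{K-1}(V)>0$) forces $\Sigma(\alpha)u={\bf 0}$. This buys you two things: you bypass Lemma~\ref{lemma:V:tran} entirely, and you obtain $\ker M(\alpha)=\diag(b)\,\mathrm{span}({\bf 1})=\mathrm{span}(b)$ directly, rather than deducing membership of $b$ from Theorem~\ref{thm:idea} after establishing one-dimensionality. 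The only point to be careful about is that your range identity relies on $\Sigma(\alpha)$ being symmetric (it is, as a real linear combination of the symmetric matrices $(w_i-\bar w_*)(w_i-\bar w_*)'$), and that should be said explicitly.
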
 

To see when the condition  $\mathrm{rank}(\Sigma(\alpha)) = K-1$ holds, we consider a simple case where $b={\bf 1}_K$ and $H\alpha={\bf 1}_n$ and give two examples. In the first example,  
there are $K+1$ labeled points, with one point at each vertex and the last point located in the interior of the simplex (not on any vertex/edge/face). When the simplex is non-degenerate , $\Sigma(\alpha)$ has a rank $K-1$.  
In the second example,  the $\pi_i$'s of labeled nodes are i.i.d. sampled from a Dirichlet distribution, with the Dirichlet parameters being all positive constants. As $N\to\infty$, $\Sigma(\alpha)$ has a rank $K-1$
 with an overwhelming probability. 

Inspired by Theorems~\ref{thm:idea}-\ref{thm:idea2}, we obtain a method for recovering $b$ from the eigenvector of $M(\alpha)$. 
Once $b$ is known, by \eqref{model2}, $w_i$ is known for $i\in S$. We then have many options for recovering $V$. A simple method is the following regression approach. 
Recalling that $R_S = W_SV$, we recover $V$ from 
\beq \label{regression}
V = (W'_SW_S)^{-1}W_S'R_S. 
\eeq
%
Another option is a penalized optimization approach. Let $L(V; r_1,\ldots,r_n)$ be a loss function that quantifies how well the simplex spanned by $V$ fits the data points. 
Such loss functions exist in many unsupervised VH algorithms (e.g., \cite{craig1994minimum,javadi2020nonnegative}). 
We propose the following optimization:  
\vspace{-3pt}
\beq \label{optimization}
\min_V \;\; L(V; r_1,\ldots,r_n) + \lambda \sum_{i\in S}\|r_i - Vw_i\|^2. 
\vspace{-5pt}
\eeq
When $\lambda=\infty$, it reduces to the estimator in \eqref{regression}, which is the main version we will use. Meanwhile, one can always use a finite $\lambda$. Then,  \eqref{optimization} and our method of estimating $b$ together offer an  approach for extending any (optimization-based) unsupervised VH method to the semi-supervised setting.


{\bf Remark 1}: 
Unsupervised VH requires identification conditions to uniquely determine the simplex from $r_1,r_2,\ldots,r_n$. For example, SP \citep{araujo2001successive} requires that there is at least one $r_i$ locating at each vertex, and MVT \citep{craig1994minimum} and anchor-free approaches such as \cite{huang2016anchor} require that among all simplexes that contain $r_1,r_2,\ldots,r_n$, there is a unique one that minimizes the volume. 
When such conditions are violated, unsupervised VH may fail (see Figure~\ref{fig:identification} and more details in Appendix~\ref{supp:VH}). 
SSVH addresses this issue by leveraging the additional information in $\{\pi_i\}_{i\in S}$. 

\begin{figure}[h]
\centering
\includegraphics[height=.16\textwidth]{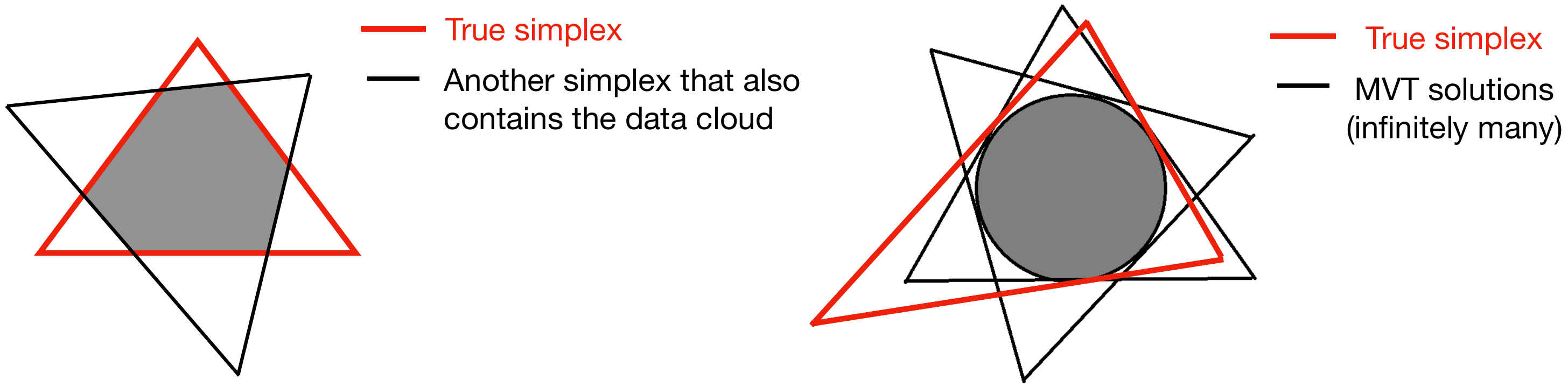}
\caption{\small The identification issue for SP (left) and MVT (right). The grey area is the area covered by $r_1,\ldots,r_n$ (the noiseless point cloud). Left: There exists no $r_i$ on the vertices of the true simplex; consequently, there are multiple simplexes containing the point cloud, and the SP solution is not necessarily the true simplex. Right: the point cloud is a ball, and the MVT solution (the minimum-volume simplex that contains this ball) is not unique and does not include the true simplex.} \label{fig:identification}
\end{figure}

\vspace{-.5cm}

\subsection{The real case} \label{subsec:real}
In the real case, we observe $x_i$ instead of $r_i$. The oracle method can be extended: First, we construct $\widehat{M}(\alpha)$ by replacing $R_S$ in \eqref{def:M} by $X_S$. Due to noise corruption, this matrix often doesn't have a zero eigenvalue; but we can estimate $b$ by the eigenvector associated with the smallest eigenvalue.   
Second, we plug $\hat{b}$ into \eqref{regression} and replace $R_S$ by $X_S$ there. This gives $\widehat{V}$.  See Algorithm~\ref{alg:SSVH}.


\begin{algorithm}[h]
	\caption{Semi-supervised Vertex Hunting (SSVH)}\label{alg:SSVH}
Input: $K$, $X_S$, and $\Pi_S$. 
	\begin{enumerate} \itemsep -2pt
	\item Compute $\alpha \in \mathbb{R}^N$ from $\Pi_S$ using 
	the {\it closed-form} solution of either \eqref{equ:opt:alpha:var} or  \eqref{equ:opt:alpha:proj}.
	\item Construct $\widehat{M}(\alpha) = \Pi'_S \diag(H   \alpha) X_S X'_S \diag(H  \alpha) \Pi_S$, where $H$ is as in \eqref{def:H}. Let $\hat{b}$ be the eigenvector of $
	\widehat{M}(\alpha)$ corresponding to the smallest eigenvalue.
	\item Obtain $\hat{w}_i=(\hat{b}\circ \pi_i)/\|\hat{b}\circ \pi_i\|_1$, and let $\widehat{W}_{S}$ be the matrix of stacking together the $\hat{w}_i$ for $i\in S$. Compute $\widehat{V} =  (\widehat{W}'_{S} \widehat{W}_{S})^{-1} \widehat{W}'_{S}  X_{S}$. 
	\end{enumerate}
	Output: $\widehat{V}$ (its rows are the estimated vertices). 
\end{algorithm}

{\bf Choice of $\alpha$}: In Algorithm~\ref{alg:SSVH}, the only remaining question is how to choose $\alpha$. In the oracle case, as long as the matrix $\Sigma(\alpha)$ defined in \eqref{def:Sigma} has a rank $K-1$, any choice of $\alpha$ yields the precise recovery of $V$. In the real case, we still have a wide range of choice of $\alpha$, as long as the the signal-to-noise ratio (SNR) in $\widehat{M}(\alpha)$ is properly large. We offer two recommended approaches, both being closed-form. 

Since re-scaling $\alpha$ doesn't alter $\hat{b}$, we set $\|\alpha\|=1$. Note that $b$ and $\hat{b}$ are the eigenvector of $M(\alpha)$ and $\widehat{M}(\alpha)$ associated with the smallest eigenvalue, respectively.  
By sin-theta theorem~\citep{davis1970rotation}, the eigen-gap, $\lambda_{K-1}(M(\alpha))$, plays a key role in determining the signal strength. With $\|\alpha\|=1$, we empirically observe that $\lambda_{K-1}(M(\alpha))$ often  increases with $\|M(\alpha)\|$. 
It inspires us to maximize $\|M(\alpha)\|$ subject to $\|\alpha\|=1$. 
This optimization still does not have an explicit solution. However, when the volume of the simplex is lower bounded (which means that the simplex is not `super thin' in any direction), it holds approximately that $\|M(\alpha)\|\geq \omega\cdot \|\Pi_S' \diag(H   \alpha) \Pi_S\|^2_F$, where $\omega$ is a quantity that depends on model parameters but not $\alpha$ (this derivation is technical thus contained in the supplement). Therefore, we solve the following optimization: 
\vspace{-3pt}
\beq \label{equ:opt:alpha:var}
\max \|\Pi_S' \diag(H   \alpha) \Pi_S\|^2_F, \qquad \text{s.t.} \|\alpha\| = 1
\eeq 
Without much effort, we can show that \eqref{equ:opt:alpha:var} has a {\bf close-form} solution: Let $\Gamma_S = \Pi_S \Pi'_S$, and let $f(\Gamma_S)$ be the matrix of applying a univariate function $f(c) = c^2$ on $\Gamma_S$ entry-wisely. The solution $\alpha^*$ is the eigenvector of $Hf(\Gamma_S)H$ corresponding to the largest eigenvalue.


Another option of choosing $\alpha$ uses dimension reduction. We divide $S$ into $K + 1$ non-overlapping clusters $\mathcal{C}_1, \ldots, \mathcal{C}_{K + 1}$, by running k-means clustering on $\{\pi_i\}_{i\in S}$.  
Let $\pi_i^{\text{net}}=e_k$  (the $k$th standard basis vector of $\mathbb{R}^{K+1}$)  for $i\in {\cal C}_k$.  Let $\Pi_S^{\text{net}}\in\mathbb{R}^{N\times (K+1)}$ be the matrix of stacking together $\pi_i^{\text{net}}$ for all $i\in S$. 
Define $U^{\text{net}}=\Pi_S^{\text{net}}[(\Pi_S^{\text{net}})'\Pi_S^{\text{net}}]^{-1}(\Pi_S^{\text{net}})'$, the projection matrix into the column space of $\Pi^{\text{net}}_S$ (which is a subspace with dimension $\leq K+1$). We solve the optimization:
\vspace{-2pt}
\beq \label{equ:opt:alpha:proj}
	\max \|M(U^{\text{net}}\alpha)\|, \qquad \text{s.t.} \|\alpha\| = 1. 
\eeq 
At first glance, this problem is not easier than maximizing $\|M(\alpha)\|$ directly. 
What makes a difference is that when $\Pi'_S\Pi_S^{\text{net}}$ has a full rank $K$, $HU^{\text{net}}$ projects all vectors into a 1-dimensional subspace. In this case, maximizing $\|M(U^{\text{net}}\alpha)\|$ is equivalent to maximizing $\|HU^{\text{net}}\alpha\|$, and the problem has a {\bf closed-form} solution: $\alpha^*$ is the right eigenvector of $HU^{\text{net}}$ associated with the largest eigenvalue of $HU^{\text{net}}$. 
We remark that even when the column space of $HU^{\text{net}}$ is not 1-dimensional, we can still compute this closed-form $\alpha^*$, though it is no longer the solution of \eqref{equ:opt:alpha:proj}. 

{\bf Remark 2}: It is important to note that, although these choices of $\alpha$ are motivated by optimizations involving $M(\alpha)$, their closed-form solutions are only functions of $\Pi_S$. In other words, these $\alpha$ neither use noisy data nor depend on unknown model parameters. This is why we can legitimately impose regularity conditions on $\alpha$ in the theoretical analysis to be presented in Section~\ref{sec:main}.


\vspace{-.4cm}

\section{Theoretical Results} \label{sec:main} 

\vspace{-.3cm}
In this section, we derive the error bound for our estimator in Algorithm~\ref{alg:SSVH}. As we have mentioned, the scaling of $b$ and $\alpha$ doesn't matter. We assume $\|b\|=1$ and $\|\alpha\|=1$ without loss of generality. 

\begin{assumption} \label{assum:bPiV}
	There exist constants $c_1, c_2, c_3 > 0$, such that (a) $\|V\| \ge c_1$ and $\lambda_{K - 1}(V) \ge c_1 \|V\|$, (b) $\lambda_{min}(\Pi_{S})\ge c_2 \cdot \lambda_{max}(\Pi_{S})$, and (c) $\min_k b_k \ge c_3 \cdot \max_{k}b_k$.
\end{assumption}

\begin{assumption}[Sub-Gaussian noise]\label{assum:sigma}
	There exists $\sigma_{n} > 0$ (which may depend on $n$) such that $\mathbb{E}[\exp(t\epsilon_{ij})] \le \exp(t^2 \sigma_{n}^2 / 2), i \in {S}, j \in [K] , t \in \mathbb{R}$.
\end{assumption}

\begin{assumption} \label{assum:PiS}
	There exists a constant $c_4>0$ such that $ \lambda_{K-1}(\Sigma(\alpha))\geq c_4/ (\sqrt{KN})$, 
	where $\Sigma(\alpha)$ is the matrix defined in Theorem~\ref{thm:idea2}. 
\end{assumption}

Assumption~\ref{assum:sigma} assumes that the noise within the data is sub-Gaussian, and the noise level $\sigma_n$ may grow as $n \to \infty$. 
Regarding Assumption~\ref{assum:bPiV}, 
since the volume of the simplex is related to $\lambda_{K-1}(V)$,  item~(a) imposes a lower bound on the volume after re-scaling, which prevents the simplex to be too {\it thin} in some direction. Item~(b) states that the labeled points are well spread-out in the simplex. For instance, they can't concentrate in a small region, in which case the conditioning number of $\Pi_S$ is large. Item~(c) 
implies that $\pi_i$ retains enough information of {\it each} coordinate of $w_i$ (e.g., when some $b_k$ is much smaller than the others, information of $w_i(k)$ is nearly lost in $\pi_i$). 
Assumption~\ref{assum:PiS} is about the matrix $\Sigma(\alpha)$. 
In Theorem~\ref{thm:idea2}, we assume that its $(K-1)$th singular value is nonzero; and here we put a slightly stronger condition. 

Our first result is a non-stochastic bound, which drops Assumption~\ref{assum:sigma} and treats $\epsilon_i$ as non-stochastic:

\begin{lemma}[Non-stochastic bound] \label{lemma:finite}
	Suppose Models~\eqref{model1}-\eqref{model2} and Assumptions~\ref{assum:bPiV}, \ref{assum:PiS} hold. Define
	\[
	\mathrm{err}_1 := \|\Pi'_S \diag(H\alpha)(X_S - R_S) \|_F, 
	\quad \mathrm{err}_2 := \|W'_S (X_S - R_S) \|_F, 
	\quad \mathrm{err}_3 := \|X_S - R_S\|_{\max}.
	\]
There exist constants $C_5, c_6 >0$ that only depend on $c_1, \dots c_4$, such that when $\mathrm{err}_1 < c_6\sqrt{N/K}$,
	\begin{equation}\label{equ:lemma:finite}
		\| \widehat{V} - V\|  \le \|V\|\cdot  C_5 K[\mathrm{err}_1(1 + \mathrm{err}_3) \sqrt{N} + \mathrm{err}_1^2 / N +  \mathrm{err}_2 / N ].
	\end{equation}
\end{lemma}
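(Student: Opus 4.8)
The plan is to split the total error $\|\widehat V - V\|$ into the error incurred by estimating $b$ (which perturbs $\widehat W_S$ through the normalization $\hat w_i = (\hat b\circ\pi_i)/\|\hat b\circ\pi_i\|_1$) and the error incurred by running the regression \eqref{regression} on the noisy $X_S$ rather than on $R_S$. Writing $E_S = X_S - R_S$ and $P = \Pi_S'\diag(H\alpha)$, the three functionals $\mathrm{err}_1 = \|P E_S\|_F$, $\mathrm{err}_2 = \|W_S'E_S\|_F$, and $\mathrm{err}_3 = \|E_S\|_{\max}$ are exactly the noise quantities that will appear, respectively, in the estimation of $b$, in the regression step, and as a lower-order multiplicative correction.

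\textbf{Step A (perturbation of $\hat b$).} Since $b$ is the bottom eigenvector of $M(\alpha)$ (eigenvalue $0$, by Theorem~\ref{thm:idea}) and $\hat b$ is the bottom eigenvector of $\widehat M(\alpha)$, I would apply the Davis--Kahan sin-theta theorem. Using $X_S X_S' - R_S R_S' = R_S E_S' + E_S R_S' + E_S E_S'$, the perturbation factors as $\widehat M(\alpha) - M(\alpha) = (PR_S)(PE_S)' + (PE_S)(PR_S)' + (PE_S)(PE_S)'$, so $\|\widehat M(\alpha) - M(\alpha)\| \lesssim \|PR_S\|\,\mathrm{err}_1 + \mathrm{err}_1^2$. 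The eigen-gap is governed by $\lambda_{K-1}(M(\alpha))$, and the key algebraic step is the identity $M(\alpha) = N^2\,\diag(1/b)\,\Sigma(\alpha)\,VV'\,\Sigma(\alpha)\,\diag(1/b)$, which follows from $r_i = V'w_i$, from $w_i(\pi_i'b) = \diag(b)\pi_i$, and from the two cancellations $\sum_{i\in S}(H\alpha)_i(\pi_i'b)w_i = \diag(b)\Pi_S'H\alpha = 0$ and $\sum_{i\in S}(H\alpha)_i(\pi_i'b) = b'\Pi_S'H\alpha = 0$ (both because $\Pi_S'H = 0$). Combining this identity with Assumption~\ref{assum:bPiV}(a),(c) (to bound $\sigma_{K-1}(V)$ and $\diag(1/b)$) and Assumption~\ref{assum:PiS} (to lower-bound $\lambda_{K-1}(\Sigma(\alpha))$) yields $\lambda_{K-1}(M(\alpha)) \gtrsim N\|V\|^2$. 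The hypothesis $\mathrm{err}_1 < c_6\sqrt{N/K}$ then forces $\|\widehat M(\alpha) - M(\alpha)\|$ to be a constant fraction of the gap, so sin-theta applies and, after fixing the sign by the paper's convention, $\|\hat b - b\| \lesssim (\sqrt K/(\sqrt N\|V\|))\,\mathrm{err}_1 + (\text{lower order})$.

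\textbf{Step B (propagation and regression).} The map $b\mapsto w_i$ is smooth away from $\pi_i'b = 0$, and its Lipschitz constant is controlled by $\min_i \pi_i'b$, which Assumption~\ref{assum:bPiV}(b),(c) keep bounded away from zero; hence $\|\widehat W_S - W_S\|_F \lesssim \sqrt N\,\|\hat b - b\|$. I would then expand $\widehat V - V = (\widehat W_S'\widehat W_S)^{-1}\widehat W_S'X_S - V$ by substituting $X_S = R_S + E_S$, $R_S = W_S V$, and $\widehat W_S = W_S + \delta W$. This produces (i) a term driven by $\delta W$ acting on the signal, whose leading piece is of order $\|V\|K\,\mathrm{err}_1\sqrt N$ with a second-order remainder of order $\|V\|K\,\mathrm{err}_1^2/N$ (the factor $1+\mathrm{err}_3$ enters when the signal magnitude $\|X_S\|$ is used in place of $\|R_S\|$); and (ii) the genuine regression-noise term $(\widehat W_S'\widehat W_S)^{-1}\widehat W_S'E_S$, which I replace by $(W_S'W_S)^{-1}W_S'E_S$ up to the already-controlled $\delta W$ correction, giving a contribution of order $K\,\mathrm{err}_2/N$. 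Bounding $\|(W_S'W_S)^{-1}\|$ and $\|(\widehat W_S'\widehat W_S)^{-1}\|$ requires a lower bound on $\lambda_{\min}(W_S'W_S)$, again obtained from Assumption~\ref{assum:bPiV}(b),(c) via the explicit relation $W_S = [\diag(\Pi_S b)]^{-1}\Pi_S\diag(b)$. Summing the three contributions gives \eqref{equ:lemma:finite}.

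\textbf{Main obstacle.} The hardest part is Step A: the sharp lower bound $\lambda_{K-1}(M(\alpha)) \gtrsim N\|V\|^2$ depends on the nontrivial identity relating $M(\alpha)$ to $\Sigma(\alpha)$ and on careful tracking of how the diagonal matrix $\diag(1/b)$ (whose entries scale like $\sqrt K$ under Assumption~\ref{assum:bPiV}(c)) combines with $\lambda_{K-1}(\Sigma(\alpha)) \gtrsim 1/\sqrt{KN}$ and $\sigma_{K-1}(V) \gtrsim \|V\|$ so that the powers of $K$ and $N$ come out correctly. The second delicate point is the bookkeeping in Step B: separating the genuinely first-order contributions (the $\mathrm{err}_1\sqrt N$ and $\mathrm{err}_2/N$ terms) from the second-order remainder ($\mathrm{err}_1^2/N$), and verifying that every use of $(\widehat W_S'\widehat W_S)^{-1}$ is legitimate under the stated smallness condition on $\mathrm{err}_1$.
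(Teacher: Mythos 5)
Your route is essentially the paper's: the same decomposition $\widehat V - V = (\widehat W_S'\widehat W_S)^{-1}\widehat W_S'(\widehat W_S-W_S)V + (\widehat W_S'\widehat W_S)^{-1}\widehat W_S'(X_S-R_S)$, the same chain $\mathrm{err}_1\to\|\hat b-b\|\to\|\widehat W_S-W_S\|\to\|\widehat V-V\|$ via a sin-theta argument, the same Lipschitz control of $b\mapsto W_S$ through $W_S=[\diag(\Pi_S b)]^{-1}\Pi_S\diag(b)$ and $\lambda_{\min}(W_S)\gtrsim\sqrt{N/K}$, and the same key identity $\Pi_S'\diag(H\alpha)W_S=N\,\diag(1/b)\,\Sigma(\alpha)$, which is exactly how the paper exposes the eigen-gap. (The paper applies Davis--Kahan to the rectangular factor $U_0=\Pi_S'\diag(H\alpha)R_S$ rather than to $M(\alpha)=U_0U_0'$; that difference is immaterial.)

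There is, however, one genuine gap, sitting precisely where you locate the main obstacle. From $M(\alpha)=N^2\diag(1/b)\,\Sigma(\alpha)VV'\Sigma(\alpha)\,\diag(1/b)$ you cannot get $\lambda_{K-1}(M(\alpha))\gtrsim N\|V\|^2$ by ``combining'' $\lambda_{K-1}(\Sigma(\alpha))\gtrsim 1/\sqrt{KN}$ with $\lambda_{K-1}(V)\gtrsim\|V\|$: the $(K-1)$-th singular value of a product does not factor through the $(K-1)$-th singular values of its terms. Assumption~\ref{assum:bPiV}(a) permits $\sigma_K(V)=0$, and if the corresponding left-null direction $u$ of $V$ (i.e., $u'V=0$) lies inside the row space of $\Sigma(\alpha)$ --- which is ${\bf 1}_K^{\perp}$, while $u$ need not be proportional to ${\bf 1}_K$ --- then $\Sigma(\alpha)V$ drops to rank $K-2$, $\sigma_{K-1}(\Sigma(\alpha)V)=0$, the eigen-gap collapses, and sin-theta yields nothing. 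The paper closes this hole with a rank-one correction: since $\Sigma(\alpha){\bf 1}_K=0$, one may replace $V$ by $V+\frac{1}{\sqrt K}{\bf 1}_K\gamma'$ without changing $\Sigma(\alpha)V$, and Lemma~\ref{lemma:V:tran} shows that some $\gamma$ makes the corrected matrix nonsingular with $\sigma_{\min}\ge\sqrt{0.5}\,\lambda_{K-1}(V)$; only then does $\sigma_{K-1}(\Sigma(\alpha)V)\ge\sigma_{K-1}(\Sigma(\alpha))\cdot\sigma_{\min}(V+\frac{1}{\sqrt K}{\bf 1}_K\gamma')$ go through. This lemma is a genuine piece of work (a small quadratic-form argument in its own right), not a routine consequence of the stated assumptions, so your Step~A as written does not yet establish the eigen-gap. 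A minor additional note: your leading term $\mathrm{err}_1(1+\mathrm{err}_3)\sqrt N$ tracks the lemma as displayed, but the paper's derivation actually produces $\mathrm{err}_1(1+\mathrm{err}_3)/\sqrt N$ (the displayed statement appears to carry a typo), and it is the $1/\sqrt N$ version that is needed downstream in Theorem~\ref{thm:main}.
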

We use matrix spectral norm as the loss metric. The row-wise norm $\|\hat{v}_k-v_k\|$ is more frequently used in the literature \citep{gillis2013fast, jin2024improved}. 
Our bound implies the row-wise bound because $\max_k \|\hat{v}_k-v_k\|\leq \|\widehat{V}-V\|$.  
 
The three error terms defined in Lemma~\ref{lemma:finite} are functions of the `noise' matrix $X_S-R_S$. Our next result provides high-probability bounds for them under sub-Gaussian noise. 


\begin{lemma}[Noise terms] \label{lemma:err}
	Suppose Models~\eqref{model1}-\eqref{model2} and Assumptions \ref{assum:bPiV}, \ref{assum:sigma} hold. Suppose $N  \ge K$. There exists a universal constant $C_{\mathrm{err}}$ (which does not depend on any other constant in the assumptions) such that  
	with probability at least $1 - 2/n = 1 - O(1 / n)$,
	\beq
		\mathrm{err}_1 \le  \sigma_n \sqrt{K + C_{\mathrm{err}}^2\sqrt{K} \log(n)}, \;\; \mathrm{err}_2 \le  C_{\mathrm{err}}\sigma_n \sqrt{N \sqrt{K} \log(n)}, \;\; \mathrm{err}_3 \le C_{\mathrm{err}} \sqrt{\log(n)}.
	\eeq
\end{lemma}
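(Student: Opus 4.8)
The three quantities are all functionals of the single noise matrix $E := X_S - R_S \in \mathbb{R}^{N\times d}$, whose rows $\epsilon_i'$ ($i\in S$) are i.i.d., mean-zero, and entrywise sub-Gaussian with proxy $\sigma_n$ (Assumption~\ref{assum:sigma}). My plan is to treat $\mathrm{err}_1$ and $\mathrm{err}_2$ as square roots of \emph{quadratic forms} in $E$ and control them with a Hanson--Wright concentration, while $\mathrm{err}_3$ is a one-line maximal inequality. The fact that the relevant deterministic matrices depend only on $\Pi_S$ (through $\alpha$, $H$, and $W_S$), not on the noise (Remark~2), is what lets me treat them as fixed.

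First I set up a common template. Writing a generic term as $\|A E\|_F$ with $A\in\mathbb{R}^{K\times N}$ deterministic, note $\|AE\|_F^2 = \mathrm{tr}(E' P E)=\sum_{i,i'\in S} P_{ii'}\langle \epsilon_i,\epsilon_{i'}\rangle$ where $P:=A'A\succeq 0$. By row independence the cross terms have zero mean, so $\mathbb{E}\|AE\|_F^2 = \sum_i P_{ii}\,\mathbb{E}\|\epsilon_i\|^2 \le \sigma_n^2\, d\,\mathrm{tr}(P)$. For the fluctuation I would invoke the Hanson--Wright inequality for a quadratic form in independent sub-Gaussian \emph{vectors} (equivalently the vectorized form $e'(P\otimes I_d)e$ with blockwise-independent $e$, so that $\|P\otimes I_d\|_F=\sqrt{d}\,\|P\|_F$ and $\|P\otimes I_d\|=\|P\|$), which gives, for any $t>0$, with probability at least $1-2e^{-t}$,
\[
\bigl|\,\|AE\|_F^2 - \mathbb{E}\|AE\|_F^2\,\bigr|\;\le\; C\,\sigma_n^2\bigl(\sqrt{d}\,\|P\|_F\sqrt{t}+\|P\|\,t\bigr).
\]
Choosing $t\asymp\log n$ produces the $1-O(1/n)$ probability. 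The whole problem then reduces to bounding the three scalars $\mathrm{tr}(P)$, $\|P\|_F$, $\|P\|$ in each case.

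For $\mathrm{err}_1$, $P_1 = \diag(H\alpha)\Pi_S\Pi_S'\diag(H\alpha)$. Since $H$ is an orthogonal projection, $\|H\alpha\|\le\|\alpha\|=1$, and with each $\|\pi_i\|\le1$ one gets $\mathrm{tr}(P_1)=\sum_i (H\alpha)_i^2\|\pi_i\|^2\le 1$; positive semidefiniteness then gives $\|P_1\|\le\|P_1\|_F\le\mathrm{tr}(P_1)\le1$. Plugging $d\asymp K$ and $\mathrm{tr}(P_1)\le1$ into the template yields mean $\le\sigma_n^2 K$ and fluctuation $\lesssim \sigma_n^2(\sqrt{K\log n}+\log n)\le \sigma_n^2\sqrt{K}\log n$, i.e. $\mathrm{err}_1\le \sigma_n\sqrt{K+C_{\mathrm{err}}^2\sqrt{K}\log n}$, exactly as claimed. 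For $\mathrm{err}_2$ I take $P_2=W_SW_S'$ and bound $\mathrm{tr}(P_2)=\|W_S\|_F^2$, $\|P_2\|=\|W_S\|^2$, and $\|P_2\|_F\le\|W_S\|\,\|W_S\|_F$, using the identity $W_S=[\diag(\Pi_S b)]^{-1}\Pi_S\diag(b)$ together with the conditioning of $\Pi_S$ (Assumption~\ref{assum:bPiV}(b)) and the balance of $b$ (Assumption~\ref{assum:bPiV}(c)) to reduce these norms to those of $\Pi_S$. Finally, $\mathrm{err}_3=\max_{i,j}|\epsilon_{ij}|$ follows from a union bound over the at most $Nd\le\mathrm{poly}(n)$ entries with the sub-Gaussian tail $\mathbb{P}(|\epsilon_{ij}|>s)\le 2e^{-s^2/(2\sigma_n^2)}$, giving $\mathrm{err}_3\lesssim\sigma_n\sqrt{\log n}$. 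A union bound over the three events closes the argument.

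The main obstacle is twofold. The first and more delicate part is the spectral bookkeeping for $\mathrm{err}_2$: since $\|W_S\|_F^2$ can range over a wide band in $N$ and $K$, reproducing exactly the stated $N\sqrt{K}\log n$ rate requires squeezing the trace, Frobenius, and operator norms of $P_2$ tightly out of Assumption~\ref{assum:bPiV}, and this accounting is where slack can creep in. The second is more structural: Assumption~\ref{assum:sigma} only posits entrywise (marginal) sub-Gaussianity, so the coordinatewise-independent Hanson--Wright inequality does not directly apply; I would instead justify the quadratic-form bound through the block/decoupling version for quadratic forms in independent sub-Gaussian random vectors, which is exactly what makes $\langle\epsilon_i,\epsilon_{i'}\rangle$ tractable despite possible within-row dependence.
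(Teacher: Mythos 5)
Your proposal is correct and follows essentially the same route as the paper: both reduce $\mathrm{err}_1$ and $\mathrm{err}_2$ to quadratic forms $\|B'Z_S\|_F^2$ in the noise with $B$ deterministic and depending only on $\Pi_S$, control them via the Hanson--Wright inequality (mean bounded through the trace, fluctuation through the Frobenius and operator norms, with $\|B\|_F^2\le 1$ for $\mathrm{err}_1$ and $\|B\|_F^2\le N$ for $\mathrm{err}_2$ because the relevant rows are probability vectors and $H$ is a projection), and handle $\mathrm{err}_3$ by a union bound over sub-Gaussian tails at level $\sqrt{\log n}$. The only divergence is minor: the paper vectorizes $Z_S$ and applies the coordinatewise Hanson--Wright inequality, implicitly treating all entries of each $\epsilon_i$ as independent, whereas you flag that subtlety and would route through the block/vector version of the inequality --- a legitimate refinement of the same argument rather than a genuinely different one.
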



Here the absolute constant $C_{\mathrm{err}}$ depends on the absolute constants of Hanson--Wright inequality \citep[Theorem 1.1]{2013arXiv1306.2872R}. \cite{moshksar2024absoluteconstanthansonwrightinequality} provides an evaluation of Hanson--Wright inequality's constants, based on which a rough estimate of $C_{\mathrm{err}}$ is $2.8042$. See Appendix~\ref{subsec:c_err} 
for more details. Combining Lemma~\ref{lemma:finite} and Lemma~\ref{lemma:err}, we have the main theorem: 
\begin{thm}\label{thm:main}
	Suppose Models~\eqref{model1} \eqref{model2} and Assumptions \ref{assum:bPiV}-\ref{assum:PiS} hold. Suppose $N >(1/c_6^2) \cdot \sigma_n^2( K^2 + C_{\mathrm{err}}^2K^{1.5}\log(n))$, where $c_6$ is as in Lemma~\ref{lemma:finite} and $C_{\mathrm{err}}$ is as in Lemma~\ref{lemma:err}. There exists a constant $\tilde{C}_5 >0$ only depending on $c_1, \dots c_4$, and $C_{\mathrm{err}}$, such that with probability at least $1 - O(1 / n)$, 
	\begin{equation}\label{equ:main}
		\| \widehat{V} - V\| \le \|V\|\cdot  \tilde{C}_5 N^{-1/2}K^{1.25} \sigma_n\sqrt{\log(n)}(\sqrt{K} + \log(n)  )^{0.5}. 
	\end{equation} 
\end{thm}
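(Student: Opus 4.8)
The plan is to prove Theorem~\ref{thm:main} by chaining the deterministic bound of Lemma~\ref{lemma:finite} with the stochastic control of the three noise quantities supplied by Lemma~\ref{lemma:err}. All of the argument lives on the high-probability event of Lemma~\ref{lemma:err}, which holds with probability at least $1-2/n = 1-O(1/n)$; since Lemma~\ref{lemma:finite} is purely deterministic, no further union bound is required and this event is precisely the event appearing in the conclusion.

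First I would verify that the precondition $\mathrm{err}_1 < c_6\sqrt{N/K}$ of Lemma~\ref{lemma:finite} is satisfied. On the good event, Lemma~\ref{lemma:err} gives $\mathrm{err}_1 \le \sigma_n\sqrt{K + C_{\mathrm{err}}^2\sqrt{K}\log(n)}$, so it suffices that $\sigma_n^2\bigl(K + C_{\mathrm{err}}^2\sqrt{K}\log(n)\bigr) < c_6^2 N/K$. Multiplying through by $K$, this is exactly the sample-size hypothesis $N > (1/c_6^2)\sigma_n^2\bigl(K^2 + C_{\mathrm{err}}^2 K^{1.5}\log(n)\bigr)$ imposed in the theorem. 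Hence Lemma~\ref{lemma:finite} is applicable and the deterministic bound \eqref{equ:lemma:finite} holds on this event.

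Second, I would substitute the three bounds of Lemma~\ref{lemma:err} into \eqref{equ:lemma:finite} and simplify each of the three summands. For the $\mathrm{err}_3$ factor I use $1+\mathrm{err}_3 \le 1 + C_{\mathrm{err}}\sqrt{\log(n)}$, which is of order $\sqrt{\log(n)}$ for large $n$; for $\mathrm{err}_1$ I use the factorization $K + C_{\mathrm{err}}^2\sqrt{K}\log(n) = \sqrt{K}\,(\sqrt{K}+C_{\mathrm{err}}^2\log(n))$, so that $\mathrm{err}_1 \le \sigma_n K^{1/4}(\sqrt{K}+C_{\mathrm{err}}^2\log(n))^{1/2}$. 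Carrying these through the $C_5 K$ prefactor and the $N$-normalization, the leading summand (the one involving $\mathrm{err}_1$ and $\mathrm{err}_3$) is of order $N^{-1/2}K^{1.25}\sigma_n\sqrt{\log(n)}(\sqrt{K}+\log(n))^{0.5}$, exactly the rate claimed in \eqref{equ:main}. The summand built from $\mathrm{err}_2$ yields the same rate but without the extra $(\sqrt{K}+\log(n))^{0.5}$ factor and is therefore dominated, while the $\mathrm{err}_1^2/N$ summand is of strictly smaller order under the sample-size lower bound and is absorbed. Setting $\tilde{C}_5$ to depend only on $c_1,\dots,c_4$ and $C_{\mathrm{err}}$ then gives \eqref{equ:main}.

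The argument is essentially bookkeeping, so the only genuine obstacle is the algebraic simplification in this last step: one must track carefully how the factor $K^{1/4}(\sqrt{K}+C_{\mathrm{err}}^2\log(n))^{1/2}$ coming from $\mathrm{err}_1$ combines with the multiplier $C_5 K$ and the $N$-normalization, and must confirm that the $\mathrm{err}_2$ and $\mathrm{err}_1^2/N$ contributions are indeed dominated under the stated lower bound on $N$, so that the three-term deterministic bound \eqref{equ:lemma:finite} collapses to the single clean rate in \eqref{equ:main}.
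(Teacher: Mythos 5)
Your proposal is correct and follows essentially the same route as the paper: verify the precondition $\mathrm{err}_1 < c_6\sqrt{N/K}$ of Lemma~\ref{lemma:finite} via the sample-size hypothesis together with the $\mathrm{err}_1$ bound of Lemma~\ref{lemma:err}, then substitute the three noise bounds into the deterministic inequality and absorb the dominated $\mathrm{err}_2/N$ and $\mathrm{err}_1^2/N$ terms. Your write-up is in fact slightly more careful than the paper's (which leaves the final algebra implicit and contains a typo writing $c_6 N/k$ for $c_6\sqrt{N/K}$), and your reading of the first summand as $\mathrm{err}_1(1+\mathrm{err}_3)/\sqrt{N}$ is the intended one.
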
 

In Theorem~\ref{thm:main}, we require the number of labeled nodes $N$ to be at least of order $\sigma_n^2(K^2+K^{1.5}\log(n))$. This condition is mild. 
First, when $K$ and $\sigma_n$ are constants, it only requires $O(\log(n))$ of labeled nodes, which is much smaller than $n$. 
Second, in many applications (see Section~\ref{sec:Examples}), $\sigma_n$ decreases fast with $n$. In such settings, even a finite number of labeled nodes are sufficient. 

{\bf Faster rate than unsupervised VH}: 
We focus on comparing with the results for successive projection (SP). \cite{gillis2013fast} derived a non-stochastic bound in terms of $[1/\lambda_K(V)]\max_{1\leq i\leq n}\|\epsilon_i\|$; and \cite{jin2024improved} derived an improved bound where $\lambda_K(V)$ is replaced by $\lambda_{K-1}(V)$. Their bounds are for the loss $\max_{k}\|\hat{v}_k-v_k\|$. 
Translating their results to our loss and using concentration inequalities for sub-Gaussian variables, we have (see \cite{jin2024improved} and \eqref{equ:appendix-SP} in Appendix~\ref{subsec:appendix-sp} for details):
\beq \label{bound-SP}
\| \widehat{V} - V\| \le \|V\|\cdot CK\sigma_n\sqrt{\log(n)}. 
\eeq 
Comparing \eqref{bound-SP} with \eqref{equ:main}, we observe that the error rate of SSVH has an additional factor of $N^{-1/2}$ (ignoring $K^{0.25}$ and logarithmic terms), which converges faster than the error of unsupervised VH. 

{\bf Tightness of the error bound}: 
We consider an ideal situation where $b$ is given. 
It follows by \eqref{model2} that $W_S$ is immediately known. Define a regression estimator (under Gaussian noise, this is the MLE): 
\vspace{-3pt}
\beq \label{hatV-ideal}
\widehat{V}^* = (W'_SW_S)^{-1}W_S'X_S. 
\eeq
If the error rate in \eqref{equ:main} matches with the rate of this estimator, then we believe that the rate is already sharp. The lemmas below shows that the error rate in \eqref{equ:main} matches the ideal estimator's error rate only up to a $(\sqrt{K} + \log(n))^{0.5}$ factor, suggesting the tightness of our rate. 
 
\begin{lemma}[Ideal estimator]\label{lemma:ideal}
	Under the assumptions of Theorem~\ref{thm:main}, the ideal estimator satisfies that with probability at least $1 - O(1 / n)$, $\| \widehat{V}^* - V\| \le \|V\|\cdot  \tilde{C}_5 N^{-1/2} K^{1.25} \sigma_n\sqrt{\log(n)}$.
\end{lemma}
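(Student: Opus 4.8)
The plan is to exploit the exact closed form of the ideal estimator. Since $W_S$ is known once $b$ is given, I substitute $X_S = R_S + E_S$ with $E_S := X_S - R_S$ and $R_S = W_S V$ (Model~\eqref{model1}) into \eqref{hatV-ideal}, obtaining $\widehat{V}^* = V + (W_S'W_S)^{-1}W_S'E_S$, hence $\widehat{V}^* - V = (W_S'W_S)^{-1}W_S'E_S$. Submultiplicativity then gives $\|\widehat{V}^* - V\| \le \|(W_S'W_S)^{-1}\|\cdot\|W_S'E_S\| = \lambda_{\min}(W_S)^{-2}\,\|W_S'E_S\|$, where $\lambda_{\min}(W_S)$ denotes the smallest singular value of $W_S$. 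The lemma thus reduces to (i) a lower bound on $\lambda_{\min}(W_S)$ and (ii) an upper bound on $\|W_S'E_S\|$. For (ii) I would simply note $\|W_S'E_S\|\le\|W_S'E_S\|_F = \mathrm{err}_2$ and invoke Lemma~\ref{lemma:err}, which under Assumptions~\ref{assum:bPiV} and~\ref{assum:sigma} gives $\mathrm{err}_2 \le C_{\mathrm{err}}\sigma_n\sqrt{N\sqrt{K}\log(n)}$ with probability $1-O(1/n)$.

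The crux is (i), establishing $\lambda_{\min}(W_S)\gtrsim\sqrt{N/K}$. The key observation is that since each $w_i$ is a weight vector, $W_S\mathbf{1}_K = \mathbf{1}_N$; hence $\lambda_{\max}(W_S)\ge\|W_S\mathbf{1}_K\|/\|\mathbf{1}_K\| = \sqrt{N}/\sqrt{K}$, which lower-bounds the largest singular value for free. I then control the condition number $\kappa(W_S)$ (ratio of largest to smallest singular value). Writing $W_S = D^{-1}\Pi_S\diag(b)$ with $D = \diag(\Pi_S b)$ (the identity already used in the proof of Theorem~\ref{thm:idea}), submultiplicativity of condition numbers yields $\kappa(W_S)\le\kappa(D)\,\kappa(\Pi_S)\,\kappa(\diag(b))$. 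Assumption~\ref{assum:bPiV}(b) bounds $\kappa(\Pi_S)\le 1/c_2$, Assumption~\ref{assum:bPiV}(c) bounds $\kappa(\diag(b)) = \max_k b_k/\min_k b_k\le 1/c_3$, and since each $\pi_i'b$ is a convex combination of the $b_k$ it lies in $[\min_k b_k,\max_k b_k]$, so $\kappa(D)\le 1/c_3$ as well. Combining, $\kappa(W_S)\le 1/(c_2 c_3^2)$, and therefore $\lambda_{\min}(W_S)=\lambda_{\max}(W_S)/\kappa(W_S)\ge c_2 c_3^2\sqrt{N/K}$, i.e.\ $\lambda_{\min}(W_S)^2\ge c_2^2 c_3^4\, N/K$.

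Putting the pieces together, $\|\widehat{V}^* - V\|\le \mathrm{err}_2/\lambda_{\min}(W_S)^2 \le (C_{\mathrm{err}}/(c_2^2 c_3^4))\,\sigma_n N^{-1/2}K^{5/4}\sqrt{\log(n)}$, which is precisely the claimed rate; the factor $\|V\|$ in the statement is then supplied harmlessly by absorbing it through $\|V\|\ge c_1$ (Assumption~\ref{assum:bPiV}(a)), i.e.\ taking $\tilde{C}_5 \ge C_{\mathrm{err}}/(c_1 c_2^2 c_3^4)$. I expect the main obstacle to be step (i): the naive route $\lambda_{\min}(W_S)^2\gtrsim\|W_S\|_F^2/K\gtrsim N/K^2$ (using only $\|w_i\|^2\ge 1/K$ together with the bounded condition number) is off by a factor of $K$ and would yield the inferior exponent $K^{9/4}$. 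The essential trick that recovers the sharp $K^{5/4}$ exponent is to lower-bound the \emph{top} singular value directly via $W_S\mathbf{1}_K = \mathbf{1}_N$ and then transfer it to $\lambda_{\min}$ through the bounded condition number, rather than passing through the Frobenius norm.
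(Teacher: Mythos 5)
Your proposal is correct and follows essentially the same route as the paper: the same decomposition $\widehat{V}^*-V=(W_S'W_S)^{-1}W_S'(X_S-R_S)$, the same reduction to $\mathrm{err}_2$ via Lemma~\ref{lemma:err}, and the same key bound $\lambda_{\min}(W_S)\gtrsim\sqrt{N/K}$ (the paper obtains it by factoring $W_S=\diag(\Pi_S b)^{-1}\Pi_S\diag(b)$ and multiplying smallest singular values, using $\Pi_S\bm{1}_K=\bm{1}_N$ on the middle factor, whereas you apply the row-sum identity to $W_S$ itself and transfer through the condition number --- a cosmetic difference). Both arguments yield the claimed $K^{1.25}N^{-1/2}$ rate, and your final constant bookkeeping, including absorbing $\|V\|\ge c_1$, is sound.
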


{\bf Remark 3} {\it (Noisy labels)}: Our theory can be extended to the case with incorrect or noisy labels. 
Let $\Pi_S$ and $\widehat{\Pi}_S$ be the true and noisy label matrices, respectively. In this case, there will be an extra term in the error bound, $N^{-1/2}\|\widehat{\Pi}_S-\Pi_S\|$. Notably, this term will not explode as $N$ increases. 

\section{Applications to Network Analysis and Text Analysis} \label{sec:Examples}



\subsection{Semi-supervised mixed membership estimation} \label{subsec:MME}
Mixed membership estimation (MME) \citep{airoldi2008mixed} is a problem of interest in network analysis. Let $A\in \mathbb{R}^{n\times n}$ be the adjacency matrix of an undirected network with $n$ nodes. The network has $K$ communities. Each node $i$ has a mixed membership vector $\pi_i\in\mathbb{R}^K$, where $\pi_i(k)$ is this node's fractional weight on community $k$. The degree-corrected mixed membership (DCMM) model \citep{jin2023mixed} has been introduced:
\begin{definition}[DCMM model] \label{def:DCMM}
The upper triangle of $A$ contains independent Bernoulli variables, with $\mathbb{P}(A_{ij}=1)=\theta_i\theta_j\cdot \pi_i'P\pi_j$, where $\pi_i\in\mathbb{R}^K$  is the mixed membership vector, $\theta_i>0$ is the degree parameter, and $P\in\mathbb{R}^{K\times K}$ is a symmetric nonnegative matrix that has unit diagonal entries. 
\end{definition}
\begin{definition}[Semi-supervised MME]
Suppose the network follows a DCMM model. Let $S$ be a subset of $\{1,2,\ldots,n\}$. Given $A$ and $\pi_i$ for $i\in S$, we aim to estimate all $\pi_i$ for $i\notin S$.  
\end{definition}


We propose the following algorithm. It first uses a full-rank matrix $U$ to project each column of $A$ to a $K$-dimensional vector $\tilde{x}_i=U'Ae_i$, and then uses a vector $\eta$ to normalize it: $x_i=\tilde{x}_i/(\eta'\tilde{x}_i)$.
When $U$ consists of the first $K$ eigenvectors of $A$ and when $\eta=e_1$, the vectors $x_i$ coincide with the spectral projections in \cite{jin2023mixed}.
However, our approach permits a general choice of $U$ and $\eta$. 
For example, we may apply a community detection algorithm on $A$ to get an estimated community membership matrix $\widehat{\Pi}^{\text{cd}}\in\mathbb{R}^{n\times K}$, where each row takes values in $e_1,\ldots,e_K$. 
Then, we can take $U=\widehat{\Pi}^{\text{cd}}$ and $\eta={\bf 1}_K$. 
Under mild regularity conditions on $(U,\eta)$, we can show that up to noise corruption, $x_1,\ldots,x_n$ are contained in a simplex with $K$ vertices (see Theorem~\ref{thm:MME} below). Therefore, we apply SSVH to $x_i$'s. 

\vspace{-.2cm}
\begin{algorithm}[h]
	\caption{Semi-supervised Mixed Membership Estimation}\label{alg:MME}
Input: $K$, $A$, and $\Pi_S$. Algorithm parameters: a matrix $U \in \mathbb{R}^{n \times K}$ and a vector $\eta\in\mathbb{R}^K$. 
	\begin{enumerate} \itemsep -2pt
	\item Compute $x_i = U'Ae_i/(\eta'U'Ae_i)$ for $1\leq i\leq n$. Let $X=[x_1,\ldots,x_n]'\in\mathbb{R}^{n\times K}$ and let $X_S\in\mathbb{R}^{N\times K}$ be the matrix of stacking the $x_i$ for $i\in S$. 
	\item (SSVH). Apply Algorithm~\ref{alg:SSVH} to $(K, X_S, \Pi_S)$ to obtain $\widehat{V}$ and the intermediate quantity $\hat{b}$. 
	\item Let $\widehat{B}=\diag(\hat{b})\widehat{V}$. For each $i\notin S$, compute $\widetilde{\pi}_i= e'_i A U \widehat{B}' (\widehat{B}\widehat{B}') ^{-1}$. Let $\hat{\pi}_i$ be the vector by setting the negative entries in $\widetilde{\pi}_k$ to zero and re-normalizing to have a unit $\ell^1$-norm. 
	\end{enumerate}
	Output: $\hat{\pi}_i$ for $i\notin S$.  
\end{algorithm}

\vspace{-.2cm}

We justify this algorithm using the following theorem. Under the DCMM model, $\mathbb{E}[A]=\Omega-\diag(\Omega)$, where $\Omega$ is a matrix with $\Omega_{ij}=\theta_i\theta_j\cdot \pi_i'P\pi_j$. We call $\Omega$ the `signal' matrix.
\begin{thm}[Validity of the algorithm] \label{thm:MME}
Suppose that the rank of $\Omega$ is $K$, and $(U,\eta)$ satisfies that $\mathrm{rank}(U)=K$ and $\eta'U'\Omega e_i>0$ for all $i$. The following statements are true:
\vspace{-.18cm}
\begin{itemize} \itemsep -1pt
\item[(a)] Let $r_i = U'\Omega e_i/(\eta'U'\Omega e_i)$, for $1\leq i\leq n$. There exist $v_1,\ldots,v_K\in\mathbb{R}^K$ and a positive vector $b\in\mathbb{R}^K$ such that each $r_i=\sum_{k=1}^Kw_i(k)v_k$, where $w_i=(b\circ \pi_i)/\|b\circ \pi_i\|_1$. 
\item[(b)] If we plug $(K, \Omega, \Pi_S)$ into Algorithm~\ref{alg:MME}, then $\hat{\pi}_i=\pi_i$ for all $i\notin S$. 
\end{itemize}
\end{thm}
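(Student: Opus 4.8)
The plan is to handle parts (a) and (b) separately, with a common starting point: the DCMM factorization of the signal matrix. Writing $\Theta = \diag(\theta_1,\ldots,\theta_n)$ and letting $\Pi\in\mathbb{R}^{n\times K}$ stack the $\pi_i'$, Definition~\ref{def:DCMM} gives $\Omega = \Theta\Pi P\Pi'\Theta$. For part (a), I would first compute $U'\Omega e_i = \theta_i\,U'\Theta\Pi P\pi_i$, so that the degree parameter $\theta_i$ appears as a common factor in both numerator and denominator of $r_i$ and cancels; this cancellation is precisely the mechanism that removes degree heterogeneity. Setting $Q = U'\Theta\Pi P\in\mathbb{R}^{K\times K}$ (invertible under $\mathrm{rank}(U)=K$ and $\mathrm{rank}(\Omega)=K$, since the latter forces $\Pi$ and $P$ to have rank $K$), I would define $b_k = \eta'Qe_k$ and $v_k = Qe_k/b_k$, and verify by direct substitution that $r_i = \sum_k w_i(k) v_k$ with $w_i(k) = b_k\pi_i(k)/\sum_j b_j\pi_i(j)$, which is exactly Model~\eqref{model2}.

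The delicate point in part (a) is establishing that $b$ is a \emph{positive} vector, since otherwise the $w_i$ need not be valid weight vectors and the $r_i$ need not lie in a genuine simplex. Writing $c = U\eta$, each vector $\Theta\Pi Pe_k$ is entrywise nonnegative (as $\Theta,\Pi,P$ all have nonnegative entries), and $b_k = c'(\Theta\Pi Pe_k)$. The stated condition $\eta'U'\Omega e_i>0$ only yields $\sum_k \pi_i(k) b_k>0$ for each $i$, which by itself does not force each $b_k>0$; positivity becomes clean once $c=U\eta$ is itself nonnegative. This holds for the recommended choices: for $U$ equal to the leading eigenvectors and $\eta=e_1$, $c$ is the Perron eigenvector of the nonnegative $\Omega$ and is strictly positive; for $U=\widehat\Pi^{\mathrm{cd}}$ and $\eta=\mathbf{1}_K$, one has $c=\mathbf{1}_n$. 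I expect this to be the main obstacle, and would close it by arguing that under the regularity conditions $U\eta$ is positive (equivalently, that every community is represented), so that $b_k>0$ together with the rank-$K$ non-degeneracy.

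For part (b), I would first note that plugging $\Omega$ into Step~1 of Algorithm~\ref{alg:MME} gives $x_i=r_i$ exactly, so $X_S=R_S$ and we are in the oracle regime of Section~\ref{subsec:oracle}. Invoking Theorem~\ref{thm:idea} together with Theorem~\ref{thm:idea2} (the latter guaranteeing, via the rank-$(K-1)$ condition on $\Sigma(\alpha)$, that the null space of $M(\alpha)$ is one-dimensional so that $\hat b$ equals $b$ up to a positive scalar) and then the exact regression identity \eqref{regression}, I obtain $\widehat V=V$; the scale ambiguity in $\hat b$ is immaterial because $\hat w_i=(\hat b\circ\pi_i)/\|\hat b\circ\pi_i\|_1$ is invariant to rescaling $\hat b$. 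The crux is then a direct computation in Step~3: using symmetry of $\Omega$, $e_i'\Omega U = \theta_i(Q\pi_i)' = \theta_i s_i\, r_i'$ where $s_i=\|b\circ\pi_i\|_1=\eta'Q\pi_i$, and writing $\widehat B=\diag(\hat b)V$ and $r_i'=w_i'V$, the matrix product $r_i'\widehat B'(\widehat B\widehat B')^{-1}$ collapses (the $VV'$ factors cancel against their inverse) to $w_i'\diag(\hat b)^{-1}$, which equals $\pi_i'/s_i$ up to the harmless scalar from the normalization of $\hat b$.

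Putting these together, $\widetilde\pi_i$ is a positive scalar multiple of $\theta_i\pi_i'$; since $\theta_i>0$ and $\pi_i$ is a nonnegative weight vector, all entries are already nonnegative, so the truncation step does nothing and the final $\ell^1$-renormalization cancels both $\theta_i$ and the scalar, yielding $\hat\pi_i=\pi_i$ for every $i\notin S$. The only genuine subtleties are the positivity of $b$ in part (a) and ensuring the rank-$(K-1)$ condition on $\Sigma(\alpha)$ required for Theorem~\ref{thm:idea2} to apply in part (b); everything else is bookkeeping verifying that the degree parameters and the normalization constants $s_i$ cancel at each stage.
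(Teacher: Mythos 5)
Your proof follows essentially the same route as the paper's: factor $\Omega=\Theta\Pi P\Pi'\Theta$, cancel $\theta_i$, set $b=P\Pi'\Theta U\eta$ and $V=\diag(b)^{-1}P\Pi'\Theta U$ for part (a), and for part (b) observe that the oracle input yields $\widehat V=V$ and $\hat b=b$, so that $\widetilde\pi_i=\theta_i\pi_i'$ after the $BB'(BB')^{-1}$ cancellation and the truncation/renormalization is a no-op. The two subtleties you flag are genuine and are precisely the points the paper's own proof passes over: it infers positivity of $b$ solely from $b'\pi_i>0$ for all $i$ (which, as you correctly observe, does not force $b_k>0$ entrywise for arbitrary $(U,\eta)$ satisfying the stated hypotheses --- one needs something extra such as nonnegativity of $U\eta$ or near-pure nodes), and it asserts $\hat b=b$ in Step~2 without invoking the rank-$(K-1)$ condition on $\Sigma(\alpha)$ from Theorem~\ref{thm:idea2} that uniqueness of the null eigenvector actually requires.
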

The first statement in Theorem~\ref{thm:MME} justifies that Models~\eqref{model1}-\eqref{model2} hold for $x_i$'s, so that our SSVH framework is applicable. (Here, the noise is $\epsilon_i=x_i-r_i \approx U'(A-\Omega)e_i/(\eta'U'\Omega e_i)$, which can be shown sub-Gaussian, as $(A-\Omega)e_i$ contains independent centered Bernoulli variables.) 
The second statement  says that the proposed algorithm can exactly recover unknown $\pi_i$ in the noiseless case. 

Compared with the unsupervised MME methods such as MSCORE \citep{jin2023mixed} or \cite{zhang2020detecting}, a major benefit of the above algorithm is that it doesn't require existence of pure nodes for each community. Additionally,  Algorithm~\ref{alg:MME} is also a new unsupervised MME algorithm, if we replace SSVH by unsupervised VH in Step~2. The resulting algorithm is different from MSCORE \citep{jin2023mixed} in the unsupervised setting.

 \vspace{-.2cm}
 
\subsection{Semi-supervised topic modeling} \label{subsec:TM}

Topic modeling (TM) \citep{blei2003latent} is a widely used tool for text analysis. Suppose we have a corpus with $n$ documents written on a vocabulary of $p$ words. Let $Y=[Y_1,\ldots,Y_n]\in\mathbb{R}^{p\times n}$ be such that $Y_i(j)$ is the count of word $j$ in document $i$. Topic modeling aims to estimate $K$ topics from the corpus, where each topic is a distribution over vocabulary words, represented by a probability mass function (PMF) $A_k\in\mathbb{R}^p$. The \emph{probabilistic Latent Semantic Indexing (pLSI)} model \citep{hofmann1999probabilistic} is a common topic model: 

\begin{definition}[pLSI model] \label{def:pLSI}
The word count vectors $Y_1,\ldots,Y_n$ are independent of each other, with $Y_i\sim \mathrm{Multinomial}(N_i, \Omega_i)$, where $N_i$ is the length of document $i$, and $\Omega_i\in\mathbb{R}^p$ is a PMF satisfying that $\Omega_i=\sum_{k=1}^K \gamma_i(k)A_k$, with $\gamma_i(k)$ being the fractional weight that document $i$ puts on topic $k$. 
\end{definition}

The unsupervised topic modeling aims to estimate $A=[A_1,\ldots,A_K]$ from the data matrix $Y$, and is addressed various algorithms via optimization or spectral methods \citep{huang2016anchor, ke2024using}. We now formulate the semi-supervised problem by discussing what `additional information' means in practice. For each $1\leq j\leq p$, let $a_j=(A_1(j),\ldots,A_K(j))'\in\mathbb{R}^{K}$. This vector contains word~$j$'s frequencies in different topics. The {\it topic loading vector} for word $j$ is defined as $a_j^*=a_j/\|a_j\|_1$. 
Since the effect of word frequency heterogeneity has been removed in $a_j^*$, this vector purely captures word $j$'s topic relevance. For example,  when $j$ is an anchor word \citep{arora2012learning,ke2024using} of topic $k$, $a_j^*$ is equal to $e_k$. 
When $a_j^*=(0.5, 0.5, 0,\ldots)'$, it means that word $j$ is only related to the first two topics. 
Motivated by these observations, we formulate the semi-supervised problem as follows: 
 
\begin{definition}[Semi-supervised TM] \label{def:semi-TM}
Suppose the word count matrix $Y$ follows the pLSI model. Let $S$ be a subset of $\{1,2,\ldots,p\}$. Given $Y$ and $a_j^*$ for $j\in S$, we aim to estimate $A$.  
\end{definition}

There is literature on incorporating prior knowledge or human input into topic modeling~\citep{andrzejewski2009incorporating,jagarlamudi2012incorporating,li2018seed,eshima2024keyword}. A common assumption \citep{jagarlamudi2012incorporating,eshima2024keyword} is that a set of keywords is given for each topic, and the topic vector is modeled as $A_k=\pi_k\widetilde{A}_k + (1-\pi_k)A^*_k$, where $\widetilde{A}_k$ and $A^*_k$ are the keyword-assisted and standard topics, respectively, and the support of $\widetilde{A}_k$ is restricted on the keyword set. All parameters $(\pi_k, \widetilde{A}_k, A^*_k)$ are estimated in a Bayesian framework using beta and Dirichlet priors. In comparison, our problem in Definition~\ref{def:semi-TM} permits a more flexible way of specifying the topic relevance of each keyword. If in $a_j^*$ we put an equal weight on each topic word  $j$ serves as a keyword, then our framework is similar to those in the literature. But we can also put unequal weights, to incorporate more human knowledge on keywords. Another benefit of our framework is that we can leverage SSVH to get a fast algorithm, without relying on the sampling procedure in the Bayesian framework. 


We now propose our algorithm for semi-supervised TM. The strategy is similar to that in Section~\ref{subsec:MME}. We first turn the counts to frequencies: $D=[d_1,d_2,\ldots,d_n]$, with $d_i=N_i^{-1}Y_i$. We use a matrix $U\in\mathbb{R}^{p\times K}$ to project each row of $D$ and a vector $\eta\in\mathbb{R}^K$ for normalization, and apply SSVH. After $\widehat{V}$ is obtained, how to estimate $A$ requires some derivation. We relegate details to the supplement and only present the algorithm ($A^*_S$ is the $N\times K$ matrix of stacking together those $a_j^*$ for $j\in S$): 

\vspace{-.2cm}

\begin{algorithm}[h]
	\caption{Semi-supervised Topic Modeling}\label{alg:TM}
Input: $K$, $D$, and $A^*_S$. Algorithm parameters: a matrix $U \in \mathbb{R}^{p \times K}$ and a vector $\eta\in\mathbb{R}^K$. 
	\begin{enumerate} \itemsep -2pt
	\item Compute $x_j = U'D'e_j/(\eta'U'D'e_j)$ for $1\leq j\leq p$. Let $X=[x_1,\ldots,x_p]'\in\mathbb{R}^{p\times K}$ and let $X_S\in\mathbb{R}^{N\times K}$ be the matrix of stacking the $x_i$ for $i\in S$. 
	\item (SSVH). Apply Algorithm~\ref{alg:SSVH} to $(K, X_S, A^*_S)$ to obtain $\widehat{V}$ and the intermediate quantity $\hat{b}$. 
	\item Let $\widehat{B}=\diag(\hat{b})\widehat{V}$. Estimate $A$ by $\widehat{A} = DU\widehat{B}'(\widehat{B}\widehat{B}')^{-1}$ 
	\end{enumerate}
	Output: $\widehat{A}$ (its columns are the estimated topic vectors).  
\end{algorithm}

\vspace{-.2cm}
\begin{thm}[Validity of the algorithm] \label{thm:semi-topic}
In the pLSI model, write $\Gamma=[\gamma_1,\ldots,\gamma_n]$ and $D_0=A\Gamma$. Suppose the rank of $A\Gamma$ is $K$, and $(U,\eta)$ is such that $\mathrm{rank}(U)=K$ and $\eta'U'D_0'e_j>0$ for $1\leq j\leq p$. 
If we plug $(K, D_0, A^*_S)$ into Algorithm~\ref{alg:TM}, then $\widehat{A}=A$. 
\end{thm}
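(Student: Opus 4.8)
The plan is to establish the noiseless-recovery claim $\widehat{A}=A$ by tracking the pLSI structure through each of the three steps of Algorithm~\ref{alg:TM} when the oracle matrix $D_0=A\Gamma$ is plugged in as the data. The overall strategy mirrors the proof of Theorem~\ref{thm:MME}, so I would first prove a topic-modeling analogue of part~(a) of that theorem: namely, that the projected-and-normalized oracle points $x_j = U'D_0'e_j/(\eta'U'D_0'e_j)$ satisfy Models~\eqref{model1}--\eqref{model2} with barycentric coordinates $w_j$ tied to $a_j^*$ through some positive vector $b$. Once that simplex structure is in place, Theorems~\ref{thm:idea}--\ref{thm:idea2} guarantee that the SSVH step recovers the \emph{true} $b$ and the \emph{true} vertex matrix $V$ exactly (under the rank condition on $\Sigma(\alpha)$, which holds generically in the oracle case), so Step~2 returns the exact $\widehat V=V$ and $\hat b=b$; the remaining work is to verify that Step~3's closed-form map $\widehat A=DU\widehat B'(\widehat B\widehat B')^{-1}$ inverts the projection and normalization to return $A$ exactly.

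The first key computation is to identify the vertex $v_k$ and the vector $b$ explicitly. Writing $D_0'e_j$ as the $j$th column of $D_0'=\Gamma'A'$, i.e. the $j$th row of $A\Gamma$, I would expand $U'D_0'e_j = U'\Gamma'(a_j)$ where $a_j=(A_1(j),\dots,A_K(j))'$ is word $j$'s across-topic frequency vector. Factoring out $\|a_j\|_1$ gives $U'\Gamma' a_j = \|a_j\|_1\, U'\Gamma' a_j^*$, and since $a_j^*$ is a weight vector, the un-normalized projection is a nonnegative combination of the columns of $U'\Gamma'$ (equivalently, a point in the simplicial cone spanned by those columns). Normalizing by $\eta'U'D_0'e_j$ then places $x_j$ inside a simplex whose vertices $v_k$ are the normalized columns $v_k = U'\Gamma'e_k/(\eta'U'\Gamma'e_k)$, and the standard SCORE-type algebra (as in \cite{ke2023special}) shows that the barycentric coordinate of $x_j$ is $w_j=(b\circ a_j^*)/\|b\circ a_j^*\|_1$ with $b_k=\eta'U'\Gamma'e_k$ playing the role of the rescaling vector. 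This is the step that validates Models~\eqref{model1}--\eqref{model2} and pins down $V$ and $b$; the rank-$K$ assumption on $A\Gamma$ together with $\mathrm{rank}(U)=K$ ensures the $v_k$ are affinely independent so the simplex is non-degenerate.

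With $\hat b=b$ and $\widehat V=V$ from Step~2, the final step is to check the inversion in Step~3. Setting $\widehat B=\diag(b)V$, I would show that $DU\widehat B'(\widehat B\widehat B')^{-1}$ collapses to $A$ by substituting $D=D_0=A\Gamma$ and unwinding the definitions of $V$ and $b$: the matrix $\widehat B=\diag(b)V$ should be exactly the $K\times K$ map from topic-loading space to projected space, so that $DU\widehat B'(\widehat B\widehat B')^{-1}$ recovers the word-by-topic frequency matrix $A$ after the projection $U$ and normalization are undone. Concretely I expect $\widehat B = \diag(b)V$ to equal the $K\times K$ matrix whose $k$th row encodes $\eta'U'\Gamma'e_k$ and the projected vertex, so that $U\widehat B'(\widehat B\widehat B')^{-1}$ acts as a right-inverse recovering the topic-frequency coordinates, and left-multiplication by $D_0=A\Gamma$ returns $A$. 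The main obstacle is the bookkeeping in this last algebraic identity: one must carefully verify that the particular normalization constants $b_k$ introduced by $\eta$ cancel exactly against the $\diag(b)$ factor in $\widehat B$, and that the projection $U$ (which is $p\times K$ of full column rank but not orthogonal) is correctly inverted on the relevant $K$-dimensional subspace. Because $U$ need not be orthonormal and the normalization is word-dependent, this cancellation is where a naive calculation is most likely to go wrong, so I would handle it by expressing every quantity in terms of $A$, $\Gamma$, $U$, and $\eta$ and checking the identity entrywise.
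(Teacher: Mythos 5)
Your proposal follows essentially the same route as the paper's proof: factor $U'D_0'e_j = \|a_j\|_1\,U'\Gamma'a_j^*$ to identify $B=\Gamma U$, $b=\Gamma U\eta$, $V=\diag(b)^{-1}B$ and verify Models~\eqref{model1}--\eqref{model2}, invoke exact oracle recovery of $(b,V)$ in Step~2, and then check that $\widehat B=\diag(b)V=B$ so that $D_0U B'(BB')^{-1}=A\Gamma U B'(BB')^{-1}=ABB'(BB')^{-1}=A$. The cancellation you flag as the delicate point is exactly the clean identity $\diag(b)\diag(b)^{-1}B=B$ used in the paper, so the argument goes through as you describe.
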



 
\vspace{-.3cm}

\section{Empirical Study} \label{sec:empirical}
We evaluate the performance of our method, compare it with the unsupervised SP algorithm, and apply our method to the problems in Section~\ref{sec:Examples}. By default, we choose $\alpha$ in our algorithm using the first approach there. We use the loss 
$\min_{P}\|P\hat{V} - V\|_F / K$, where minimum is over row permutations. 


\begin{figure}[tb!]
	\centering
	\includegraphics[width = \textwidth]{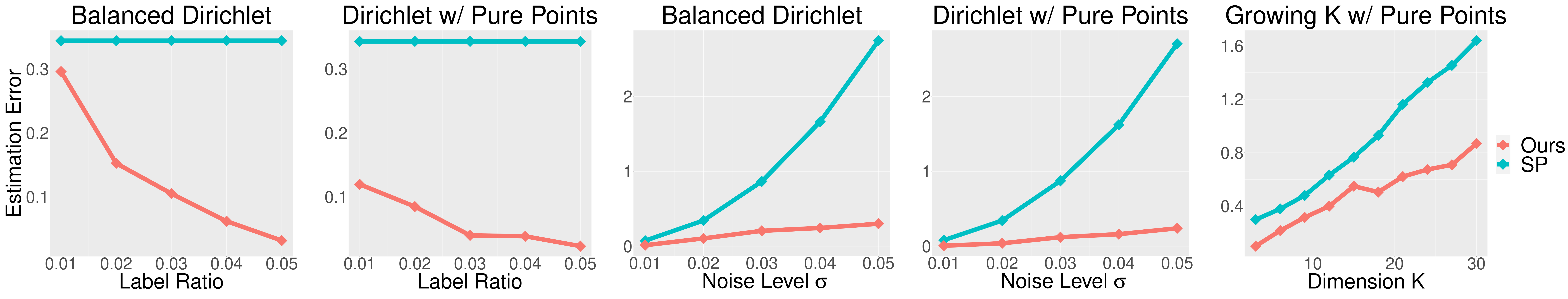}
	\caption{The influence of label ratio $N / n$, noise level $\sigma$, and dimension $K$ on error $\|\hat{V} - V\|^2_{\mathcal{F}} / K$. ``Balanced Dirichlet'' and ``Dirichlet w/ Pure Points'' correspond to setting (1) and (2) respectively.}
	\label{fig:simu:m}
\end{figure}

{\bf Simulations:} 
We fix $n = 1000$ and generate $b$ by first sampling its $K$ entries independently from $\mathrm{Uniform}(0.9, 1.1)$ and then normalizing it so that $\|b\| = 1$.  The diagonal elements of $V$ are 1, and the off-diagonal entries are independently generated from $\mathrm{Uniform}(0, 1 / K)$ (we make the off-diagonal elements of $V$ less than $1 / K$ is to guarantee $\lambda_{K - 1}(V) = O(1)$ when $K$ is large).  
We consider a total of 5 experiments by varying the label ratio $|S| / n$, noise level $\sigma$, and dimension $K$, and 2 additional experiments comparing with unsupervised VH and studying the runtime. 

{\bf Experiments 1-2, influence of data points distribution}. We set $(K,\sigma) = (3, 0.2)$ and consider two different settings for $W$: (1) all the rows of $W$ are generated independently from $\mathrm{Dirichlet}(1/3, 1/3, 1/3)$; (2) with $k = 1, 2, ..., K$, generate 2 rows of $W_S$ independently from the transformed Dirichlet distribution $0.1 \mathrm{Dirichlet}(1/3, 1/3, 1/3) + 0.9 e_k$ so that they are almost pure, set 30 rows of $W_{S^C}$ to be $e_k$ so that they are pure, and finally generate all the other rows of $W$ independently from $\mathrm{Dirichlet}(1/3, 1/3, 1/3)$. 
Setting (1) is more natural, but
it does not guarantee the existence of pure points (i.e., $r_i$ is equal to one vertex). 
In setting (2), we purposely add pure points. Once $W$ is generated, we generate $x_i$'s following Models~\eqref{model1}-\eqref{model2}. In these two experiments, we vary the label ratio $|S|/n$ from 1\% to 5\%. The results based on 100 repetitions are given in Figure~\ref{fig:simu:m}. In both experiments, with only very low label ratio, our algorithm can greatly outperform the unsupervised VH algorithm, successive projection (SP). 

{\bf Experiments 3-5, influence of noise level and data dimension}. In experiment 3 and 4, we fix $(K, |S|/n)=(3, 0.03)$ and vary the noise level $\sigma$. We still consider the two ways of generating $W$ as in Experiments 1-2. As seen in Figure~\ref{fig:simu:m}, as the noise level increases, the error of our method grows much slower than that of SP. 
%
%
%
%
In experiment 5, we study the influence of $K$. Fix $\sigma = 0.2$. As $K$ increases, the number of vertices grows and it is reasonable to require more labeled points. We set $|S| = 4K$ (e.g., when $K = 3$, the label ratio $|S| / n = 0.012$). 
For large $K$, it is likely that some vertices are far from all the observed points, rendering difficulty in identifying them. 
To prevent this, we generate $W$ in the following way. With $k = 1, 2, ..., K$, we generate 2 rows of $W_S$ independently from the transformed Dirichlet distribution $0.1 \mathrm{Dirichlet}(\bm{1}_K) + 0.9 e_k$, generate 2 rows of $W_S$ independently from $0.1 \mathrm{Dirichlet}(\bm{1}_K) + 0.9 \bm{1}_K / K$, set 10 rows of $W_{S^C}$ to be $e_k$, and finally generate all the other rows of $W$ independently from $\mathrm{Dirichlet}(\bm{1}_K)$. As seen in Figure~\ref{fig:simu:m}, despite the high dimensionality, our method outperforms SP with only very few labeled points.

{\bf Experiments 6, comparison with MVT and AA}. In this experiment, we compared our method with two NMF-based unsupervised methods: the minimum volume transformation (MVT, \cite{craig1994minimum}) and the archetypal analysis (AA) algorithm \citep{javadi2020nonnegative}. Different from SP, both approaches are anchor-free: MVT can leverage the data points on the boundary to locate the minimum-volume simplex, and AA focus on the convex hull of the data cloud, minimizing the well-constructed distance between it and the estimated vertices. 
However, the two methods are restricted to utilizing only the convex hull information and rely on the assumption that the data points are scattered widely enough for the convex hull to cover most of the underlying simplex. Our method, conversely, can appropriately extract information from both the inner and boundary points.
In this experiment,  we fix $(K, |S|/n)=(3, 0.03)$ and vary the noise level $\sigma$. Because both MVT and AA are anchor free, we adopt the same generation process of $W$ as in Experiment 1 so that there may not exist no pure nodes. The results based on 100 repetitions are exhibited in Table~\ref{table:NMF}. It illustrates that without leveraging the pure nodes, our method can outperform both of MVT and AA under various noise level.

\vspace{-1.5em}

 \begin{table}[h]
 	\caption{The median estimation error for SP, MVT, AA, and our method over 100 repetitions.} 
 	\label{table:NMF}
 	\centering
	\scalebox{.85}{
 	\begin{tabular}{c|ccccc}
 		\hline
 		$\sigma$ & 0.2   & 0.4   & 0.6   & 0.8    & 1      \\
 		\hline 
 		\hline
 		Ours     & \textbf{0.053} & \textbf{0.064} & \textbf{0.231} & \textbf{0.297}  & \textbf{0.416}  \\ 
 		\hline 
 		SP       & 0.319 & 1.712 & 4.438 & 8.404  & 13.37  \\
 		\hline 
 		MVT      & 0.191 & 0.38  & 0.39  & 0.427  & 0.522  \\
 		\hline 
 		AA       & 1.863 & 4.095 & 7.8   & 12.521 & 18.925 \\
 		\hline 
 	\end{tabular}}
 \end{table}

\begin{table}[h]
	\caption{The mean runtime (in milliseconds) of SP, MVT, and our method over 100 repetitions. MVT is very time-consuming for large $K$, so the results are omitted.} 
	\label{table:runtime}
	\centering
	\scalebox{.85}{
	\begin{tabular}{c|cccccccccc}
		\hline
		K                   & 3    & 6    & 9     & 12    & 15    & 18    & 21    & 24    & 27    & 30    \\
		\hline
		\hline
		Ours  & 3.22 & 3.22 & 3.45  & 3.81  & 3.71  & 3.98  & 4.1   & 5.44  & 4.92  & 4.52  \\
		\hline
		SP    & 5.11 & 9.62 & 10.44 & 17.67 & 24.44 & 32.58 & 35.93 & 44.67 & 46.11 & 53.55 \\
		\hline
		MVT & 8489 &  >1e4    &  >1e4     & >1e4      &   >1e4    &   >1e4    &  >1e4     &    >1e4   &   >1e4    & >1e4     \\
		\hline
	\end{tabular}}
	\vspace{-1em}
\end{table}

{\bf Experiments 7, runtime analysis}. In this experiment, we study the runtime of SP, MVT, and our method as the dimension $K$ grows. We adopt the same setting as in experiment 5. The results aggregated from 100 repetitions are shown in Table~\ref{table:runtime}. From the table, one can see that our method is very computationally efficient compared to  unsupervised approaches such as SP or MVT. Additionally, as $K$ scales up, the time complexity of unsupervised methods grows rapid, while our method maintains a low computation cost.  



\begin{minipage}[b]{0.68\textwidth}
	
{\bf Network mixed-membership estimation.}  
We use a co-authorship network for statisticians \citep{Ji_2016}, with $2831$ nodes and $71432$ edges. The node degrees range from $2$ to $853$, showing with severe heterogeneity.
Since there is no ground-truth membership, we conduct a \emph{semi-synthetic} experiment where we first apply Mixed-SCORE 
\citep{jin2023mixed} 
to cablirate the parameters $K, P, \pi, \theta$ for DCMM 
and then generate synthetic networks with the estimated parameters. 
We set the label ratio $N / n = 0.05$ and compare our algorithm with two unsupervised algorithm, SP and a de-noised variant of SP called SVS \citep{jin2023mixed} (it has a tuning parameter $L$, which is set to $L = 10\times K$). The median loss on estimating $V$ over 100 repetitions is given in Table~\ref{table:real:res}, and the estimated simplexes by different methods in one repetition are shown in Figure~\ref{fig:real} (the points $x_i$'s are obtained using the spectral projections in \cite{jin2023mixed}). 
Additionally, we also evaluated the excess error of plugging in different VH algorithms into Algorithm~\ref{alg:MME} for estimating the mixed membership vectors compared with the ideal case where the vertices are known (the loss is the Frobenius error $\|\hat{\Pi} - \Pi\|_F / \sqrt{nK}$). In Table~\ref{table:real:res}, we report the Excess Error: the loss by plugging in our SSVH estimate minus the loss by plugging in the true simplex. 
 \end{minipage}
\hfill
\begin{minipage}[b]{0.3\textwidth}
	\centering
	\includegraphics[width=1\textwidth]{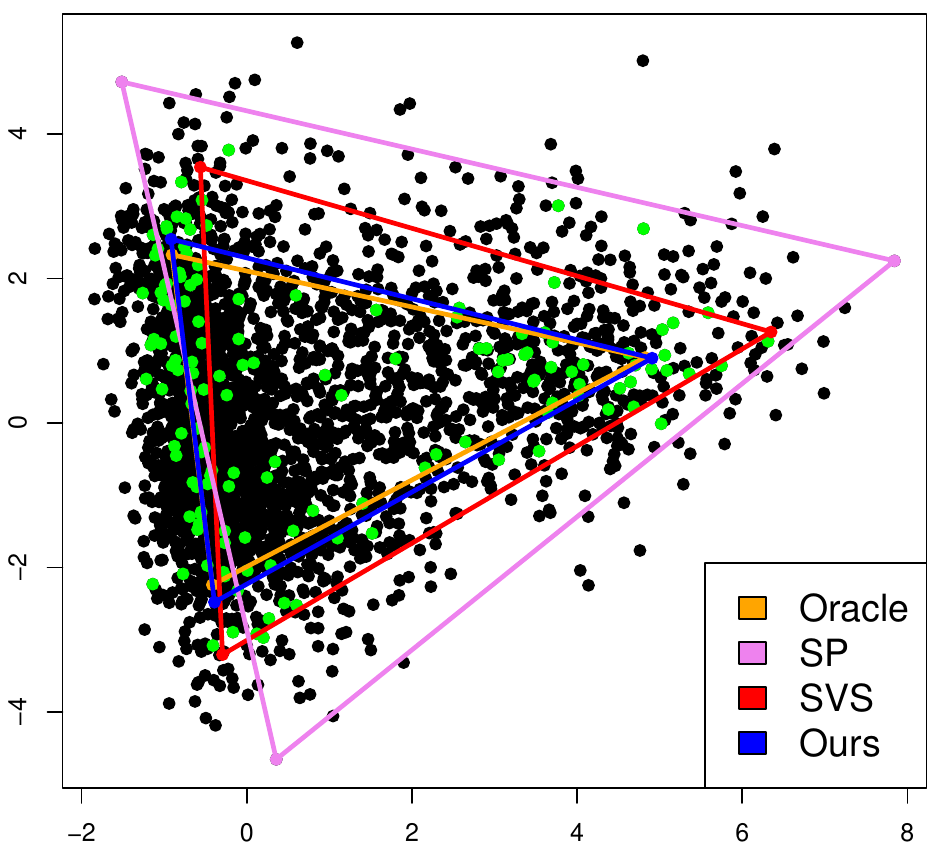}
	\captionof{figure}{Comparison of the true simplex (orange), SSVH estimate (blue), SP estimate (pink), and SVS estimate (red). The green points are the labeled ones.} \label{fig:real}
\end{minipage}

\vspace{-2.2pt}

\begin{table}[h]
	\caption{The median estimation error of the vertices for SP, SVS, and our method over 100 repetitions. The numbers in parentheses denote median absolute deviation  \citep{howell2005median}, a robust statistics for variability.} 
	\label{table:real:res}
	\centering
	\scalebox{.82}{
	\begin{tabular}{c|ccc|ccc}
		\hline
		& \multicolumn{3}{c|}{Error in $V$}& \multicolumn{3}{c}{Excess Error of $\Pi$ or $A$}\\
		\hline
		\hline
		& SP   & SVS  & Ours  & SP   & SVS  & Ours \\
		\hline
		Network & 2.37 (0.26)  & 1.05 (0.14)  & \textbf{0.16 (0.04)} &  0.15 (0.024) & 0.08 (0.0083)  & \textbf{0.02 (0.0035)}   \\
		\hline
		Text Analysis & 1.77 (0.12) & 1.64 (0.11)  & \textbf{1.22 (0.46)}  & 
		0.052 (0.014) & 0.033 (0.0025)  & \textbf{0.031 (0.0047)} \\
		\hline
	\end{tabular}}
\end{table}


{\bf Topic modeling.}  
We use the academic abstracts in MADStat \citep{Ke_2024}. The processed word count provided by authors use a vocabulary of $2106$ words. We further restrict to those abstracts whose total count on these 2106 words is at least 100. This results in 4129 documents.  
We fix $K=11$ (following \citep{Ke_2024}) and apply the algorithm in  
\cite{ke2024using} 
to calibrate model parameters $(A, W)$. We then generate synthetic corpus matrices. 
We set the label ratio $N / n = 0.05$ and compare our algorithm with two unsupervised algorithms, SP and SVS.
Since $K$ is large, we set $L = K + 5$ in SVS  (the time complexity of the algorithm scales with $\binom{L}{K}$). We perform 100 repetition of the semi-synthetic experiments, and compares the average estimation error of the vertices $\|\hat{V} - V\|_{\cal F} / K$ (up to permutation) for the three different methods. The results are displayed in Table~\ref{table:real:res}.
It can be seen that with only a small proportion of the label information, our method can greatly  
outperform the unsupervised methods. We also computed the Excess Error of plugging these algorithms into Algorithm~\ref{alg:TM} for estimating $A$ (the loss is Frobenius error $\|\hat{A} - A\|_F / K$), and the results are  in Table~\ref{table:real:res}. 

Besides SP ans SVS, we also compare our method with a novel unsupervised topic model estimation approach \cite{huang2016anchor}. We use the same synthetic dateset and setting as the previous empirical study except setting the number of labeled nodes $N = K + 1$. The prior information within this scenario is extremely weak, barely equipping us with any knowledge above the unsupervised setting. Remarkably, over 100 repetitions, our method's median estimation error in V is 0.023 , while the corresponding error of \cite{huang2016anchor} is 0.030, which is 30.4\% higher than us. This illustrates the efficiency of our algorithm even with exceedingly low signal.
We also implemented SeededLDA \citep{jagarlamudi2012incorporating}, but the resulting error in $A$ is very large (about 40 times larger than our method). The main reason is that seeded LDA assumes a different model and is less valid in our scenario. 
%

\bibliography{network}

\begin{thebibliography}{10}

\bibitem{agterberg2022estimating}
Joshua Agterberg and Anru Zhang.
\newblock Estimating higher-order mixed memberships via the 2-to-infinity
  tensor perturbation bound.
\newblock {\em arXiv:2212.08642}, 2022.

\bibitem{airoldi2008mixed}
Edo~M Airoldi, David Blei, Stephen Fienberg, and Eric Xing.
\newblock Mixed membership stochastic blockmodels.
\newblock {\em Advances in neural information processing systems}, 21, 2008.

\bibitem{andrzejewski2009incorporating}
David Andrzejewski, Xiaojin Zhu, and Mark Craven.
\newblock Incorporating domain knowledge into topic modeling via dirichlet
  forest priors.
\newblock In {\em Proceedings of the 26th annual international conference on
  machine learning}, pages 25--32, 2009.

\bibitem{araujo2001successive}
M{\'a}rio C{\'e}sar~Ugulino Ara{\'u}jo, Teresa Cristina~Bezerra Saldanha,
  Roberto Kawakami~Harrop Galvao, Takashi Yoneyama, Henrique~Caldas Chame, and
  Valeria Visani.
\newblock The successive projections algorithm for variable selection in
  spectroscopic multicomponent analysis.
\newblock {\em Chemometrics and intelligent laboratory systems}, 57(2):65--73,
  2001.

\bibitem{arora2012learning}
Sanjeev Arora, Rong Ge, and Ankur Moitra.
\newblock Learning topic models--going beyond svd.
\newblock In {\em 2012 IEEE 53rd annual symposium on foundations of computer
  science}, pages 1--10. IEEE, 2012.

\bibitem{betzel2018non}
Richard~F Betzel, Maxwell~A Bertolero, and Danielle~S Bassett.
\newblock Non-assortative community structure in resting and task-evoked
  functional brain networks.
\newblock {\em bioRxiv}, page 355016, 2018.

\bibitem{blei2003latent}
David~M Blei, Andrew~Y Ng, and Michael~I Jordan.
\newblock Latent dirichlet allocation.
\newblock {\em Journal of machine Learning research}, 3(Jan):993--1022, 2003.

\bibitem{craig1994minimum}
Maurice~D Craig.
\newblock Minimum-volume transforms for remotely sensed data.
\newblock {\em IEEE Transactions on Geoscience and Remote Sensing},
  32(3):542--552, 1994.

\bibitem{davis1970rotation}
Chandler Davis and William~Morton Kahan.
\newblock The rotation of eigenvectors by a perturbation. iii.
\newblock {\em SIAM Journal on Numerical Analysis}, 7(1):1--46, 1970.

\bibitem{eshima2024keyword}
Shusei Eshima, Kosuke Imai, and Tomoya Sasaki.
\newblock Keyword-assisted topic models.
\newblock {\em American Journal of Political Science}, 68(2):730--750, 2024.

\bibitem{gillis2013fast}
Nicolas Gillis and Stephen~A Vavasis.
\newblock Fast and robust recursive algorithms for separable nonnegative matrix
  factorization.
\newblock {\em IEEE transactions on pattern analysis and machine intelligence},
  36(4):698--714, 2013.

\bibitem{hendrix2013minimum}
Eligius~MT Hendrix, Inmaculada Garc{\'\i}a, Javier Plaza, and Antonio Plaza.
\newblock On the minimum volume simplex enclosure problem for estimating a
  linear mixing model.
\newblock {\em Journal of Global Optimization}, 56(3):957--970, 2013.

\bibitem{hofmann1999probabilistic}
Thomas Hofmann.
\newblock Probabilistic latent semantic indexing.
\newblock In {\em Proceedings of the 22nd annual international ACM SIGIR
  conference on Research and development in information retrieval}, pages
  50--57, 1999.

\bibitem{howell2005median}
David~C Howell.
\newblock Median absolute deviation.
\newblock {\em Encyclopedia of statistics in behavioral science}, 2005.

\bibitem{huang2016anchor}
Kejun Huang, Xiao Fu, and Nikolaos~D Sidiropoulos.
\newblock Anchor-free correlated topic modeling: Identifiability and algorithm.
\newblock {\em Advances in Neural Information Processing Systems}, 29, 2016.

\bibitem{jagarlamudi2012incorporating}
Jagadeesh Jagarlamudi, Hal Daum{\'e}~III, and Raghavendra Udupa.
\newblock Incorporating lexical priors into topic models.
\newblock In {\em Proceedings of the 13th Conference of the European Chapter of
  the Association for Computational Linguistics}, pages 204--213, 2012.

\bibitem{javadi2020nonnegative}
Hamid Javadi and Andrea Montanari.
\newblock Nonnegative matrix factorization via archetypal analysis.
\newblock {\em Journal of the American Statistical Association},
  115(530):896--907, 2020.

\bibitem{ji2016semisupervised}
Min Ji, Dawei Zhang, Fuding Xie, Ying Zhang, Yong Zhang, and Jun Yang.
\newblock Semisupervised community detection by voltage drops.
\newblock {\em Mathematical Problems in Engineering}, 2016, 2016.

\bibitem{Ji_2016}
Pengsheng Ji and Jiashun Jin.
\newblock Coauthorship and citation networks for statisticians.
\newblock {\em The Annals of Applied Statistics}, 10(4), December 2016.

\bibitem{jiang2022semi}
Yicong Jiang and Zheng~Tracy Ke.
\newblock Semi-supervised community detection via structural similarity
  metrics.
\newblock In {\em The Eleventh International Conference on Learning
  Representations}, 2022.

\bibitem{jin2015fast}
Jiashun Jin.
\newblock Fast community detection by {SCORE}.
\newblock {\em The Annals of Statistics}, 43(1):57--89, 2015.

\bibitem{jin2023mixed}
Jiashun Jin, Zheng~Tracy Ke, and Shengming Luo.
\newblock Mixed membership estimation for social networks.
\newblock {\em Journal of Econometrics}, 2023.

\bibitem{jin2024improved}
Jiashun Jin, Gabriel Moryoussef, Zheng~Tracy Ke, Jiajun Tang, and Jingming
  Wang.
\newblock Improved algorithm and bounds for successive projection.
\newblock International Conference on learning and representations, 2024.

\bibitem{Ke_2024}
Zheng~Tracy Ke, Pengsheng Ji, Jiashun Jin, and Wanshan Li.
\newblock Recent advances in text analysis.
\newblock {\em Annual Review of Statistics and Its Application},
  11(1):347–372, April 2024.

\bibitem{ke2023special}
Zheng~Tracy Ke and Jiashun Jin.
\newblock Special invited paper: The {SCORE} normalization, especially for
  heterogeneous network and text data.
\newblock {\em Stat}, 12(1):e545, 2023.

\bibitem{ke2024using}
Zheng~Tracy Ke and Minzhe Wang.
\newblock Using {SVD} for topic modeling.
\newblock {\em Journal of the American Statistical Association},
  119(545):434--449, 2024.

\bibitem{leng2019semi}
Mingwei Leng and Tao Ma.
\newblock Semi-supervised community detection: A survey.
\newblock In {\em Proceedings of the 2019 7th International Conference on
  Information Technology: IoT and Smart City}, pages 137--140, 2019.

\bibitem{li2018seed}
Chenliang Li, Shiqian Chen, Jian Xing, Aixin Sun, and Zongyang Ma.
\newblock Seed-guided topic model for document filtering and classification.
\newblock {\em ACM Transactions on Information Systems (TOIS)}, 37(1):1--37,
  2018.

\bibitem{li2015minimum}
Jun Li, Alexander Agathos, Daniela Zaharie, Jos{\'e}~M Bioucas-Dias, Antonio
  Plaza, and Xia Li.
\newblock Minimum volume simplex analysis: A fast algorithm for linear
  hyperspectral unmixing.
\newblock {\em IEEE Transactions on Geoscience and Remote Sensing},
  53(9):5067--5082, 2015.

\bibitem{liu2014semi}
Dong Liu, Xiao Liu, Wenjun Wang, and Hongyu Bai.
\newblock Semi-supervised community detection based on discrete potential
  theory.
\newblock {\em Physica A: Statistical Mechanics and its Applications},
  416:173--182, 2014.

\bibitem{moshksar2024absoluteconstanthansonwrightinequality}
Kamyar Moshksar.
\newblock On the absolute constant in hanson-wright inequality, 2024.

\bibitem{rubin2022statistical}
Patrick Rubin-Delanchy, Joshua Cape, Minh Tang, and Carey~E Priebe.
\newblock A statistical interpretation of spectral embedding: the generalised
  random dot product graph.
\newblock {\em Journal of the Royal Statistical Society Series B: Statistical
  Methodology}, 84(4):1446--1473, 2022.

\bibitem{2013arXiv1306.2872R}
Mark {Rudelson} and Roman {Vershynin}.
\newblock {Hanson-Wright inequality and sub-gaussian concentration}.
\newblock {\em arXiv e-prints}, page arXiv:1306.2872, June 2013.

\bibitem{winter1999n}
Michael~E Winter.
\newblock {N-FINDR: An algorithm for fast autonomous spectral end-member
  determination in hyperspectral data}.
\newblock In {\em Imaging spectrometry V}, volume 3753, pages 266--275. SPIE,
  1999.

\bibitem{6985550}
Liang Yang, Xiaochun Cao, Di~Jin, Xiao Wang, and Dan Meng.
\newblock A unified semi-supervised community detection framework using latent
  space graph regularization.
\newblock {\em IEEE Transactions on Cybernetics}, 45(11):2585--2598, 2015.

\bibitem{yu2015useful}
Yi~Yu, Tengyao Wang, and Richard~J Samworth.
\newblock A useful variant of the davis--kahan theorem for statisticians.
\newblock {\em Biometrika}, 102(2):315--323, 2015.

\bibitem{zhang2020detecting}
Yuan Zhang, Elizaveta Levina, and Ji~Zhu.
\newblock Detecting overlapping communities in networks using spectral methods.
\newblock {\em SIAM Journal on Mathematics of Data Science}, 2(2):265--283,
  2020.

\bibitem{zhou2018selp}
Kuang Zhou, Arnaud Martin, Quan Pan, and Zhunga Liu.
\newblock Selp: Semi-supervised evidential label propagation algorithm for
  graph data clustering.
\newblock {\em International Journal of Approximate Reasoning}, 92:139--154,
  2018.

\end{thebibliography}
\bibliographystyle{plain}

\appendix


\section{Preliminaries}
\label{sec:app:pre}
\subsection{Without-loss-of-generality Normalizations} \label{sec:wlog_nor}
We begin the appendix with some without-loss-of-generality normalizations. Note that for any constant $c_\alpha \ne 0$, $M(c_\alpha \alpha) = c_\alpha M(\alpha)$,  $\widehat{M}(c_\alpha \alpha) = c_\alpha \widehat{M}(\alpha)$. Because the scaling of a matrix does not affect its eigenvectors, Algorithm~\ref{alg:SSVH} provides the same estimator $\widehat{V}$ when $\alpha$ is replaced by $c_\alpha\alpha$. Similarly, for different (possibly negative) scalings of $b$ and $\hat{b}$, model \eqref{model2} and $\widehat{W}_S, \widehat{V}$ in Algorithm~\ref{alg:SSVH} remains the same.  Therefore, without the loss of generality, for the convenience of theoretical analysis, we assume in the appendix that
$$\|\alpha\|  = 1, \quad \|b\| = \|\hat{b}\| = 1, \quad b' \hat{b} > 0.$$

\subsection{A Lemma on the Singular Values of \texorpdfstring{$V$}{\textit{V}}}
\label{sec:V_proof}

To facilitate the evaluation of the singular values of $V$ in later sections, we first introduce the following lemma.
\begin{lemma}[Lower Bound of the singular values of $V$] \label{lemma:V:tran}
There exists $\gamma \in  \mathbb{R}^K$, such that
\beq \label{equ:V}
\max_{\gamma} \lambda_{\min}\left(V +\frac{1}{\sqrt{K}} \bm{1}_K \gamma' \right)   \ge \sqrt{0.5}\lambda_{K-1}(V)
\eeq
\end{lemma}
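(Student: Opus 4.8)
The plan is to treat $\frac{1}{\sqrt{K}}\mathbf{1}_K\gamma'$ as a rank-one term whose left factor is the \emph{fixed} unit vector $e_0:=\mathbf{1}_K/\sqrt{K}$ and whose right factor $\gamma$ is free, and to reduce the maximization over $\gamma$ to a rank-one modification of the Gram matrix $V'V$. Writing $A(\gamma)=V+e_0\gamma'$ and using $\lambda_{\min}(A(\gamma))^2=\min_{\|w\|=1}\|A(\gamma)w\|^2$, I would split each image $Vw\in\mathbb{R}^K$ into its $e_0$-component and its component in $e_0^{\perp}$. Because $e_0\gamma'w$ perturbs only the $e_0$-component, this yields the orthogonal decomposition
\[
\|A(\gamma)w\|^2=\bigl((e_0'V+\gamma')w\bigr)^2+\bigl\|(I-e_0e_0')Vw\bigr\|^2,
\]
in which the second term is controlled by the centered vertex matrix $\widetilde V:=(I-e_0e_0')V$ and, crucially, does not involve $\gamma$.

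Next I would carry out the oracle optimization over $\gamma$. Setting $h:=V'e_0+\gamma$ (a free vector), the identity above gives $\min_{\|w\|=1}\|A(\gamma)w\|^2=\lambda_{\min}(hh'+\widetilde V'\widetilde V)$. Since $\widetilde V'\widetilde V$ is positive semidefinite and singular (its kernel is exactly the single direction removed by the centering), adding the rank-one term $hh'$ can raise the smallest eigenvalue up to, but not beyond, the second-smallest eigenvalue of $\widetilde V'\widetilde V$; taking $h$ along the null direction of $\widetilde V$ with large norm attains it. Hence a short interlacing argument yields
\[
\max_\gamma \lambda_{\min}\bigl(A(\gamma)\bigr)=\lambda_{K-1}(\widetilde V),
\]
the smallest nonzero singular value of the centered matrix, which is the translation-invariant conditioning of the simplex and the natural replacement for the possibly-tiny $\lambda_K(V)$.

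The final and decisive step is the comparison $\lambda_{K-1}(\widetilde V)\ge\sqrt{0.5}\,\lambda_{K-1}(V)$. For this I would use the exact identity $V'V=\widetilde V'\widetilde V+(V'e_0)(V'e_0)'$, which realizes $\widetilde V'\widetilde V$ as a rank-one downdate of $V'V$, together with Courant--Fischer restricted to $e_0^{\perp}$. The guiding principle is that the only direction along which $V$ can be nearly rank-deficient relative to $\lambda_{K-1}(V)$ is the translational direction $\mathbf{1}_K$, and this is precisely the direction removed by the centering $I-e_0e_0'$; the constant $\sqrt{0.5}$ then absorbs the worst-case split of energy between $\mathbf{1}_K$ and its orthogonal complement.

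I expect this last inequality to be the main obstacle. Bounding the second-smallest eigenvalue of the downdated Gram matrix below by a fixed fraction of $\lambda_{K-1}(V'V)$ is exactly where the non-degeneracy of the simplex must enter, and where the factor $\sqrt{0.5}$ is produced; one must verify that the rank-one downdate $(V'e_0)(V'e_0)'$ cannot deplete the $(K-1)$-th eigenvalue by more than a factor of two. The reduction steps leading up to it are routine linear algebra, so I would spend the bulk of the effort controlling this downdate.
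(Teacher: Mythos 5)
Your first two steps are correct and in fact sharper than the paper's argument: the orthogonal decomposition $\|A(\gamma)w\|^2=((V'e_0+\gamma)'w)^2+\|\widetilde V w\|^2$ with $e_0=\bm{1}_K/\sqrt{K}$ and $\widetilde V=(I-e_0e_0')V$ is exact, and your rank-one interlacing argument yields the clean identity $\max_\gamma\lambda_{\min}(A(\gamma))=\lambda_{K-1}(\widetilde V)$. The paper proceeds quite differently: it never introduces the centered matrix, but instead restricts $\gamma$ to the two-dimensional span of the bottom singular vector $v_K$ of $V'V$ and the component of $V'e_0$ orthogonal to it, reduces the claim to positive semidefiniteness of a $2\times2$ quadratic form in $(\tilde v_0'\eta, v_K'\eta)$, and produces an admissible $(\gamma_1,\gamma_2)$ via a determinant argument on a $3\times3$ matrix $\Phi$. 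Your reduction is cleaner and isolates exactly what remains to be shown.

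The problem is that what remains---your step 3, the inequality $\lambda_{K-1}(\widetilde V)\ge\sqrt{0.5}\,\lambda_{K-1}(V)$---is a genuine gap, and the tools you name cannot close it. Since your steps 1--2 give an exact formula for the maximum, step 3 is \emph{equivalent} to the lemma, so the reduction by itself proves nothing. Cauchy interlacing for the rank-one downdate $\widetilde V'\widetilde V=V'V-(V'e_0)(V'e_0)'$ only gives $\lambda_{K-1}(\widetilde V)^2\ge\lambda_K(V)^2$, which is precisely the weak bound in terms of the $K$th singular value that the lemma is designed to avoid, and this bound is tight. Moreover, your guiding principle---that $V$ can only be rank-deficient relative to $\lambda_{K-1}(V)$ along the direction $\bm{1}_K$---conflates the codomain (where $\bm{1}_K$ lives) with the domain (where the singular directions of $V'V$ live): the downdate direction $V'e_0$ need not align with the bottom singular vector of $V$, and when it aligns with a top one the downdate can deplete $\lambda_{K-1}$ by far more than a factor of two. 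Concretely, for $K=2$ and $V=e_0a'$ (two coincident rows, not excluded by a lower bound on $\lambda_{K-1}(V)/\|V\|$ since here $\lambda_{K-1}(V)=\lambda_1(V)=\|V\|$), one has $\widetilde V=0$ while $\lambda_{K-1}(V)=\|a\|>0$, and indeed $V+e_0\gamma'$ has rank one for every $\gamma$; a small perturbation keeps $\lambda_{K-1}(V)$ of order one while $\lambda_{K-1}(\widetilde V)$ is arbitrarily small. So the comparison in step 3 cannot be established by interlacing or Courant--Fischer alone; it requires a hypothesis controlling the non-degeneracy of the centered matrix $\widetilde V$ itself, which your write-up does not supply.
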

\begin{proof}
Define $v_0 = \frac{1}{\sqrt{K}} V'  \bm{1}_K$, let $v_K$ be the unit eigenvector of $V' V$ corresponding to the smallest eigenvalue. Then, we have that for any vector $\eta \in \mathbb{R}^K$,
\begin{align*}
	&\eta'  (V +\frac{1}{\sqrt{K}} \bm{1}_K \gamma' )'  (V +\frac{1}{\sqrt{K}} \bm{1}_K \gamma' ) \eta  \\
	&= \eta'  V^V \eta + 2 (\gamma'  \eta) (\frac{1}{\sqrt{K}}\bm{1}_K'  V\eta) + (\gamma'  \eta)^2 \\
	&\ge \lambda_{K - 1}(V)^2 \|\eta\|^2 - \lambda_{K - 1}(V)^2 (v_K' \eta)^2 + 2 (\gamma'  \eta) (v_0'  \eta)  + (\gamma'  \eta)^2 \\
	&= 0.5 \cdot \lambda_{K - 1}(V)^2 \|\eta\|^2 + 0.5 \cdot \lambda_{K - 1}(V)^2 \|\eta\|^2 - \lambda_{K - 1}(V)^2 (v_K' \eta)^2 + 2 (\gamma'  \eta) (v_0'  \eta)  + (\gamma'  \eta)^2
\end{align*}

Define $\tilde{v}_0 = (v_0 - (v_K'  v_0)v_0) / \|v_0 - (v_K'  v_0)v_0\|$ (if $\|v_0 - (v_K'  v_0)v_0\| = 0$, define $\tilde{v}_0$ to be any unit vector orthogonal to $v_K$ ), then $\|\tilde{v}_0\| = 1$, $v_K' \tilde{v}_0 = 0$, and there exists some constant $\phi_1, \phi_2$, such that 
$$  v_0 = \phi_1 \tilde{v}_0 + \phi_2 v_K$$.

Therefore,
\begin{align*}
	&\eta'  (V +\frac{1}{\sqrt{K}} \bm{1}_K \gamma' )'  (V +\frac{1}{\sqrt{K}} \bm{1}_K \gamma' ) \eta  \\
	&\ge 0.5 \cdot \lambda_{K - 1}(V)^2 \|\eta\|^2 + 0.5 \cdot \lambda_{K - 1}(V)^2 \|\eta\|^2 - \lambda_{K - 1}(V)^2 (v_K' \eta)^2 + 2 (\gamma'  \eta) (v_0'  \eta)  + (\gamma'  \eta)^2 \\
	&\ge  0.5 \cdot \lambda_{K - 1}(V)^2 \|\eta\|^2 + 0.5 \cdot \lambda_{K - 1}(V)^2 ((\tilde{v}_0' \eta)^2 + (v_K' \eta)^2)  \\
	&\qquad - \lambda_{K - 1}(V)^2 (v_K' \eta)^2 + 2 (\gamma'  \eta) ( \phi_1 \tilde{v}_0' \eta + \phi_2 \tilde{v}_K\eta' )  + (\gamma'  \eta)^2
\end{align*}

Define $x_1 = \tilde{v}_0' \eta, x_2 = v_K' \eta, \gamma_1 = \tilde{v}_0' \gamma, \gamma_2 = v_K' \gamma $, then we have 

\begin{align*}
	&\eta'  (V +\frac{1}{\sqrt{K}} \bm{1}_K \gamma' )'  (V +\frac{1}{\sqrt{K}} \bm{1}_K \gamma' ) \eta  \\
	&\ge  0.5 \cdot \lambda_{K - 1}(V)^2 \|\eta\|^2 + 0.5 \cdot \lambda_{K - 1}(V)^2 (x_1^2 + x_2^2)  - \lambda_{K - 1}(V)^2 x_2^2 \\ &\qquad + 2 (\gamma_1x_1 + \gamma_2 x_2) ( \phi_1 x_1 + \phi_2 x_2)  + (\gamma_1x_1 + \gamma_2 x_2)^2 \\
	&=  0.5 \cdot \lambda_{K - 1}(V)^2 \|\eta\|^2 + (\gamma_1^2 + 2\gamma_1\phi_1 + 0.5 \cdot \lambda_{K - 1}(V)^2) x_1^2 + (\gamma_2^2 + 2\gamma_2\phi_2 - 0.5 \cdot \lambda_{K - 1}(V)^2) x_2^2 \\ &\qquad + 2(\gamma_1\gamma_2 + \gamma_1 \phi_2 + \gamma_2 \phi_1)x_1x_2
\end{align*}

Hence, if we can find $\gamma_1, \gamma_2$ such that 
\beq \label{equ:gamma}
(\gamma_1\gamma_2 + \gamma_1 \phi_2 + \gamma_2 \phi_1)^2 \le  (\gamma_1^2 + 2\gamma_1\phi_1 + 0.5 \cdot \lambda_{K - 1}(V)^2)(\gamma_2^2 + 2\gamma_2\phi_2 - 0.5 \cdot \lambda_{K - 1}(V)^2), 
\eeq 
then we have for any $\eta \in \mathbb{R}^K$
$$ \eta'  (V +\frac{1}{\sqrt{K}} \bm{1}_K \gamma' )'  (V +\frac{1}{\sqrt{K}} \bm{1}_K \gamma' ) \eta \ge 0.5 \cdot \lambda_{K - 1}(V)^2 \|\eta\|^2 $$ 
Hence,
$$\max_{\gamma} \lambda_{\min}(V +\frac{1}{\sqrt{K}} \bm{1}_K \gamma' ) \ge \sqrt{0.5} \cdot \lambda_{K - 1}(V)$$
So the desired \eqref{equ:V} is satisfied.

It suffices to fulfill  \eqref{equ:gamma}. Define $\tilde{\gamma} = (\gamma_1, \gamma_2, 1)' $, $\tilde{\lambda} = \sqrt{0.5} \lambda_{K - 1}(V)$, 
$$\Phi = \begin{pmatrix}
	\phi_2^2 + \tilde{\lambda}^2 & -\phi_1\phi_2 & \phi_1\tilde{\lambda}^2 \\
	-\phi_1\phi_2 & \phi_1^2 - \tilde{\lambda}^2 & -\phi_2 \tilde{\lambda}^2 \\
	\phi_1\tilde{\lambda}^2 & -\phi_2\tilde{\lambda}^2 & \tilde{\lambda}^4
\end{pmatrix}$$
Then \eqref{equ:gamma} is equivalent to

$$ \tilde{\gamma}'  \Phi \tilde{\gamma} \le 0$$

Notice that
$$ \mathrm{det}(\Phi) = - \tilde{\lambda}^4 (\phi_2^2 - \phi_1^2 + \tilde{\lambda}^2)\le 0 $$

Therefore,  $\Phi$ has at least one non-positive eigenvalue. Consequently, there exists $\gamma^* = (\gamma^*_1, \gamma^*_2, \gamma^*_3)'  \ne \bm{0}$ such that $(\gamma^*)'  \Phi \gamma^* \le 0$.

If $\gamma^*_3 \ne 0$, we can choose $\gamma_1 = \gamma^*_1 / \gamma^*_3$, $\gamma_2 = \gamma^*_2 / \gamma^*_3$, and \eqref{equ:gamma} is satisfied.

If $\gamma^*_3 = 0$, then the sub-matrix of $\Phi$,
$$ \Phi_{\mathrm{sub}} \overset{def}{=} \begin{pmatrix}
	\phi_2^2 + \tilde{\lambda}^2 & -\phi_1\phi_2 \\
	-\phi_1\phi_2 & \phi_1^2 - \tilde{\lambda}^2 
\end{pmatrix}$$

has at least one non-positive eigenvalue.

Since $\mathrm{tr}(\Phi_{\mathrm{sub}}) = \phi_1^2 + \phi_2^2 \ge 0$, the 2 eigenvalues of $\Phi_{\mathrm{sub}}$ must be one non-negative and the other non-positive. Hence, $\mathrm{det}(\Phi_{\mathrm{sub}}) \le 0$, so we have
$\phi_1^2 \le \phi_2^2$.

Choose $\gamma_1 = 0$, $ \gamma_2 = \tilde{\lambda}^2 / \phi^2$, then \eqref{equ:gamma} reduces to
$$ (\phi_1^2 - \phi_2^2 - \tilde{\lambda}^2)  \gamma_2^2 \le 0,$$

which is true because $\phi_1^2 \le \phi_2^2$, hence \eqref{equ:gamma} is satisfied.

In all, there exists  $\gamma_1, \gamma_2$ such that \eqref{equ:gamma} is satisfied, and \eqref{equ:V} is proved.
\end{proof}

\section{Proof of Theorem~\ref{thm:idea2}}
\label{sec:thm3_proof}
Recall that 
\[
M(\alpha) = M(\alpha; S) =  \Pi_S'  \diag(H   \alpha)^{-1}  R_S R_S'  \diag(H  \alpha) \Pi_S. 
\] 
By \eqref{model2},
$$ W_S = \diag(\Pi_S \cdot b)^{-1} \Pi_S \diag(b),$$
hence
$$ \Pi_S = \diag(\Pi_S \cdot b)W_S \diag(b)^{-1}$$

Let $b^{-1}$ to be the entrywise inverse of $b$, so that $b \circ b^{-1} = \bm{1}_K$. Since $\Pi_S \cdot \bm{1}_K = \bm{1}_N$, we have

$$\diag(\Pi_S \cdot b)W_S \diag(b)^{-1} \bm{1}_K  = \bm{1}_N$$
So
$$\diag(\Pi_S \cdot b)W_S b^{-1}  = \bm{1}_N$$

This indicates that 
\beq \label{equ:w_b_inverse}
W_S \cdot b^{-1} = (\Pi_S b)^{-1} 
\eeq


Note that 
\begin{align*}
	\mathrm{rank}(M(\alpha)) &= \mathrm{rank}(\Pi_S'  \diag(H   \alpha)  R_S) \\
	&= \mathrm{rank}\left(( \diag(W_S \cdot b^{-1})^{-1} W_S \diag(b^{-1}))'  \diag(H   \alpha)  W_S V\right) \\
	&= \mathrm{rank}\left(\diag(b^{-1}) W_S' \diag(W_S \cdot b^{-1})^{-1}  \diag(H   \alpha)  W_S V\right) \\
	&= \mathrm{rank}\left(\diag(b^{-1}) W_S' \diag(W_S \cdot b^{-1})^{-1}  \diag(H   \alpha)  W_S \left(V +\frac{1}{\sqrt{K}} \bm{1}_K \gamma' \right)  \right), 
\end{align*}
where the last line follows from the fact that $\diag(H   \alpha)  W_S \bm{1}_K = \diag(H   \alpha) \bm{1}_N = \bm{1}_N' H \alpha = \bm{1}_K'  \Pi_S' H\alpha = 0$.
By Assumption~\ref{assum:bPiV}, $\diag(b^{-1})$ is full-ranked. By Lemma~\ref{lemma:V:tran}, there exists $\gamma$ such that $V +\frac{1}{\sqrt{K}} \bm{1}_K \gamma'$ is full ranked. Therefore,

\begin{align*}
	\mathrm{rank}(M(\alpha)) &= \mathrm{rank}\left(W_S' \diag(W_S \cdot b^{-1})^{-1}  \diag(H   \alpha)  W_S  \right) \\
	&= \mathrm{rank}\left(\sum_{i\in S}\frac{(H\alpha)_i}{W_i' \cdot b^{-1}}w_i w_i'  \right) \\
	&= \mathrm{rank}\left(\sum_{i\in S}\frac{(H\alpha)_i}{W_i' \cdot b^{-1}}(w_i - \bar{w}_*) (w_i - \bar{w}_*)' - N \bar{w}_* \bar{w}_*' \sum_{i\in S}\frac{(H\alpha)_i}{W_i' \cdot b^{-1}} \right)
\end{align*}

Note that
\begin{align*}
	\sum_{i\in S}\frac{(H\alpha)_i}{W_i' \cdot b^{-1}} = \sum_{i\in S}{(H\alpha)_i}(\Pi_S b)_i
	= (H\alpha)' \Pi_S b 
	= \alpha' H' \Pi_S b = 0
\end{align*}

Therefore,
\begin{align*}
	\mathrm{rank}(M(\alpha)) &= \mathrm{rank}\left(\sum_{i\in S}\frac{(H\alpha)_i}{W_i' \cdot b^{-1}}(w_i - \bar{w}_*) (w_i - \bar{w}_*)'  \right) \\
	&= \mathrm{rank}\left(\sum_{i\in S}(H\alpha)_i(\pi'_i b)(w_i - \bar{w}_*) (w_i - \bar{w}_*)'  \right) \\
	&=  \mathrm{rank}(N \cdot \Sigma(\alpha)) = \mathrm{rank}(\Sigma(\alpha))
\end{align*}

So when $\mathrm{rank}(\Sigma(\alpha)) = K - 1$, we have $\mathrm{rank}(M(\alpha)) = K - 1$, which indicates that the null space of $M(\alpha)$ is one dimensional. By Theorem~\ref{thm:idea} proved in the main paper (its proof is also displayed below for completeness), $M(\alpha) b = {\bf 0}_K$. Therefore, the eigenvector associated with the zero eigenvalue of $M(\alpha)$ is unique and must be equal to $b$.

\paragraph{Proof of Theorem~\ref{thm:idea}:}
Let $J(\alpha)= R_S'\diag(H  \alpha) \Pi_S$. Then, $M(\alpha)=J(\alpha)'J(\alpha)$. It suffices to show $J(\alpha)b={\bf 0}_d$. First, model \eqref{model1} implies $R_S=W_SV$. It follows that
$J(\alpha)b = V'W_S'\cdot \diag(H\alpha)\Pi_S b$.  
Second, model \eqref{model2} implies $w_i= (\pi_i\circ b)/\|\pi_i\circ b\|_1$; in the matrix form, this can be expressed as 
$W_S=[\diag(\Pi_S b)]^{-1}\Pi_S \diag(b)$.
We plug $W_S$ into $J(\alpha)b$ to obtain:
\begin{align} 
J(\alpha)b &= V' \diag(b) \Pi_S'[\diag(\Pi_S b)]^{-1}\diag(H\alpha)\Pi_Sb\cr
&= V'\diag(b)\Pi_S'\diag(H\alpha)[\diag(\Pi_S b)]^{-1}\Pi_S b \quad \mbox{(switching diagonal matrices)}\cr
&= V'\diag(b)\Pi_S'\diag(H\alpha){\bf 1}_N \hspace{.7cm}\mbox{(because $\diag(v)^{-1}v={\bf 1}$ for a vector $v$)}\cr
&= V'\diag(b)\Pi_S' H\alpha. \hspace{2cm} \mbox{(because $\diag(v){\bf 1}=v$ for a vector $v$)}
\end{align}
We recall that $H$ is the projection matrix to the orthogonal complement of $\Pi_S$. Hence, $\Pi_S'H$ is a zero matrix. It follows that the right hand side of \eqref{thm-proof} is a zero vector. \qed 

\section{Proof of Lemma~\ref{lemma:finite}}
\label{sec:finite_proof}
Denote $Z = X - R$ and $Z_{{{S}}} = X_{{{S}}} - R_{{{S}}}$.
Notice that 
\begin{align*}
	\widehat{V} - V &= (\widehat{W}_{{{S}}}'  \widehat{W}_{{{S}}})^{-1}\widehat{W}_{{{S}}}' X_{{{S}}} - V \\
	&=(\widehat{W}_{{{S}}}'  \widehat{W}_{{{S}}})^{-1}\widehat{W}_{{{S}}}' (W_{{{S}}}V + Z_{{{S}}}) - V\\
	&= (\widehat{W}_{{{S}}}'  \widehat{W}_{{{S}}})^{-1}\widehat{W}_{{{S}}}' (\widehat{W}_{{{S}}} - W_{{{S}}})V + (\widehat{W}_{{{S}}}'  \widehat{W}_{{{S}}})^{-1}\widehat{W}_{{{S}}}'  Z_{{{S}}}
\end{align*}

Therefore,
\begin{align} \label{equ:Vhat}
	\|	\widehat{V} - V\| &\le \|(\widehat{W}_{{{S}}}'  \widehat{W}_{{{S}}})^{-1}\widehat{W}_{{{S}}}' (\widehat{W}_{{{S}}} - W_{{{S}}})V\| + \|(\widehat{W}_{{{S}}}'  \widehat{W}_{{{S}}})^{-1}\widehat{W}_{{{S}}}'  Z_{{{S}}}\| \notag \\
	& \le \|(\widehat{W}_{{{S}}}'  \widehat{W}_{{{S}}})^{-1}\| \cdot \|\widehat{W}_{{{S}}}' (\widehat{W}_{{{S}}} - W_{{{S}}})\| \cdot \|V\| + \|(\widehat{W}_{{{S}}}'  \widehat{W}_{{{S}}})^{-1}\| \cdot \|\widehat{W}_{{{S}}}'  Z_{{{S}}}\| 
\end{align}

We analyze the terms in \eqref{equ:Vhat} and show that their relation with  $\|\widehat{W}_{{{S}}} - W_{{{S}}}\|_{\max}$ as follows.

\subsection{\texorpdfstring{Error rate of $\|(\widehat{W}_{{{S}}}'  \widehat{W}_{{{S}}})^{-1}\|$}{Error rate of ∥(Ŵₛᵀ Ŵₛ)⁻¹∥}} \label{subsec:WWT}
We have
\begin{align} \label{equ:WTW}
	\|(\widehat{W}_{{{S}}}'  \widehat{W}_{{{S}}})^{-1}\| &= \lambda_{\min}(\widehat{W}_{{{S}}})^{-2} \notag \\
	& \le (\lambda_{\min}(W_{{{S}}}) - \|\widehat{W}_{{{S}}} - W_{{{S}}}\|
	)^{-2} 
	)^{-2} 
	)^{-2} 
\end{align}

Note that
\begin{align} \label{equ:lambda_min_W}
	\lambda_{\min}(W_{{{S}}}) &= \lambda_{\min}(\diag(\Pi_{{{S}}} b)^{-1} \Pi_{{{S}}} \diag(b)) \\ \notag
	&\ge \lambda_{\min}(\diag(\Pi_{{{S}}} b)^{-1}) \lambda_{\min}(\Pi_{{{S}}}) \lambda_{\min}(\diag(b)) \\ \notag
	(\text{Assumption~\ref{assum:bPiV}})&\ge (\max_{i \in {{{S}}}} \pi_i' b)^{-1} (c_2 \lambda_{\max}(\Pi_{{{S}}})) (\min_{k \in [K]} b_k) \\ \notag
	(\text{Assumption~\ref{assum:bPiV}})&\ge (\max_k b_k)^{-1} \cdot c_2 \cdot \frac{\|\Pi_{{{S}}} \bm{1}_K\|}{\|\bm{1}_K\|} \cdot c_3 (\max_k b_k) \\ \notag
	&\ge c_2c_3 \sqrt{\frac{N}{K}}
\end{align}

Hence, when $\|\widehat{W}_{{{S}}} - W_{{{S}}}\| < \frac{c_2 c_3}{2} \sqrt{\frac{N}{K}}$, we have

\beq \label{equ:lemma3:main:eq1}
\|(\widehat{W}_{{{S}}}'  \widehat{W}_{{{S}}})^{-1}\| \ge \left(\frac{1}{2}c_2c_3 \sqrt{\frac{N}{K}}\right)^{-2} = \frac{4}{c_2^2c_3^2} \frac{K}{N}.
\eeq


\subsection{\texorpdfstring{Error rate of $\|\widehat{W}_{{{S}}}' (\widehat{W}_{{{S}}} - W_{{{S}}})\|$}{Error rate of ∥Ŵₛᵀ (Ŵₛ−Wₛ)∥}}
Notice that
\begin{align*}
	\|\widehat{W}_{{{S}}}' (\widehat{W}_{{{S}}} - W_{{{S}}})\| &\le  \|(\widehat{W}_{{{S}}} - W_{{{S}}})' (\widehat{W}_{{{S}}} - W_{{{S}}})\| + \|W_{{{S}}}' (\widehat{W}_{{{S}}} - W_{{{S}}})\| \\
	&\le \|\widehat{W}_{{{S}}} - W_{{{S}}}\|^2 + \|W_{{{S}}}\| \|\widehat{W}_{{{S}}} - W_{{{S}}}\| \\
	&\le     \|\widehat{W}_{{{S}}} - W_{{{S}}}\|^2 + \sqrt{N} \|W_{{{S}}}\|_{\infty}  \|\widehat{W}_{{{S}}} - W_{{{S}}}\| \\
	&\le   \|\widehat{W}_{{{S}}} - W_{{{S}}}\|^2 + \sqrt{N} \|\widehat{W}_{{{S}}} - W_{{{S}}}\|
\end{align*}

So
\beq \label{equ:lemma3:main:eq2}
\|\widehat{W}_{{{S}}}' (\widehat{W}_{{{S}}} - W_{{{S}}})\| \le    \|\widehat{W}_{{{S}}} - W_{{{S}}}\|^2 + \sqrt{N} \|\widehat{W}_{{{S}}} - W_{{{S}}}\|
\eeq

\subsection{\texorpdfstring{Error rate of $\|\widehat{W}_{{{S}}}'  Z_{{{S}}}\|$}{Error rate of ∥Ŵᵀ Zₛ∥}}

Notice that
\begin{align*}
	\|\widehat{W}_{{{S}}}'  Z_{{{S}}}\| &\le \|W_{{{S}}}'  Z_{{{S}}}\| + \|(\widehat{W}_{{{S}}} - W_{{{S}}})'  Z_{{{S}}}\| \\
	&\le \|W_{{{S}}}'  Z_{{{S}}}\| + \|(\widehat{W}_{{{S}}} - W_{{{S}}})\|\cdot \| Z_{{{S}}}\| \\
	&\le \|W_{{{S}}}'  Z_{{{S}}}\| + \|(\widehat{W}_{{{S}}} - W_{{{S}}})\|\cdot \sqrt{mK} \| Z_{{{S}}}\|_{\max} \\ 
	&\le \mathrm{err}_2 + \sqrt{mK}\|(\widehat{W}_{{{S}}} - W_{{{S}}})\| \cdot  \mathrm{err}_3
\end{align*}

So
\beq \label{equ:lemma3:main:eq3}
\|\widehat{W}_{{{S}}}'  Z_{{{S}}}\| \le   \mathrm{err}_2 + \sqrt{mK}\|(\widehat{W}_{{{S}}} - W_{{{S}}})\| \cdot  \mathrm{err}_3
\eeq

To sum up, the error rate of the above terms are all connected to $\|\widehat{W}_{{{S}}} - W_{{{S}}}\|$. We focus on the analysis of it in the next subsection.

\subsection{\texorpdfstring{Error rate of $\widehat{W}_{{{S}}}$}{Error rate of Ŵₛ}}

Recall that
$$ W_S = \diag(\Pi_Sb)^{-1}\Pi_S\diag(b)$$
$$ \widehat{W}_S = \diag(\Pi_S\hat{b})^{-1}\Pi_S\diag(\hat{b})$$

Therefore,
\begin{align} 
	\label{equ:w_hat_tot}   
	\|\widehat{W}_S - W_S\| &= \|\diag(\Pi_S\hat{b})^{-1}\Pi_S\diag(\hat{b}) - \diag(\Pi_Sb)^{-1}\Pi_S\diag(b)\| \notag \\
	&\le \|\diag(\Pi_S\hat{b})^{-1}\Pi_S\diag(\hat{b}) - \diag(\Pi_Sb)^{-1}\Pi_S\diag(\hat{b})\| \notag\\ &\qquad + \|\diag(\Pi_S b)^{-1}\Pi_S\diag(\hat{b}) - \diag(\Pi_S b)^{-1}\Pi_S\diag(b)\|\notag \\
	&= \|\diag(\Pi_S(\hat{b} - b))\diag(\Pi_S b)^{-1}\diag(\Pi_S\hat{b})^{-1}\Pi_S\diag(\hat{b}) \| \notag \\ & \qquad + \|\diag(\Pi_S b)^{-1}\Pi_S\diag(\hat{b} - b)\| \notag\\
	&\le \left(\max_{i \in S}|\pi_i' (\hat{b} - b)|\right) \cdot \left(\min_{i \in S}\pi_i' b\right)^{-1} \cdot \left(\min_{i \in S}\pi_i' \hat{b}\right)^{-1} \cdot \|\Pi_S\| \cdot \left(\max_k\hat{b}_k\right) \notag\\
	&\qquad + \left(\min_{i \in S}\pi_i' b\right)^{-1} \cdot \|\Pi_S\| \cdot \left(\max_k|\hat{b}_k - b_k|\right)
\end{align}

We analysis the terms in \eqref{equ:w_hat_tot} individually as follows. 
\begin{align} \label{equ:What:1}
	\max_{i \in S}|\pi_i' (\hat{b} - b)| &\le \max_{i \in S}\sum_{k \in [K]}\pi_{i}(k) |\hat{b}_k - b_k| \notag \\
	& \le \max_{i \in S}\sum_{k \in [K]}\pi_{i}(k) \|\hat{b}- b\| \notag \\
	&= \|\hat{b}- b\|
\end{align}

By Assumption~\ref{assum:bPiV}, 
\begin{align} \label{equ:What:2}
	\min_{i \in S}\pi_i' b &=  \min_{i \in S}\sum_{k \in [K]}\pi_i(k) b_k \notag \\
	&\ge \min_{i \in S}\left(\sum_{k \in [K]}\pi_i(k) \min_l{b_l}\right) \notag \\
	&= \min_l{b_l} \notag \\
	&\ge c_3 \max_l{b_l} \notag \\
	&\ge c_3 \frac{\|b\|}{\sqrt{K}}
\end{align}

Similarly,
\begin{align} 
	\min_{i \in S}\pi_i' \hat{b} &=  \min_{i \in S}\sum_{k \in [K]}\pi_i(k) \hat{b}_k \notag \\
	&\ge \min_{i \in S}\left(\sum_{k \in [K]}\pi_i(k) \min_l{\hat{b}_l}\right) \notag \\
	&= \min_l{\hat{b}_l} \notag \\
	&\ge \min_l{b_l} - \max_k{|\hat{b}_k - b_k|} \notag \\
	&\ge c_3 \max_l{b_l} - \|\hat{b} - b\| \notag \\
	&\ge c_3 \frac{\|b\|}{\sqrt{K}} - \|\hat{b} - b\|, \notag
\end{align}

where recall that $\|b\| = 1$ as in Appendix~\ref{sec:wlog_nor}. Hence when $\|\hat{b} - b\| / \|b\| < 0.5 c_3 / \sqrt{K}$,
\beq \label{equ:What:3}
\min_{i \in S}\pi_i' \hat{b} \ge 0.5c_3 \frac{\|b\|}{\sqrt{K}} =  \frac{0.5c_3}{\sqrt{K}} 
\eeq

By Assumption~\ref{assum:bPiV}, 
\begin{align} \label{equ:What:4}
	\|\Pi_S\| &\le \frac{1}{c_2}\lambda_{\min}(\Pi_S) \notag \\
	&= \frac{1}{c_2} \min_{x \in \mathbb{R}^K}\frac{\|\Pi_S x\|}{\|x\|} \notag \\
	&\le \frac{1}{c_2} \frac{\|\Pi_S \bm{1}_K\|}{\|\bm{1}_K\|} \notag \\
	&=  \frac{1}{c_2} \frac{\|\bm{1}_N\|}{\|\bm{1}_K\|} \notag \\
	&= \frac{1}{c_2} \sqrt{\frac{N}{K}}
\end{align}

By Assumption~\ref{assum:bPiV}, when $\|\hat{b} - b\| / \|b\| < 0.5 c_3 / \sqrt{K}$,
\begin{align} \label{equ:What:5}
	\max_k\hat{b}_k &\le \max_k b_k + \|\hat{b} - b\| \notag \\
	& \le \frac{1}{c_3} \min_{k}b_k  + 0.5 c_3 \frac{\|b\|}{\sqrt{K}} \notag \\
	& \le \frac{1}{c_3} \frac{\|b\|}{\sqrt{K}}   + 0.5 c_3\frac{\|b\|}{\sqrt{K}} \notag \\
	&= \left(\frac{1}{c_3} + 0.5 c_3\right) \frac{1}{\sqrt{K}}
\end{align}

Also it is clear that
\beq \label{equ:What:6}
\max_k|\hat{b}_k - b_k| \le \|\hat{b} - b\|
\eeq

Substituting \eqref{equ:What:1}, \eqref{equ:What:2}, \eqref{equ:What:3}, \eqref{equ:What:4}, \eqref{equ:What:5}, and \eqref{equ:What:6} into \eqref{equ:w_hat_tot}, we have when $\|\hat{b} - b\| / \|b\| < 0.5 c_3 / \sqrt{K}$,
\begin{align} \label{equ:w_err}
	\|\widehat{W}_S - W_S\| \le  \frac{2}{c_2c_3}\left(1 + \frac{1}{c_3^2}\right)\sqrt{N} \frac{\|\hat{b} - b\|}{\|b\|}
\end{align}

Denote $U_0 = \Pi_{{{S}}}'  \diag(H\alpha) R_{{{S}}}$, $U = \Pi_{{{S}}}\diag(H\alpha)  X_{{{S}}}$. Recall that by Appendix~\ref{sec:wlog_nor}.  $\|b\| = \|\hat{b}\| = 1$ and $b' \hat{b} \ge 0$. Hence by Davis-Kahan $\sin \Theta$ theorem \citep[Theorem 4]{yu2015useful}, 
\begin{align} \label{equ:sin_theta_chp3}
	\frac{\|\hat{b} - b\|}{\|b\|} = \|\hat{b} - b\| \le \sqrt{2} |\sin(\hat{b}, b)| &\le \frac{2^{1.5}(2\|U_0\| + \|U - U_0\|)\|U - U_0\|_{\cal F}}{\lambda_{K-1}(U_0)^2} \notag \\
	&\le  \frac{2^{1.5}(2\|U_0\| + \mathrm{err}_1)\mathrm{err}_1}{\lambda_{K-1}(U_0)^2}
\end{align}




Combining  \eqref{equ:w_err} and \eqref{equ:sin_theta_chp3}, we have  when $\mathrm{err}_1 < \min\{\|U_0\|, \frac{c_3}{2^{2.5}\cdot 3\cdot\sqrt{K}} \frac{\lambda_{K - 1}(U_0)^2}{\|U_0\|}\}$
\beq \label{equ:lemma3:main:eq4}
\|\widehat{W}_{{{S}}} - W_{{{S}}}\| \le \frac{2^{3.5}}{c_2c_3}\left(1 + \frac{1}{c_3^2}\right) \sqrt{N} \cdot \frac{\|U_0\|}{\lambda_{K-1}(U_0)} \cdot \frac{\mathrm{err}_1}{\lambda_{K-1}(U_0)}.
\eeq
It remains to evaluate $\|U_0\|$ and $\lambda_{K-1}(U_0)$, which is addressed in the next subsection.

\subsection{\texorpdfstring{Evaluation of $\|U_0\|$ and $\lambda_{K-1}(U_0)$}{Evaluation of ∥U₀∥ and λₖ₋₁(U₀)}}

With Assumption~\ref{assum:bPiV}, we have $ \min_i |(\Pi_{{{S}}} b)_i| \ge c \max_i|(\Pi_{{{S}}} b)_i|$, hence the conditional number of $\diag(\Pi_{{{S}}} b)$ is bounded.

Note that
\begin{align*}
	W_{{{S}}}'  \diag(H\alpha) W_{{{S}}} \bm{1}_K = W_{{{S}}}'  \diag(H\alpha) \bm{1}_{N} = W_{{{S}}}'  H\alpha = 0
\end{align*}

Therefore, for any $\gamma \in \mathbb{R}^{K}$,
$$ U_0 =  \Pi_{{{S}}}'  \diag(H\alpha) W_{{{S}}} (V + \frac{1}{\sqrt{K}}\bm{1}_K \gamma' ) $$

So,
\begin{align} 
	\|U_0\| &\le   \| \Pi_{{{S}}}'  \diag(H\alpha) W_{{{S}}} \| \cdot \|V\| \label{equ:Umax} \\
	\lambda_{K - 1}(U_0) &\ge  \lambda_{K - 1}(\Pi_{{{S}}}'  \diag(H\alpha) W_{{{S}}}) \cdot \max_{\gamma} \lambda_{\min}(V +\frac{1}{\sqrt{K}}\bm{1}_K \gamma' ) \label{equ:Umin}
\end{align}

From Lemma~\ref{lemma:V:tran} and Assumption~\ref{assum:bPiV}, we have
\beq \label{equ:V:2}
\max_{\gamma} \lambda_{\min}\left(V +\frac{1}{\sqrt{K}} \bm{1}_K \gamma' \right)   \ge \sqrt{0.5}\lambda_{K - 1}(V)   \ge \sqrt{0.5}c_1 \|V\|
\eeq

Plugging \eqref{equ:V:2} into \eqref{equ:Umin}, we have
\beq \label{equ:Ufinal}
{\lambda_{K - 1}(U_0)} \ge \sqrt{0.5}c_1 \|V\|{\lambda_{K - 1}(\Pi_{{{S}}}'  \diag(H\alpha) W_{{{S}}})}
\eeq

It remains to evaluate $\|\Pi_{{{S}}}'  \diag(H\alpha) W_{{{S}}}\|$ and $\lambda_{K - 1}(\Pi_{{{S}}}'  \diag(H\alpha) W_{{{S}}})$. We start with $\|\Pi_{{{S}}}'  \diag(H\alpha) W_{{{S}}}\|$. By \eqref{model2}, we have
$$ W_S = \diag(\Pi_S \cdot b)^{-1} \Pi_S \diag(b).$$
Therefore,
\begin{align*}
	\|\Pi_{{{S}}}'  \diag(H\alpha) W_{{{S}}}\| &= \|\Pi_{{{S}}}'  \diag(H\alpha) \diag(\Pi_S \cdot b)^{-1} \Pi_S \diag(b)\| \\
	&\le  \|\Pi_{{{S}}}'  \diag(H\alpha) \diag(\Pi_S \cdot b)^{-1} \Pi_S\| \|\diag(b)\| \\
	&= \|\sum_{i \in S} (H\alpha)_i / (\pi'_ib) \cdot \pi_i\pi_i'\| (\max_{k \in [K]} b_k) \\
	&\le \left(\max_{\|x\| = 1}\sum_{i \in S} (H\alpha)_i / (\pi'_ib) \cdot (\pi_i' x)^2\right) \left(\frac{\min_{k \in [K]} b_k}{c_3}\right) \\
	&\le \left(\sum_{i \in S} |(H\alpha)_i|\right) / \left(  \min_{i}\pi'_ib \right)\cdot \left( \max_{i \in S, \|x\| = 1}(\pi_i' x)^2 \right) \left(\frac{\min_{k \in [K]} b_k}{c_3}\right) \\
	&\le \left(\sqrt{N} \|H\alpha\|\right) / \left(  \min_{i}\pi'_i(\bm{1}_K \cdot \min_{k \in [K]} b_k)\ \right)\cdot \left( \max_{i \in S, \|x\| = 1}(\pi_i' x)^2 \right) \left(\frac{\min_{k \in [K]} b_k}{c_3}\right) \\
	&\le \sqrt{N}\|\alpha\| / \left(\min_{k \in [K]} b_k\right) \cdot \left(\max_{i \in S}\|\pi\|^2\right) \left(\frac{\min_{k \in [K]} b_k}{c_3}\right) \\
	&\le \frac{\sqrt{N}}{c_3} \left(\max_{i \in S}\|\pi\|^2_1\right) \\
	&= \frac{\sqrt{N}}{c_3}
\end{align*}
where we leverage the fact that $H$ is projection matrix, so $\|H\alpha\| \le \|\alpha\|$. Plugging into \eqref{equ:Umax}, we have

\beq \label{equ:Umax:conclude}
\|U_0\| \le \frac{1}{c_3} \sqrt{N}\|V\|
\eeq

We then analyze $\lambda_{K - 1}(\Pi_{{{S}}}'  \diag(H\alpha) W_{{{S}}})$. Recall that by \eqref{model2},
$$ \Pi_S = \diag(W_S \cdot b^{-1})^{-1} W_S \diag(b^{-1}).$$
Hence
\begin{align*}
	&\lambda_{K - 1}\left(\Pi_{{{S}}}'  \diag(H\alpha) W_{{{S}}}\right) \\ 
	&\quad= \lambda_{K - 1}\left( (\diag(W_S \cdot b^{-1}) W_S \diag(b^{-1}))'  \diag(H\alpha)  W_{{{S}}}\right) \\
	&\quad= \lambda_{K - 1}\left(\diag(b^{-1})  W_S ' \diag(W_S \cdot b^{-1}) \diag(H\alpha)  W_{{{S}}}\right) \\
	&\quad\ge \frac{1}{\max_k{b_k}} \lambda_{K - 1}\left(  W_S ' \diag(W_S \cdot b^{-1}) \diag(H\alpha)  W_{{{S}}}\right) \\ 
	(\mbox{Assumption~\ref{assum:bPiV}})&\quad\ge \frac{c_3 }{\min_k{b_k}}\lambda_{K - 1}\left(  \sum_{i\in S}\frac{(H\alpha)_i}{W_i' \cdot b^{-1}}w_i w_i'\right) \\
	&\quad\ge \frac{c_3 \sqrt{K}}{ \|b\|} \lambda_{K - 1}\left( \sum_{i\in S}\frac{(H\alpha)_i}{W_i' \cdot b^{-1}}(w_i - \bar{w}_*) (w_i - \bar{w}_*)' - N \bar{w}_* \bar{w}_*' \sum_{i\in S}\frac{(H\alpha)_i}{W_i' \cdot b^{-1}} \right)
\end{align*}

Note that
\begin{align*}
	\sum_{i\in S}\frac{(H\alpha)_i}{W_i' \cdot b^{-1}} = \sum_{i\in S}{(H\alpha)_i}(\Pi_S b)_i
	= (H\alpha)' \Pi_S b 
	= \alpha' H' \Pi_S b = 0
\end{align*}

Hence
\begin{align*}
	\lambda_{K - 1}\left(\Pi_{{{S}}}'  \diag(H\alpha) W_{{{S}}}\right) &\ge \frac{c_3\sqrt{K}}{ \|b\|}\lambda_{K - 1}\left( \sum_{i\in S}\frac{(H\alpha)_i}{W_i' \cdot b^{-1}}(w_i - \bar{w}_*) (w_i - \bar{w}_*)' \right) \\
	(\mbox{Recall }\|b\| = 1)&= c_3\sqrt{K}\lambda_{K - 1}\left( \sum_{i\in S}(H\alpha)_i(\pi_i' \cdot b)(w_i - \bar{w}_*) (w_i - \bar{w}_*)' \right) \\
	&= c_3 \sqrt{K} \lambda_{K - 1}\left(N\Sigma(\alpha)\right) = c_3 \sqrt{K} N\lambda_{K - 1}\left(\Sigma(\alpha)\right),
\end{align*}
where we use the equality $(W_i' \cdot b^{-1})^{-1} = \pi_i' \cdot b$ as shown in \eqref{equ:w_b_inverse}.
Note that $\Sigma(\alpha) \bm{1}_K = \sum_{i\in S}\frac{(H\alpha)_i}{W_i' \cdot b^{-1}}(w_i - \bar{w}_*) (w_i - \bar{w}_*)' \bm{1}_K = 0$. Additionally,  by definition of $P$, 
\begin{align*}
	P' \bm{1}_K &= \begin{pmatrix}
		I_{K - 1}-\frac{1}{K}\, \bm{1}_{K - 1}\, \bm{1}_{K - 1}'
		&
		-\frac{1}{K}\,\bm{1}_{K - 1}^{\top}
	\end{pmatrix} \cdot 
	\begin{pmatrix}
		\bm{1}_{K - 1} \\
		1
	\end{pmatrix} 
	\\
	&= \bm{1}_{K - 1} - \frac{1}{K}(\bm{1}_{K - 1}' \bm{1}_{K - 1})\bm{1}_{K - 1} - \frac{1}{K} \bm{1}_{K - 1} = 0.
\end{align*}
Additionally, 
\begin{align*}
	P' P &= (I_{K - 1}-\frac{1}{K} \bm{1}_{K - 1}\bm{1}_{K - 1}')(I_{K - 1}-\frac{1}{K} \bm{1}_{K - 1}\bm{1}_{K - 1}')' + \left(-\frac{1}{K}\bm{1}_{K - 1}^{\top}\right)\left(-\frac{1}{K}\bm{1}_{K - 1}^{\top}\right)' \\
	&= I_{K - 1} -\frac{2}{K^2}\bm{1}_{K - 1}\bm{1}_{K - 1}'.
\end{align*}
So the singular values of $P$ are either $1$ or $\sqrt{1 - \frac{2}{K^2}}$. This, together with $P' \bm{1}_K = 0$, indicates that $P$ is full-ranked, $\|P\| = 1$, and the column space of $P$ is exactly the orthogonal space of $\bm{1}_K$.
Therefore, by Assumption~\ref{assum:PiS},
\begin{align*}
	\lambda_{K - 1}\left(\Sigma(\alpha)\right) &= \min_{x' \bm{1}_K = 0} \frac{\|\Sigma(\alpha)x\|}{\|x\|} \\
	&= \min_{x = Py} \frac{\|\Sigma(\alpha)x\|}{\|x\|} \\
	&= \min_{y} \frac{\|\Sigma(\alpha)Py\|}{\|Py\|} \\
	&\ge \frac{|\lambda_{\min}(\Sigma(\alpha)P)|}{\|P\|} 
	\\
	(\mbox{By Assumption~\ref{assum:PiS}} )&\ge c_4\frac{\|\alpha\| \|b\|}{\|P\|\sqrt{Km}} 
\end{align*}
Consequently, recall that  $\|P\| = 1$, and by Appendix~\ref{sec:wlog_nor}, $\|\alpha\| = \|b\| = 1$, hence
$$ \lambda_{K - 1}\left(\Pi_{{{S}}}'  \diag(H\alpha) W_{{{S}}}\right) \ge c_3 \sqrt{K} N  \lambda_{K - 1}\left(\Sigma(\alpha)\right) \ge c_3 c_4 \sqrt{N} \frac{\|\alpha\|}{\|P\|}  = c_3 c_4 \sqrt{N}$$

Plugging the above term into \eqref{equ:Ufinal}, by Assumption~\ref{assum:bPiV}, we obtain
\beq \label{equ:lemma3:main:eq5}
{\lambda_{K - 1}(U_0)} \ge \sqrt{0.5}c_1c_3 c_4  \|V\| \sqrt{N} \ge \sqrt{0.5}c_1^2c_3 c_4  \sqrt{N}
\eeq



Combining \eqref{equ:lemma3:main:eq1}, \eqref{equ:lemma3:main:eq2}, \eqref{equ:lemma3:main:eq3}, \eqref{equ:lemma3:main:eq4}, \eqref{equ:lemma3:main:eq5}, and plugging  them back into \eqref{equ:Vhat}, we have that for some contact $C_5 >0 $ depending on $c_1, \ldots c_4$, when $\mathrm{err}_1 \ge \min\{\frac{c_1^3c_2^2c_3^5c_4^2}{2^{5.5}(1 + c_3^{-2}) }\sqrt{\frac{N}{K}}, \frac{c_1^3c_3^4c_4^2}{2^{3.5} \cdot 3}\sqrt{\frac{N}{K}}, \sqrt{0.5}c_1^2c_3 c_4  \sqrt{N}\}$ 
\begin{equation}
	\frac{\| \widehat{V} - V\|}{\|V\|} \le   C_5 K\left[\frac{1}{\sqrt{N}}\mathrm{err}_1\left(1 + \mathrm{err}_3\right) + \frac{1}{N}\mathrm{err}_1^2 + \frac{1}{N} \mathrm{err}_2 \right],
\end{equation}
Choosing $c_6 = \min\{\frac{c_1^3c_2^2c_3^5c_4^2}{2^{5.5}(1 + c_3^{-2}) }, \frac{c_1^3c_3^4c_4^2}{2^{3.5} \cdot 3}, \sqrt{0.5}c_1^2c_3 c_4 \}$, we conclude the proof. \qed

\section{Proof of Lemma~\ref{lemma:err}}
\label{sec:er_proof}
We first focus on $\mathrm{err}_1$ and $\mathrm{err}_2$. Denote $W_S^{(\alpha)} = \diag(H\alpha)W_S$,
$Z_S = X_S - R_s$. Then $\mathrm{err}_1 = \|(W_S^{(\alpha)})' Z_S\|_{\cal F}$, $\mathrm{err}_2 = \|W_S' Z_S\|_{\cal F}$. To evaluate $\mathrm{err}_1$ and $\mathrm{err}_2$, we first derive the error bond for a more general expression, $\|B' Z_S\|_{\cal F}$, where $B$ is an arbitrary $N$ by $p$ constant matrix. 

Let $Z_1, \ldots Z_K$ to be the $N$ dimensional column vectors of $Z$. Let $mK$ dimensional vector $\mathrm{vec}(Z)$ to be the concatenation of $Z_1, \ldots Z_K$, so $\mathrm{vec}(Z)' = (Z_1', \ldots, Z_K')$. Then we have $\|B' Z_S\|_{\cal F}^2 = \mathrm{vec}(Z)' (I_K \otimes (B' B)) \mathrm{vec}(Z)$, where $\otimes$ denotes the Kronecker product. According to Assumption~\ref{assum:sigma}, entires of $\mathrm{vec}(Z)$ are all independent and are   sub-Gaussian random variables with scale parameter $\sigma_n$. Therefore, by Hanson--Wright inequality \citep[Theorem 1.1]{2013arXiv1306.2872R}, there exists an absolute constant $c_8$ not depending on any constants in our paper, such that for any $t > 0$,
\begin{align}
	&\mathbb{P}\left(|\|B' Z_S\|_{\cal F}^2 - \mathbb{E}\left[\|B' Z_S\|_{\cal F}^2\right]| > t \right)  \notag \\
	&\quad =\mathbb{P}\left(|\mathrm{vec}(Z)' (I_K \otimes (B' B)) \mathrm{vec}(Z) - \mathbb{E}\left[\mathrm{vec}(Z)' (I_K \otimes (B' B)) \mathrm{vec}(Z)\right]| > t \right)  \notag \\
	& \quad \le2 \exp\left(-c_8\min\left\{\frac{t^2}{\sigma_n^4 \|I_K \otimes (B' B)\|_{\cal F}^2}, \frac{t}{\sigma_n^2 \|I_K \otimes (B' B)\|}\right\}\right)
\end{align}

Since $$\|I_K \otimes (B' B)\| \le \|I_K \otimes (B' B)\|_{\cal F} = \sqrt{K}\|B' B\|_{\cal F} \le \sqrt{K}\|B\|_{\cal F}^2,$$ 
choose $t = \max\{c_8^{-1}, c_8^{-0.5}\} \sigma_n^2\log(n) \sqrt{K}\|B\|_{\cal F}^2$, we have
\begin{align} \label{equ:HW:dev:F_norm}
	\mathbb{P}\left(|\|B' Z_S\|_{\cal F}^2 - \mathbb{E}\left[\|B' Z_S\|_{\cal F}^2\right]| > \max\{c_8^{-1}, c_8^{-0.5}\} \sigma_n^2 \log(n) \sqrt{K}\|B\|_{\cal F}^2 \right) \le \frac{2}{n} 
\end{align}

Note that
\begin{align} \label{equ:mean:F_norm}
	\mathbb{E}\left[\|B' Z_S\|_{\cal F}^2\right] & = \mathbb{E}\left[\mathrm{vec}(Z)' (I_K \otimes (B' B)) \mathrm{vec}(Z)\right] \notag \\
	&= \mathbb{E}\left[\mathrm{tr}\left(\mathrm{vec}(Z)' (I_K \otimes (B' B)) \mathrm{vec}(Z)\right)\right] \notag \\
	&= \mathbb{E}\left[\mathrm{tr}\left( (I_K \otimes (B' B)) \mathrm{vec}(Z)\mathrm{vec}(Z)'\right)\right] \notag \\
	&= \mathrm{tr}\left( (I_K \otimes (B' B)) \mathbb{E}\left[\mathrm{vec}(Z)\mathrm{vec}(Z)'\right]\right) \notag \\
	&= \mathrm{tr}\left((I_K \otimes (B' B)) \diag\left(\mathbb{E}\left[\mathrm{vec}(Z) \circ \mathrm{vec}(Z)\right]\right)\right) \notag \\
	(\mbox{Assumption~\ref{assum:sigma}})&\le \sigma_n^2\mathrm{tr}\left( I_K \otimes (B' B)\right) \notag \\
	&= \sigma_n^2 K \|B\|_{\cal F}
\end{align}

Plugging the above \eqref{equ:mean:F_norm} into \eqref{equ:HW:dev:F_norm}, we have that with probability at least $1 - 2 / n$, 
\beq \label{equ:HW:final}
\|B' Z_S\|_{\cal F} \le \sqrt{\sigma_n^2 K \|B\|_{\cal F} +  \max\{c_8^{-1}, c_8^{-0.5}\} \sigma_n^2 \log(n) \sqrt{K}\|B\|_{\cal F}^2} 
\eeq

With \eqref{equ:HW:final}, it remains to evaluate the corresponding $\|B\|_{\cal F}$ for $\mathrm{err}_1$  and $\mathrm{err}_2$ respectively. 

\subsection{\texorpdfstring{Analysis of $\mathrm{err}_1$}{Analysis of err₁}} \label{subsec:err_1}
In this section, to evaluate $\mathrm{err}_1$, we choose $B = W_S^{(\alpha)} = \diag(H\alpha)W_S$, and have
\begin{align*}
	\|B\|_{\cal F}^2 &= \|\diag(H\alpha)W_S\|_{\cal F}^2 \\
	&= \sum_{i \in S, k \in [K]}(H\alpha)_i^2w_{i}(k)^2 \\
	&= \sum_{i \in [S]} (H\alpha)_i^2 \sum_{k \in [K]}w_{i}(k)^2 \\
	&\le  \sum_{i \in [S]} (H\alpha)_i^2 \sum_{k \in [K]}w_{i}(k) \\
	&= \sum_{i \in [S]} (H\alpha)_i^2 \\
	&= \|H\alpha\| \le \|\alpha\| \le 1,
\end{align*}
where the last line is because $H$ is a projection matrix. Therefore,
\beq \label{equ:err1_norm}
\|B\|_{\cal F} = \|W_S^{(\alpha)}\|_{\cal F} \le 1
\eeq

Substituting \eqref{equ:err1_norm} into \eqref{equ:HW:final}, we have with probability at least $1 - 2 / n$,
\beq
\mathrm{err}_1 = \|B' Z_S\|_{\cal F} \le \sigma_n\sqrt{ K +  \max\{c_8^{-1}, c_8^{-0.5}\} \log(n) \sqrt{K}}
\eeq

Therefore, choosing $C_{\mathrm{err}} = \max\left\{\sqrt{6}, \sqrt{1 + c_8^{-1}}, \sqrt{1 +c_8^{-0.5}}\right\}$ (here we make $C_{\mathrm{err}} \ge \sqrt{6}$ for the convenience of evaluating $\mathrm{err}_3$ as in Appendix~\ref{subsec:err_3}), we have that with probability at least $1 - 2 / n = 1 - O(1 / n)$,
\beq
\mathrm{err}_1  \le \sigma_n\sqrt{K + C_{\mathrm{err}}^2 \sqrt{K}\log(n)}
\eeq

\subsection{\texorpdfstring{Analysis of $\mathrm{err}_2$}{Analysis of err₂}} \label{subsec:err_2}
In this section, to evaluate $\mathrm{err}_2$, we choose $B = W_S$, and have
\begin{align*}
	\|B\|_{\cal F}^2 &= \|W_S\|_{\cal F}^2 \\
	&= \sum_{i \in [S]}  \sum_{k \in [K]}w_{i}(k)^2 \\
	&\le  \sum_{i \in [S]}  \sum_{k \in [K]}w_{i}(k) \\
	&= \sum_{i \in [S]} 1\\
	&= N,
\end{align*}
Therefore,
\beq \label{equ:err2_norm}
\|B\|_{\cal F} = \|W_S\|_{\cal F} \le \sqrt{N}
\eeq

Substituting \eqref{equ:err2_norm} into \eqref{equ:HW:final}, we have with probability at least $1 - 2 / n$,
\beq
\mathrm{err}_2 = \|B' Z_S\|_{\cal F} \le \sigma_n\sqrt{K\sqrt{N} +  \max\{c_8^{-1}, c_8^{-0.5}\} \log(n) \sqrt{K}N}
\eeq

Since $N \ge K$, we have $K\sqrt{N} \le \log(n) \sqrt{K}N$. Hence,
\begin{align}
	\mathrm{err}_2 & \le \sigma_n\sqrt{\left(1 + \max\{c_8^{-1}, c_8^{-0.5}\}\right) \log(n) \sqrt{K}N} \notag \\
	&=  \sigma_n\sqrt{\max\{1 + c_8^{-1}, 1 + c_8^{-0.5}\}  N\sqrt{K}\log(n)}
\end{align}
Therefore, recalling that $C_{\mathrm{err}} = \max\left\{\sqrt{6}, \sqrt{1 + c_8^{-1}}, \sqrt{1 +c_8^{-0.5}}\right\}$ (here we make $C_{\mathrm{err}} \ge \sqrt{6}$ for the convenience of evaluating $\mathrm{err}_3$ as in Appendix~\ref{subsec:err_3}), we have that with probability at least $1 - 2 / n = 1 - O(1 / n)$,
\beq
\mathrm{err}_2  \le  C_{\mathrm{err}} \sigma_n\sqrt{N \sqrt{K}\log(n)}
\eeq

\subsection{\texorpdfstring{Analysis of $\mathrm{err}_3$}{Analysis of err₃}} \label{subsec:err_3}
In this section, we evaluate $\mathrm{err}_3$.
By Markov inequality, for any $i \in S$, $k \in [K]$, $t > 0$, $C \in \mathbb{R}$
\begin{align*}
	\mathbb{P}\left(e_i'(X_S - R_S)e_k >  C \right) \le \frac{\mathbb{E}[\exp(t \cdot Z_{ik})]}{\exp(t C)}
	\le \frac{\exp(t^2 \sigma_n^2 / 2)}{\exp(t C)}
\end{align*} 

Choose $t = C / (\sigma_{max}^2)$, we have
$$\mathbb{P}\left(e_i'(X_S - R_S)e_k >  C \right) \le \exp(-\frac{C^2}{2\sigma_n^2})$$

Hence, choose $C = \sigma_n\sqrt{6\log(n)} $, we have
$$\mathbb{P}\left(e_i'(X_S - R_S)e_k >  \sigma_n\sqrt{6\log(n)} \right) \le \exp(-\frac{C^2}{2\sigma_n^2}) = \frac{1}{n^3}$$

Similarly, we have
$$\mathbb{P}\left(-e_i'(X_S - R_S)e_k >\sigma_n\sqrt{6\log(n)}\right) \le \frac{1}{n^3}$$

Therefore,
\begin{align*}
	\mathbb{P}\left(\|(X_S - R_S)\|_{\max} >  K\sigma_n\sqrt{6\log(n)}\right) 
	&\le \sum_{i \in S, k \in [K]} \mathbb{P}\left(e_i'(X_S - R_S)e_k > \sigma_n\sqrt{6\log(n)}\right) \\& \qquad + \sum_{i \in S, k \in [K]} \mathbb{P}\left(-e_i'(X_S - R_S)e_k > \sigma_n\sqrt{6\log(n)}\right) \\
	&\le \frac{2K}{n^2}
\end{align*}

Since $C_{\mathrm{err}} \ge \sqrt{6}$, we have that with probability at least $1 - 2K / n^2 \ge 1 - 2/ n = 1 - O(1 / n)$, $$\mathrm{err}_3 = \|(X_S - R_S)\|_{\max} \le  C_{\mathrm{err}}\sigma_n\sqrt{\log(n)}.$$

\subsection{\texorpdfstring{A Rough Estimate of $C_{\mathrm{err}}$}{A Rough Estimate of Cₑᵣᵣ}}
\label{subsec:c_err}
According to \cite{moshksar2024absoluteconstanthansonwrightinequality}, the absolute constant in Hanson--Wright inequality $c_8$ is approximately lower bounded by $0.1457$. Hence, $ C_{\mathrm{err}} = \max\left\{\sqrt{6}, \sqrt{1 + c_8^{-1}}, \sqrt{1 +c_8^{-0.5}}\right\}$ can be approximately upper bounded by 
$$\max\left\{\sqrt{6}, \sqrt{1 + {0.1457}^{-1}}, \sqrt{1 + {0.1457}^{-0.5}}\right\} \approx 2.8042.$$

\section{Proof of Theorem~\ref{thm:main} and Lemma~\ref{lemma:ideal}}
\label{sec:chp3_main_proof}
Theorem~\ref{thm:main} is a direct result of combining Lemma~\ref{lemma:finite} and Lemma~\ref{lemma:err}. Because $\frac{N}{n} \ge \frac{\sigma_n^2( K^2 + C_{\mathrm{err}}^2K^{1.5}\log(n))}{c_6^2 n}$, we have 
$$\sigma_n \sqrt{K + C_{\mathrm{err}}^2\sqrt{K} \log(n)} \le c_6 \frac{N}{k}$$
By Lemma~\ref{lemma:err},
$$\mathrm{err}_1 \le \sigma_n \sqrt{K + C_{\mathrm{err}}^2\sqrt{K} \log(n)}$$
So $\mathrm{err}_1 \le c_6 \frac{N}{k}$. Therefore, the condition for Lemma~\ref{lemma:finite} is satisfied. Plugging the result of Lemma~\ref{lemma:err} into Lemma~\ref{lemma:finite}, we have that for some constant $\tilde{C}_5 >0$ only depending on $c_1, \ldots c_4$, and $C_{\mathrm{err}}$, with probability at least $1 - O(1 / n)$, 
\begin{equation}
	\frac{\| \widehat{V} - V\|}{\|V\|} \le   \tilde{C}_5 \frac{K^{1.25} \sigma_n\sqrt{\log(n)}\sqrt{\sqrt{K} + \log(n)}}{\sqrt{N}}. 
\end{equation} 

When $b$ is ideal, with the same derivation as in Section~\ref{subsec:WWT}, \eqref{equ:Vhat} becomes
\begin{equation} \label{equ:Vhat2}
	\|	\widehat{V}^* - V\|  \le  \|(W_{{{S}}}'  W_{{{S}}})^{-1}\| \cdot \|W_{{{S}}}'  Z_{{{S}}}\|  = |\lambda_{\min}(W_{{{S}}})|  \cdot \mathrm{err}_2
\end{equation}
From \eqref{equ:lambda_min_W}, we have
$$ \lambda_{\min}(W_{{{S}}}) \ge c_2c_3\sqrt{\frac{N}{K}}$$
Hence, plugging in the result of Lemma~\ref{lemma:err}, we obtain that for constant $\tilde{C}_5 >0$ only depending on $c_1, \ldots c_4$, and $C_{\mathrm{err}}$, with probability at least $1 - O(1 / n)$, 
\begin{equation}
	\frac{\| \widehat{V}^* - V\|}{\|V\|} \le   \tilde{C}_5 \frac{K^{1.25} \sigma_n\sqrt{\log(n)}}{\sqrt{N}}. \end{equation} .
	
\section{Proof of Theorem~\ref{thm:MME}	and Theorem~\ref{thm:semi-topic}} \label{sec:appendix:thm-mme-tm}
\subsection{Proof of Theorem~\ref{thm:MME}}
By Definition~\ref{def:DCMM}, under DCMM model, 
$\mathbb{P}(A_{ij}=1)=\theta_i\theta_j\cdot \pi_i'P\pi_j$.
Therefore, if we define $\Theta = \diag(\theta_1, \ldots \theta_n)$, $\Pi = (\pi_1, \ldots, \pi_n)^T$, then
$$ \Omega = \mathbb{E}[A] = \Theta \Pi P \Pi^T \Theta.$$

This is called the matrix form of DCMM \citep{jin2023mixed}. Recall $r_i = U'\Omega e_i/(\eta'U'\Omega e_i)$, for $1\leq i\leq n$. We have
\begin{align*}
    r_i = &U'\Omega e_i/(\eta'U'\Omega e_i) \\
    &= U'\Theta \Pi P \Pi^T \Theta e_i/(\eta'U'\Theta \Pi P \Pi^T \Theta e_i) \\
    &= U'\Theta \Pi P \Pi^T (\theta_i e_i)/(\eta'U'\Theta \Pi P \Pi^T (\theta_i e_i)) \\
    &= U'\Theta \Pi P \pi_i \theta_i/(\eta'U'\Theta \Pi P \Pi^T \pi \theta_i) \\
    &= U'\Theta \Pi P \pi_i/(\eta'U'\Theta \Pi P \Pi^T \pi)
\end{align*}

Denote $B = \Pi' \Theta U \in \mathbb{R}^{K \times K}$, $b = \Pi' \Theta U \eta \in \mathbb{R}^K$, $V = \diag(b)^{-1} B$, then 
\begin{align*}
    r_i &= B'\pi_i / (b'\pi_i) \\
    &= V' \diag(b)\pi_i / (b'\pi_i) \\
    &= V' \cdot (b\circ \pi_i)/\|b\circ \pi_i\|_1 
\end{align*}
Let $w_i = (b\circ \pi_i)/\|b\circ \pi_i\|_1$, and let $v_1, \ldots, v_K \in \mathbb{R}^K$ be the row vectors of $V$. Then  $r_i=\sum_{k=1}^Kw_i(k)v_k$, with $w_i=(b\circ \pi_i)/\|b\circ \pi_i\|_1$. Furthermore, since $\eta'U'\Omega e_i>0$ for all $i$, we obtain 
\begin{align*}
    0 &< \eta'U'\Omega e_i \\
    &= \eta'U'\Theta \Pi P \Pi' \Theta e_i \\
    &= \eta'U'\Theta \Pi P \Pi' (\theta_i e_i) \\
    &=  \eta'U'\Theta \Pi P \pi_i \cdot \theta_i \\
    &= \theta_i b'\pi_i.
\end{align*}
By Definition~\ref{def:DCMM}, $\theta_i >0$. Therefore, $b'\theta > 0$ for all $i$, which indicates that $b$ is a positive vector. 
This concludes the proof for part (a).

Because part (a) is true, if we plug $(K, \Omega, \Pi_S)$ into Algorithm~\ref{alg:MME}, we have $x_i = r_i$ for all $i$ and hence the SSVH algorithm in step 2 provides perfectly estimations $\widehat{V} = V$ and $\hat{b} = b$. Hence in step 3, $\widehat{B}=\diag(\hat{b})\widehat{V} = \diag(b)V = B$. Consequently, before the re-normalization step,
\begin{align*}
    \widetilde{\pi}_i &= e'_i A U \widehat{B}' (\widehat{B}\widehat{B}') ^{-1} \\
    &= e'_i \Omega U B' (BB') ^{-1} \\
    &= e'_i (\Theta \Pi P \Pi' \Theta) U B' (BB') ^{-1} \\
    &= \theta_i \pi_i' (P \Pi' \Theta U) B' (BB') ^{-1} \\
    &= \theta_i \pi_i' BB' B' (BB') ^{-1} \\
    & = \theta \pi_i'.
\end{align*}
Therefore, after re-normalizing $\widetilde{\pi}_i$ to have a unit $\ell^1$-norm ($\theta \pi_i'$ is a positive vector, so setting the negative entries in $\widetilde{\pi}_k$ to zero has no influence on it), we have $\hat{\pi}_i=\pi_i$ for all $i\notin S$. This concludes the proof of part (b). 

In all, we have proved Theorem~\ref{thm:MME}. \qed

\subsection{Proof of Theorem~\ref{thm:semi-topic}}
The approach to proving Theorem~\ref{thm:semi-topic} is similar to Theorem~\ref{thm:MME}. Denote $\theta_j = \|a_j\|_1 = \|A e_j\|_1$. Since $D_0 = A \Gamma$, if we plug $(K, D_0, A^*_S)$ into Algorithm~\ref{alg:TM}, then in step 1,
\begin{align*}
    x_j &= U'D'e_j/(\eta'U'D'e_j) \\
    &= U'D_0'e_j/(\eta'U'D_0'e_j) \\
    &= U'\Gamma'A'e_j/(\eta'U'\Gamma'A'e_j) \\
    &= U'\Gamma'a_j/(\eta'U'\Gamma'a_j) \\
    &= U'\Gamma'a_j^* \theta_j/(\eta'U'\Gamma'a_j^*\theta_j) \\
    &= U'\Gamma'a_j^* /(\eta'U'\Gamma'a_j^*)
\end{align*}

Denote $B = \Gamma U \in \mathbb{R}^{K \times K}$, $b = \Gamma U \eta \in \mathbb{R}^K$, $V = \diag(b)^{-1} B$, then 
\begin{align*}
    x_j &= B'a_j^* / (b'a_j^*) \\
    &= V' \diag(b)a_j^* / (b'a_j^*) \\
    &= V' \cdot (b\circ a_j^*)/\|b\circ a_j^*\|_1 
\end{align*}
Let $w_j = (b\circ a_j^*)/\|b\circ a_j^*\|_1$, and let $v_1, \ldots, v_K \in \mathbb{R}^K$ be the row vectors of $V$. Then  $x_i=\sum_{k=1}^Kw_i(k)v_k$. Furthermore, since $\eta'U'D_0' e_j>0$ for all $i$, we obtain 
\begin{align*}
    0 &< \eta'U'D_0' e_j \\
    &= \eta'U'\Gamma'A' e_j \\
    &= \eta'U'\Gamma'(\theta_j a_j^*) \\
    &= \theta_j b'a_j^*.
\end{align*}
Since $\theta_j = \|a_j\|_1 > 0$, we have $b'\theta > 0$ for all $i$, which indicates that $b$ is a positive vector.  

Therefore, because $x_i=\sum_{k=1}^Kw_i(k)v_k$ and $b$ is a positive vector, the SSVH algorithm in step 2 provides perfectly estimations $\widehat{V} = V$ and $\hat{b} = b$. Hence in step 3, $\widehat{B}=\diag(\hat{b})\widehat{V} = \diag(b)V = B$, and consequently, 
\begin{align*}
    \widehat{A} &= DU\widehat{B}'(\widehat{B}\widehat{B}')^{-1} \\
    &= D_0UB'(BB')^{-1} \\
    &= A\Gamma UB'(BB')^{-1} \\
    &= ABB'(BB')^{-1} \\ 
    &= A.
\end{align*}
Therefore, if we plug $(K, D_0, A^*_S)$ into Algorithm~\ref{alg:TM}, then $\widehat{A}=A$. This concludes the proof of Theorem~\ref{thm:semi-topic}. 
\qed

\section{A Review of Existing VH Algorithms} \label{supp:VH}
MSCORE allows one to plug in any vertex hunting (VH) algorithm. VH is the task of estimating $v_1, v_2, \ldots,v_K$ from $\hat{r}_1, \hat{r}_2, \ldots, \hat{r}_n$. The main paper mentions several VH algorithms. Their details are given here. 

\subsection{Successive Projection} \label{subsec:appendix-sp}
SP \cite{araujo2001successive} is a VH algorithm that requires the existence of pure nodes and utilizes a geometric property of simplexes: The maximum $\ell^2$-norm of vectors in a simplex is always attained at a vertex.  Inspired by this observation, SP runs as follows: 
\begin{itemize}
	\item Initialization: $j_1=\mathrm{argmax}_{1\leq i\leq n}\|\hat{r}_i\|$ and $\hat{v}_1=\hat{r}_{j_1}$. 
	\item For $k=2,3,\ldots,K$, let ${\cal P}_{k-1}$ be the projection matrix to the linear space spanned by $\hat{v}_1, \hat{v}_2,\ldots,\hat{v}_{K-1}$. Let $j_k=\mathrm{argmax}_{1\leq i\leq n}\|(I_K-{\cal P}_{k-1})\hat{r}_i\|$ and $\hat{v}_k=\hat{r}_{j_k}$.
\end{itemize}
The SP algorithm is one of the few VH algorithms that have theoretical guarantees. Under mild conditions,  \citep{gillis2013fast} provides the following error bound for SP: 
\begin{lemma} \label{chp3:thm:SP}
	Let $\lambda_k(v)$ denotes the $k$th singular value of $V$ and denote $\gamma(V) = \max_{1\le k \le K} \{\|v_k\|\}$. Suppose $K$-dimensional vectors $X_1,  \ldots X_n$ satisfies model \eqref{model1}. Assume that for each $k \in \{1, 2, \ldots, K\}$ there exist $i \in \{1, 2, \ldots, n\}$ such that $\pi_i = e_k$. Suppose
	$\max_{1 \le i\le n}\|\epsilon_i\| \le \frac{\lambda_K(V)}{1 + 80\gamma(V)^2 / \lambda_K^2(V)} \min\{\frac{1}{2\sqrt{K - 1}}, \frac{1}{4}\}$. Let $\hat{v}_1^{(SP)}, ..., \hat{v}_K^{(SP)}$ are the output of the SP algorithm. Then, there exists permutation $\tau: \{1, 2, \ldots K\} \to  \{1, 2, \ldots K\}$, such that
	\begin{equation}\label{chp3:equ:SP}
		\max_{1 \le k \le K} \{\|\hat{v}_{\tau(k)}^{(SP)} - v_k\|\} \le \left[1 + 80\frac{\gamma^2(V)}{\lambda_K^2(V)} \right]\max_{1 \le i\le n}\|\epsilon_i\|.
	\end{equation} 
\end{lemma}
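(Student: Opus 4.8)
The plan is to reconstruct the robustness analysis of the successive projection algorithm in the style of \citep{gillis2013fast}, since the statement is precisely their guarantee translated into our notation with $d=K$. Throughout, write $\epsilon := \max_{1\le i\le n}\|\epsilon_i\|$ and recall the pure-node hypothesis: for each $k$ there is an index with $\pi_i=e_k$, so $r_i=v_k$ and every true vertex appears among the data up to perturbation $\epsilon$. Two structural facts drive the argument: (i) for any point $r=\sum_k c_k v_k$ of the simplex, $\|r\|\le \gamma(V)=\max_k\|v_k\|$, with the maximum attained only at a vertex; and (ii) the full-rank hypothesis $\lambda_K(V)>0$ quantifies how sharply the norm decreases as one moves away from a vertex, via the separation of the $v_k$.

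First I would establish a single-step selection lemma for the initial iteration. Since $\|X_i\|\le \gamma(V)+\epsilon$ for all $i$ while the pure node attaining $\gamma(V)$ gives some $\|X_i\|\ge \gamma(V)-\epsilon$, the maximizer $j_1$ obeys $\|r_{j_1}\|\ge \gamma(V)-2\epsilon$. I would then convert this near-maximality of the norm into near-concentration of the coordinate vector $\pi_{j_1}$, using the variance identity $\sum_k \pi_{j_1}(k)\|v_k\|^2-\|r_{j_1}\|^2=\tfrac12\sum_{k,\ell}\pi_{j_1}(k)\pi_{j_1}(\ell)\|v_k-v_\ell\|^2$, whose right-hand side is bounded below by a multiple of $\lambda_K(V)^2$ times the spread of $\pi_{j_1}$. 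A small norm-deficit $\gamma(V)^2-\|r_{j_1}\|^2=O(\gamma(V)\epsilon)$ thus forces $\pi_{j_1}$ close to some $e_{k_1}$, so that $\hat v_1=X_{j_1}$ lies within a distance controlled by $\gamma(V)\epsilon/\lambda_K(V)$ of $v_{k_1}$. This is exactly where $\lambda_K(V)$ enters the denominator.

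Next I would run an induction over the deflation steps. After selecting $\hat v_1,\dots,\hat v_{k-1}$ with projection $\mathcal P_{k-1}$ onto their span, I would show that the deflated points $(I-\mathcal P_{k-1})X_i$ again obey a separable model whose vertices are the projections of the not-yet-selected $v_\ell$, whose smallest nonzero singular value stays comparable to $\lambda_K(V)$, and whose effective noise is $\epsilon$ inflated by a term proportional to the earlier estimation errors times $\gamma(V)/\lambda_K(V)$. Applying the single-step lemma to the deflated system produces a fresh near-vertex, distinct from those already chosen (so the selection order induces the permutation $\tau$), and I would track the accumulated error recursively. Across the at most $K-1$ deflations the geometric factors combine into the stated amplification $1+80\,\gamma^2(V)/\lambda_K^2(V)$, and the noise condition with the $\min\{1/(2\sqrt{K-1}),1/4\}$ factor is precisely the threshold that keeps the recursion contractive, guaranteeing an unselected vertex remains the deflated maximizer at every step.

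The main obstacle will be the error-propagation bookkeeping in this induction: each deflation perturbs the projector $\mathcal P_{k-1}$ by an amount governed by the previous $\|\hat v_j-v_{\tau(j)}\|$, and one must bound the perturbation of an orthogonal projector in terms of the perturbation of its spanning vectors — a $\sin\Theta$-type estimate with $\lambda_K(V)$ in the denominator — while showing the errors do not cascade. Closing this recursion under the stated signal-to-noise condition, rather than any of the preliminary norm estimates, is the technical heart of the argument and is exactly the content we invoke from \citep{gillis2013fast}.
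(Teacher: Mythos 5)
The paper gives no proof of this lemma at all---it is imported verbatim (up to notation) from Gillis and Vavasis (2013), their robustness theorem for the successive projection algorithm with $\kappa(V)=\gamma(V)/\lambda_K(V)$, and the surrounding text simply cites that reference. Your sketch is a faithful outline of the Gillis--Vavasis argument (strong convexity of $\|\cdot\|^2$ forces the norm maximizer to be a near-vertex, deflation preserves the conditioning $\lambda_K(V)$, and the error recursion closes under the stated noise threshold), and since you, like the paper, ultimately defer the quantitative closing step (the constant $80$ and the $\min\{1/(2\sqrt{K-1}),1/4\}$ factor) to that reference, the two treatments coincide in substance.
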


Comparing SP with our vertex imputation algorithm, we can see that 
\begin{enumerate}
	\item SP requires the existence of ``pure point": for each $k \in \{1, 2, \ldots, K\}$ there exist $i \in \{1, 2, \ldots, n\}$ such that $\pi_i = e_k$. In general, this can be hard to be satisfied. 
	For instance, in recommendation system or social network analysis, $\pi_i$ represents the membership weight of each individual in different communities, and it is very likely that each individual belongs to at least two communities. Furthermore, even if there exists several individuals being in only one community, it can be difficult to make every community to have a pure point, especially when the community numbers $K$ is large. For example, in the field of statistics, almost all people in the bioinformatics community more or less studies some Bayesian statistics topics, and it is very hard to find a pure person in  bioinformatics  without any weight on other communities.   
	\item The theoretical results of SP in \cite{gillis2013fast} requires $\lambda_K(V) \ge \max_{1 \le i\le n}\|\epsilon_i\|$, which is of order $O_P(\sigma_n\sqrt{K\log(n)})$ when $\epsilon_i$ is   sub-Gaussian with scale parameter $\sigma_n$ as in Assumption~\ref{assum:sigma} by concentration inequality. See Appendix~\ref{subsec:appendix-maxep} for a more detailed derivation. On the other hand, we only need $\lambda_{K - 1}(V)$ to be lower bounded. This assumption of SP is relaxed in \cite{jin2024improved} where similar to our method, one only needs bounds on $\lambda_{K - 1}(V)$ to guarantee theoretical results for SP.
	\item As mentioned previously, by concentration inequality for sub-Gaussian random variable (Appendix~\ref{subsec:appendix-maxep}), $\max_{1 \le i\le n}\|\epsilon_i\| = O_P(\sigma_n\sqrt{K\log(n)})$ When $\epsilon_i$ is   sub-Gaussian with scale parameter $\sigma_n$ as in Assumption~\ref{assum:sigma}.
	Define $\widehat{V}_{\tau(k)}^{(SP)} = [\hat{v}_{\tau(1)}^{(SP)}, \ldots, \hat{v}_{\tau(K)}^{(SP)}]$. By the results in Lemma~\ref{chp3:thm:SP}, we have 
	\begin{align}\label{equ:appendix-SP}
		\|\widehat{V}_{\tau(k)}^{(SP)} - V\| &\le  \|\widehat{V}_{\tau(k)}^{(SP)} - V\|_F \notag \\
		&\le \sqrt{K}\max_{1 \le k \le K} \{\|\hat{v}_{\tau(k)}^{(SP)} - v_k\|\} \notag \\
		& \le \sqrt{K}\left[1 + 80\frac{\gamma^2(V)}{\lambda_K^2(V)} \right]\max_{1 \le i\le n}\|\epsilon_i\| \notag \\
        & \le \sqrt{K}\left[1 + 80\frac{\gamma^2(V)}{\lambda_K^2(V)} \right] O_P(\sigma_n\sqrt{K\log(n)})  \notag \\
		&\le O_P(\sigma_n K\sqrt{\log(n)}).
	\end{align}
	Comparing the above result with \eqref{equ:main}, one can see that the extra information from $\Pi_S$ render a error rate improvement of $1 / \sqrt{N}$. When $N = 30$, this leads to a 80\% decrease of the error. One the other hand, the error rate of SP grows slower than ours as $K$ increases. However, when $K$ is large, the pure label assumption of SP can be very hard to satisfied: it is very likely that for some $k \in \{1, 2, \ldots, K\}$, there do not exists $i \in \{1, 2, \ldots, n\}$ such that $\pi_i = e_k$. Without this pure label condition, SP algorithm will lose validity, while our method still remains to work.
\end{enumerate}

We further compare SP with our method in Figure~\ref{fig:compare} using a simulated network from the DCMM with $(n, K)=(3000, 3)$. The model parameters are generated as follows: $\theta_i$'s are i.i.d. from $\mathrm{Uniform}(\sqrt{0.95},1)$, $\pi_i$'s are i.i.d. from a disk inside the standard probability simplex, the diagonals of $P$ are $1$, and its off-diagonals are i.i.d. from $\mathrm{Uniform}(0,0.2)$. We randomly pick $2\%$ nodes to label. In Figure~\ref{fig:compare}, the green triangle is the true simplex, and the grey dots are the node embeddings (since $K=3$, each $\hat{r}_i$ is a 2-dimensional vector). 
In SP, we ignore the given labels and solve an unsupervised problem.
Since there is no pure node in this example,   
the estimated simplex by VH (in blue) is significantly different from the ground truth. In VI, we leverage on locations of those labeled nodes in the point cloud to ``infer" where pure nodes should be located. The estimated simplex by our vertex imputation algorithm (in red) is reasonably close to the ground truth. 
\begin{figure}[tb!]
	\centering
	\includegraphics[width=.4\textwidth, height=.2\textwidth]{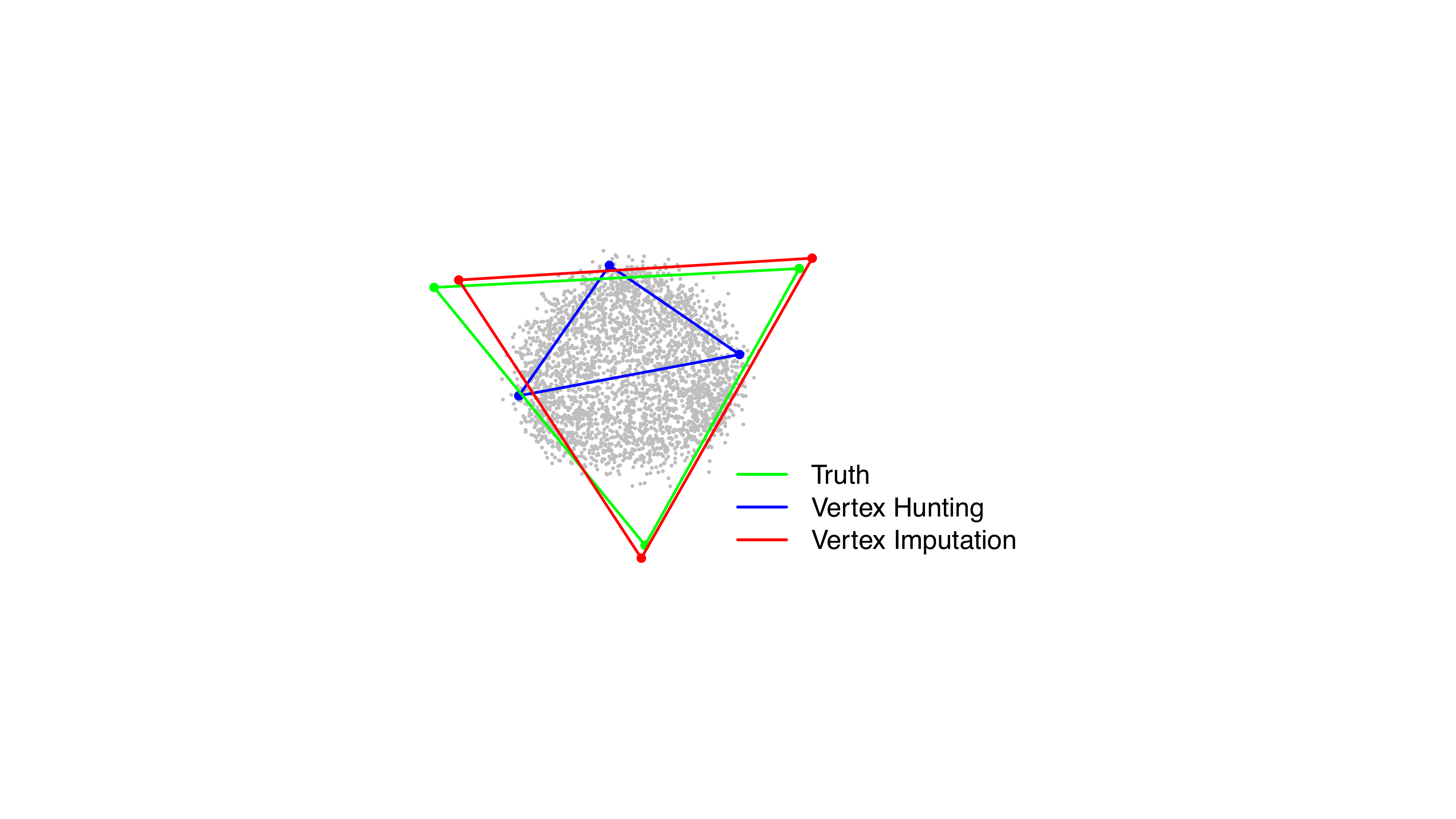}
	\caption{A comparison of VH and VI. The network is simulated from a DCMM with $(n,K)=(3000, 3)$, where 2\% of nodes are labeled. Each grey point is a node embedding $\hat{r}_i\in\mathbb{R}^2$. The VH approach relies on existence of pure nodes (which is not satisfied in this example). The VI approach bypasses the pure node assumption (by leveraging on labeled nodes) and yields much better accuracy. 
	} \label{fig:compare}
\end{figure}


In all, compared to SP, our method relaxes the pure points assumption and attains considerable improvement in the estimation error with only a few labeled points.

%

\subsection{Sketched Vertex Search} \label{subsec:appendix-svs}
It is widely observed that SP is sensitive to outliers. Several robust VH algorithms were introduced in Section 3.4 of \cite{ke2023special}. One of them is SVS, which was originally proposed in \cite{jin2023mixed}. SVS has a tuning integer $L\geq K$ and runs as follows:
\begin{itemize}
	\item Denoise by k-means: Run the k-means algorithm on $\{\hat{r}_i\}_{i=1}^n$ assuming $L$ clusters. Let $\hat{y}_1, \hat{y}_2,\ldots,\hat{y}_L$ be the cluster centers output by k-means. 
	\item Exhaustive vertex search: For every $K$-tuples $1\leq \ell_1<\ldots<\ell_K\leq L$, consider the simplex formed by $\hat{y}_{\ell_1},\ldots,\hat{y}_{\ell_K}$, and let $d(\ell_1,\ldots,\ell_K)$ be the maximum Euclidean distance from any $\hat{y}_\ell$ outside the simpelx to this simplex (such a distance can be computed by a standard quadratic programming). Choose $(\ell^*_1,\ldots,\ell^*_K)$ to minimize $d(\ell_1,\ldots,\ell_K)$. Let $\hat{v}_k=\hat{y}_{\ell^*_k}$, $1\leq k\leq K$.  
\end{itemize}

The first step in SVS uses k-means to ``denoise" the original point cloud, where each cluster center $\hat{y}_\ell$ is an average of nearby $\hat{r}_i$'s, thus less noisy. In the second step of SVS, each output vertex is one of the previous cluster centers. In our empirical study in Section~\ref{sec:empirical}, we set $L$ as $10\times K$. The results suggest that MSCORE-SVS outperforms MSCORE-SP. 

However, since SVS exhaustively searches for $K$ out of $L$ indices, it is computationally expensive for large $(K, L)$. To tackle this issue, \cite{ke2023special} proposed several variants of SVS. In one variant, they replaced the exhaustive search in Step 2 by running successive projection on $\hat{y}_1,\hat{y}_2,\ldots,\hat{y}_L$; in another variant, they replaced Step 1 by using k-nearest-neighbor to denoise. These two modifications were also combined to create other variants of SVS \citep{ke2023special}.

\subsection{Minimum Volume Transformation} \label{subsec:appendix-mvt}
\begin{figure}[t]
	\centering
	\includegraphics[width = .4 \textwidth]{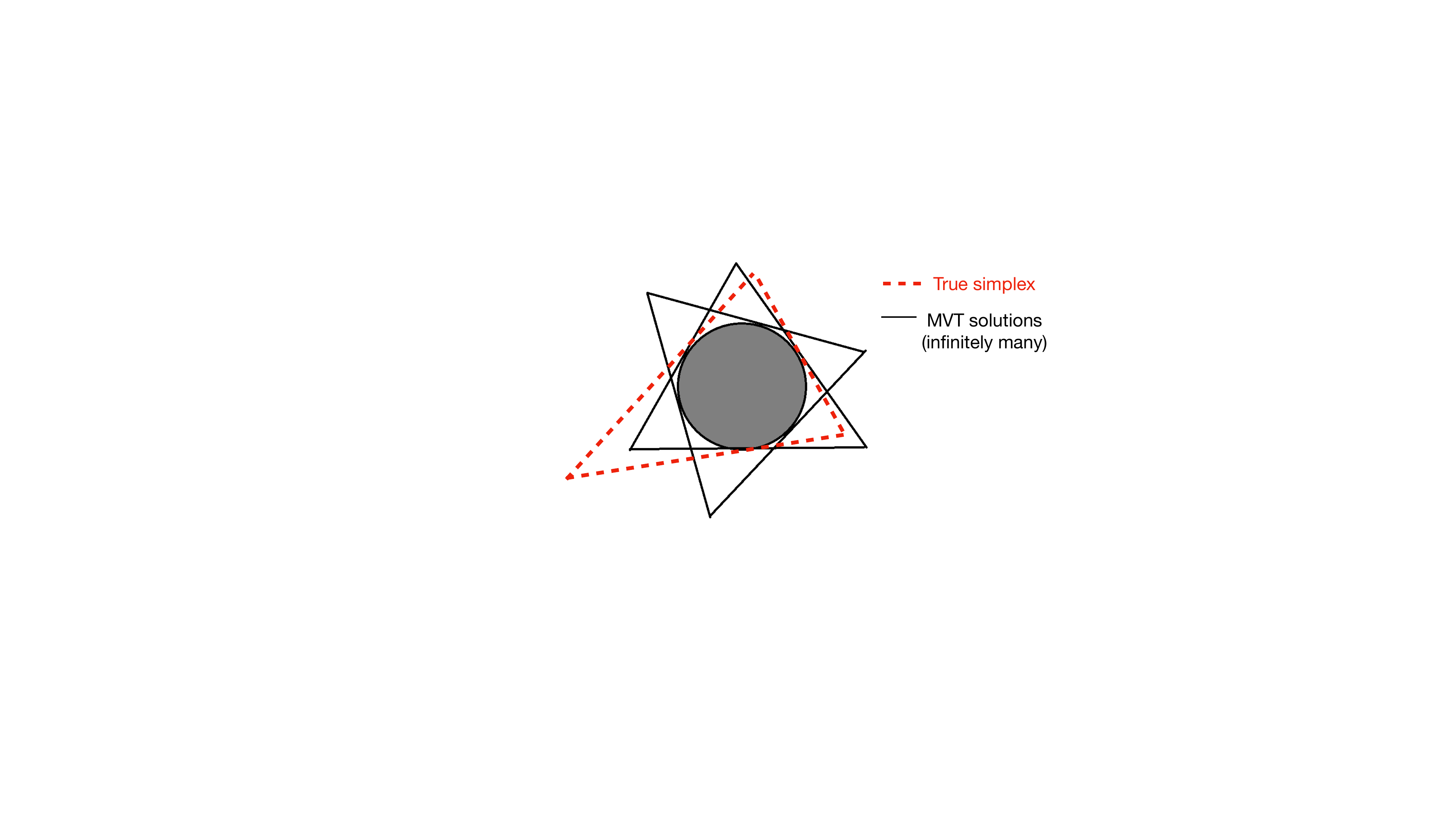}
	\caption{Why MVT does not work for the example in Figure~\ref{fig:compare}. In this example, $r_i$'s are contained in a disk inside the true simplex. There are infinitely many MVT solutions (by rotation), but none of them is the true simplex.}
	\label{fig:mvt}
\end{figure}
MVT \citep{craig1994minimum, hendrix2013minimum, li2015minimum} is another popular VH algorithm. For any $v_1,v_2,\ldots,v_K$, denote by ${\cal S}(v_1,\ldots,v_K)$ the simplex whose vertices are $v_1,\ldots,v_K$. MVT solves the following optimization:
\begin{align*}
	\min_{v_1,\ldots,v_K} &\quad  \mathrm{Volumn}({\cal S}(v_1,\ldots,v_K)),\cr
	\text{subject to:} &\quad \hat{r}_i\in {\cal S}(v_1,\ldots,v_K), \quad 1\leq i\leq n. 
\end{align*}
In other words, MVT finds the smallest-volume simplex that contains the whole point cloud. 

For both SP and SVS, the estimated simplex is always in the convex hull of $\hat{r}_1,\hat{r}_2,\ldots,\hat{r}_n$. In contrast, SVS can output a simplex with vertices far outside the point cloud. This yields a question: If we consider MSCORE-MVT, can we relax the requirement of existence of pure nodes? The answer is NO. We take the example in Figure~\ref{fig:compare}. For simplicity, we consider the noiseless case where $\hat{r}_i=r_i$. In this example, all $r_i$'s are contained in a 2-dimensional disk as shown in Figure~\ref{fig:mvt}. 
The minimum-volume simplex containing this disk is a circumscribed equilateral triangle. However, such triangles are non unique. In fact, we can rotate the triangle arbitrarily. The optimization in MVT has infinitely many solutions, and there is no guarantee that the true simplex is one of these solutions.

\subsection{Error Rate of \texorpdfstring{$\max_{1 \le i\le n}\|\epsilon_i\|$}{max\_\{1≤i≤n\} ∥εᵢ∥}} \label{subsec:appendix-maxep}
For any $i$ in $1, \ldots, n$, according to Assumption~\ref{assum:sigma}, the entries of $\epsilon_i$ are independent,  , and sub-Gaussian with scale parameter $\sigma_n$. Hence,  by Hanson--Wright inequality \citep[Theorem 1.1]{2013arXiv1306.2872R}, there exists an absolute constant $c_8$ not depending on any constants in our paper, such that for any $t > 0$,
\begin{align*}
    \mathbb{P}(\epsilon_i'\epsilon_i - \mathbb{E}[\epsilon_i'\epsilon_i] > t) &\le 2\exp\left(-c_8 \min\left\{\frac{t^2}{\sigma_n^4\|I_K\|_F^2}, \frac{t}{\sigma_n^2\|I_K\|}\right\}\right) \\
    &= 2\exp\left(-c_8 \min\left\{\frac{t^2}{\sigma_n^4K^2}, \frac{t}{\sigma_n^2}\right\}\right). 
\end{align*}

Choose $t = 2\max\{c_8^{-1}, c_8^{-0.5}\}\sigma_n^2K \log(n)$, we have
\begin{align*}
    \mathbb{P}\left(\epsilon_i'\epsilon_i - \mathbb{E}[\epsilon_i'\epsilon_i] > 2\max\{c_8^{-1}, c_8^{-0.5}\}\sigma_n^2K\log(n)\right) \le \frac{2}{n^2}.
\end{align*}

Since the entries of $\epsilon_i$ are independent and sub-Gaussian with scale parameter $\sigma_n$, $\mathbb{E}[\epsilon_i'\epsilon_i] = \sum_{k=1}^K \mathbb{E}[\epsilon_i^2] \le  \sigma_n^2K$. Therefore,
\begin{align*}
\mathbb{P}(\|\epsilon_i\|^2 >   \sigma_n^2 K + 2\max\{c_8^{-1}, c_8^{-0.5}\}\sigma_n^2K\log(n)) &\le
    \mathbb{P}(\|\epsilon_i\|^2 >  \mathbb{E}[\epsilon_i'\epsilon_i] + 2\max\{c_8^{-1}, c_8^{-0.5}\}\sigma_n^2K\log(n))\\
    &\le \mathbb{P}\left(\epsilon_i'\epsilon_i - \mathbb{E}[\epsilon_i'\epsilon_i] > 2\max\{c_8^{-1}, c_8^{-0.5}\}\sigma_n^2K\log(n)\right) \\
    &\le \frac{2}{n^2},
\end{align*}

Consequently,
\begin{align*}
&\mathbb{P}(\max_{1 \le i\le n}\|\epsilon_i\|^2 >   \sigma_n^2K  + 2\max\{c_8^{-1}, c_8^{-0.5}\}\sigma_n^2K\log(n)) \\
& \qquad \le \sum_{i=1}^{n} \mathbb{P}(\|\epsilon_i\|^2 > \sigma_n^2K  + 2\max\{c_8^{-1}, c_8^{-0.5}\}\sigma_n^2K\log(n))\\
    &\qquad \le \sum_{i=1}^{n}\frac{2}{n^2} \\
    &\qquad = \frac{2}{n}
\end{align*}

To conclude, with probability at least, $1 - \frac{2}{n}$,
$$\max_{1 \le i\le n}\|\epsilon_i\|^2 \le \sigma_n^2K  + 2\max\{c_8^{-1}, c_8^{-0.5}\}\sigma_n^2K\log(n) = O\left(\sigma_n^2K\log(n)\right).$$
Therefore,
$$\max_{1 \le i\le n}\|\epsilon_i\| = O_P\left(\sigma_n\sqrt{K\log(n)}\right). $$ \qed

\section{Limitations and Future Extensions} \label{sec:future}
In practice, the prior information on the distorted weight vector $\pi_i$ may have various forms. Instead of observing its value, one may acquire the knowledge that $\pi_i$ lies in a certain region. For instance, in the network mixed-membership estimation problem, it may be known that a certain individual has more inclination being in community 1 compared to community 2, so we may have constraints like $\pi_i(1) \ge \pi_i(2)$. It is curious how to leverage this sort of inequality information on vertex hunting problems. We think that a properly designed optimization based on our key observation Theorem~\ref{thm:idea} and the vertex imputation  algorithm may cast light on a novel solution to this problem, and we believe that this is a thrilling future direction for our work. 

\section{Statement of Social Impact} \label{sec:impact}
In our main paper, we develop a novel semi-supervised vertex hunting algorithm leveraging the structural equation of the distortion matrix $b$ discussed in Theorem~\ref{thm:idea}. As illustrated in Section~\ref{sec:Examples}, our method can be applied to both network and text analysis, providing a positive social impact. Our algorithm's usage in mixed-membership estimation can increase the accuracy of recommendation systems built upon social networks, rendering more suitable content for each individuals according to their community label; the application of our method in semi-supervised topic model can help the sub-area classification and archive process of publications, leading to a faster literature search for researchers. One the other hand, we are aware that recommendation system is also a source of idea filtration and polarization, so our algorithm may have potential negative social impact by boosting recommendation system. Still, we believe that this impact is minor, because it is always easy to convert accurate estimations to less accurate one: people can add noise to the recommendation system to prevent self-enhancement of certain communities and information cocoons.


\end{document}